\DeclareMathOperator{\operatorClassNP}{{\sf NP}}
\newcommand{\classNP}{\ensuremath{\operatorClassNP}}
\DeclareMathOperator{\operatorClassFPT}{{\sf FPT}\xspace}
\newcommand{\classFPT}{\ensuremath{\operatorClassFPT}\xspace}
\newcommand{\fp}{fully-polynomial}
\newcommand{\pd}{polynomial-delay}
\newcommand{\Oh}{\mathcal{O}}
\newtheorem{theorem}{Theorem}
\newtheorem{lemma}{Lemma}
\newtheorem{claim}{Claim}[theorem]
\newtheorem{corollary}{Corollary}
\newtheorem{definition}{Definition}
\newtheorem{observation}[theorem]{Observation}
\newtheorem{proposition}{Proposition}
\newtheorem{step}{Step}[theorem]
\newtheorem{reduction}[step]{Reduction Rule}
\newtheorem{branch}[step]{Branching Rule}
\newcommand{\pname}{\textsc}
\newcommand{\ProblemFormat}[1]{\pname{#1}}
\newcommand{\ProblemIndex}[1]{\index{problem!\ProblemFormat{#1}}}
\newcommand{\ProblemName}[1]{\ProblemFormat{#1}\ProblemIndex{#1}{}\xspace}
\newcommand{\probMC}{\ProblemName{Enum MC}}
\newcommand{\probMinMC}{\ProblemName{Enum Minimal MC}}
\newcommand{\probMaxMC}{\ProblemName{Enum Maximal MC}}
\newcommand{\NP}{{\ensuremath{\rm{NP}}}}
\newcommand{\coNP}{{\ensuremath{\rm{coNP}}}}
 \DeclareMathOperator{\poly}{poly}
\newcommand{\compass}{\NP\subseteq \coNP/\poly}
\newcommand{\ncompass}{\NP\nsubseteq \coNP/\poly}
\newcommand{\vc}{\tau}
\newcommand{\tw}{{\sf tw}}
\newcommand{\cw}{{\sf cw}}
\newcommand{\nd}{{\sf nd}}
\newcommand{\mw}{{\sf mw}}
\newcommand{\cp}{\theta}
\newcommand{\fen}{{\sf fn}}
\newcommand{\sol}{{\sf Sol}}
\newcommand{\mc}{\#_{\rm{mc}}}
\newcommand{\match}{\#_{\rm match}}
\newcommand{\tc}{{\sf tc}}
\newlength{\RoundedBoxWidth}
\newsavebox{\GrayRoundedBox}
\newenvironment{GrayBox}[1]%
   {\setlength{\RoundedBoxWidth}{.93\textwidth}
    \def\boxheading{#1}
    \begin{lrbox}{\GrayRoundedBox}
       \begin{minipage}{\RoundedBoxWidth}}%
   {   \end{minipage}
    \end{lrbox}
    \begin{center}
    \begin{tikzpicture}%
       \node(Text)[draw=black!20,fill=white,rounded corners,%
             inner sep=2ex,text width=\RoundedBoxWidth]%
             {\usebox{\GrayRoundedBox}};
        \coordinate(x) at (current bounding box.north west);
        \node [draw=white,rectangle,inner sep=3pt,anchor=north west,fill=white] 
        at ($(x)+(6pt,.75em)$) {\boxheading};
    \end{tikzpicture}
    \end{center}}     
\newenvironment{defproblemx}[2][]{\noindent\ignorespaces%
                                \FrameSep=6pt%
                                \parindent=0pt%
                \vspace*{-1.5em}
                \ifthenelse{\isempty{#1}}{%
                  \begin{GrayBox}{\textsc{#2}}%
                }{%
                  \begin{GrayBox}{\textsc{#2}  parameterized by~{#1}}%
                }
                \begin{tabular*}{\textwidth}{@{\hspace{.1em}} >{\itshape} p{1.8cm} p{0.8\textwidth} @{}}%
            }{
                \end{tabular*}%
                \end{GrayBox}%
                \ignorespacesafterend
            }  
\title{Refined Notions of Parameterized~Enumeration~Kernels with Applications to Matching~Cut~Enumeration\thanks{We dedicate this paper to the memory of our coauthor and friend Dieter Kratsch who recently passed away. Without Dieter, this paper would have never been written.}~\thanks{The research has been supported by the Research Council of Norway via the project ``MULTIVAL'' (grant no. 263317).}} 
\author{Petr A. Golovach\thanks{Department of Informatics, University of Bergen, Bergen, Norway}
\and
Christian Komusiewicz\thanks{Philipps-Universit\"at Marburg, Marburg, Germany}
\and
Dieter Kratsch\thanks{LGIMP, Universit\'e de Lorraine, Metz, France}
\and
Van Bang Le\thanks{Universit\"at Rostock, Institut f\"ur Informatik, Rostock, Germany}
}
\date{}
\begin{document}

\maketitle

\begin{abstract}
  An enumeration kernel as defined by Creignou et al.~[Theory Comput. Syst. 2017] for a
  parameterized enumeration problem consists of an algorithm that transforms each instance
  into one whose size is bounded by the parameter plus a solution-lifting algorithm that
  efficiently enumerates all solutions from the set of the solutions of the kernel. We
  propose to consider two new versions of enumeration kernels by asking that the solutions
  of the original instance can be enumerated in polynomial time or with polynomial delay
  from the kernel solutions. Using the NP-hard \textsc{Matching Cut} problem parameterized
  by structural parameters such as the vertex cover number or the cyclomatic number of the
  input graph, we show that the new enumeration kernels present a useful notion
  of data reduction for enumeration problems which allows to compactly represent the set
  of feasible solutions.
  
  \medskip
  \noindent{\bf Keywords:} enumeration problems, polynomial delay, output-sensitive algorithms, kernelization, structural parameterizations, matching cuts.  
  
\end{abstract}

\section{Introduction}\label{sec:itro} 
The enumeration of all feasible solutions of a computational problem is a fundamental task
in computer science. 
For the majority
of enumeration problems, the number of feasible solutions can be exponential in the input size in the
worst-case. The running time of enumeration algorithms is thus measured not
only in terms of the input size~$n$ but also in terms of the output size. The two most-widely
used definitions of efficient algorithms are polynomial output-sensitive algorithms where
the running time is polynomial in terms of input and output size and polynomial-delay
algorithms, where the algorithm spends only a polynomial running time between the output
of consecutive solutions. Since in some enumeration problems, even the problem of deciding the existence of one solution is not solvable in polynomial time, it was proposed to allow \classFPT{} algorithms that have running time or delay~$f(k)\cdot n^{O(1)}$ for some problem-specific parameter~$k$~\cite{CreignouMMSV17,Damaschke06,Damaschke10,Fernau02,Meeks19}.
Naturally, \classFPT{}-enumeration algorithms are based on extensions of standard techniques in \classFPT{} algorithms such as bounded-depth search trees~\cite{Damaschke06,Damaschke10,Fernau02} or color coding~\cite{Meeks19}.

An important technique for obtaining \classFPT{} algorithms for decision problems is \emph{kernelization}~\cite{CyganFKLMPPS15,FominLSZ19,Kratsch14}, where the idea is to shrink the input instance in polynomial
time to an equivalent instance whose size depends only on the parameter~$k$. In fact, a
parameterized problem admits an \classFPT{} algorithm if and only if it admits a
kernelization. It seems particularly intriguing to use kernelization for enumeration problems as a small kernel can be seen as a compact representation of the set of feasible solutions.
The first notion of kernelization in the context of enumeration problems
were the \emph{full kernels} defined by Damaschke~\cite{Damaschke06}. Informally, a full kernel
for an instance of an enumeration problem is a subinstance that contains all minimal
solutions of size at most~$k$. This definition is somewhat restrictive since it is tied to
subset minimization problems parameterized by the solution size
parameter~$k$. Nevertheless, full kernels have been obtained for some 
problems~\cite{Damaschke10,FominSV13,KU14,SS08}.

To overcome the restrictions of full kernels, Creignou et
al.~\cite{CreignouMMSV17} proposed \emph{enumeration kernels}. Informally, an enumeration
kernel for a parameterized enumeration problem is an algorithm that replaces the input
instance by one whose size is bounded by the parameter and which has the property that the
solutions of the original instance can be computed by listing the solutions of the kernel
and using an efficient \emph{solution-lifting algorithm} that outputs for each solution of the
kernel a set of solutions of the original instance. In the definition of Creignou et
al.~\cite{CreignouMMSV17}, the solution-lifting algorithm may be an \classFPT{}-delay
algorithm, that is, an algorithm with~$f(k)\cdot n^{O(1)}$ delay where~$n$ is the
overall input size.
We find that this time bound is too weak, because it essentially implies that every enumeration problem that can be solved with \classFPT{}-delay admits an enumeration kernel of \emph{constant} size. Essentially, this means that the solution-lifting algorithm is so powerful that it can enumerate all solutions while ignoring the kernel. Motivated by this observation and the view of kernels as compact representations of the solution set, we modify the original definition of enumeration kernels~\cite{CreignouMMSV17}. 

\paragraph{Our results.}  We present two new notions of
efficient enumeration kernels by replacing the demand for \classFPT{}-delay algorithms by
a demand for polynomial-time enumeration algorithms or polynomial-delay algorithms,
respectively. We call the two resulting notions of enumeration kernelization
\emph{fully-polynomial enumeration kernels} and \emph{polynomial-delay enumeration
  kernels}. Our paper aims at showing that these two new definitions present a sweet spot
between the notion of full kernels, which is too strict for some applications, and
enumeration kernels, which are too lenient in some sense. We first show that the two new
definitions capture the class of efficiently enumerable problems in the sense that a
problem has a fully-polynomial (a polynomial-delay) enumeration kernel if and only if it
has an \classFPT{}-enumeration algorithm (an \classFPT{}-delay enumeration
algorithm). Moreover, the kernels have constant size if and only if the problems have
polynomial-time (polynomial-delay) enumeration algorithms. Thus, the new definitions
correspond to the case of problem kernels for decision problems, which are in \classFPT{} if and
only if they have kernels and which can be solved in polynomial time if and only if they have 
kernels of constant size (see, e.g.~\cite[Chapter~2]{CyganFKLMPPS15} or \cite[Chapter~1]{FominLSZ19}).

  We then apply both types of kernelizations to the enumeration of matching
  cuts. A matching cut of a graph $G$ is the set of edges $M=E(A,B)$ for  a partition $\{A,B\}$ of $V(G)$ forming a matching.  
  We
  investigate the problems of enumerating all minimal,  all maximal, or all matching cuts
  of a graph.  We refer to these problems as \probMinMC, \probMaxMC, and~\probMC,
  respectively. These matching cut problems constitute a very suitable study case for enumeration kernels, since it is NP-hard to decide whether a graph has a matching cut~\cite{Chvatal84} and therefore, they do not admit polynomial output-sensitive algorithms. We consider all three problems with respect to structural
  parameterizations such as the vertex cover number, the modular width, or the cyclomatic
  number of the input graph. 
  The choice of these parameters is motivated by the fact that neither problem admits an enumeration kernel of polynomial size for the more general structural parameterizations by the treewidth or cliquewidth  up to some natural complexity assumptions (see Proposition~\ref{prop:no-kern-tw}).
  Table~\ref{tab:overview} summarizes the results.
 
  To discuss some of our results and their implication for enumeration kernels in general
  more precisely, consider \probMC, \probMinMC, and \probMaxMC
  parameterized by the vertex cover number. We show that \probMinMC{} admits a
  fully-polynomial enumeration kernel of polynomial size. As it can be seen that the problem has no full kernel, in particular, 
  this implies that
  there are natural enumeration problems with a fully-polynomial enumeration kernel that
  do not admit a full kernel (not even one of super-polynomial
  size). Then, we show that \probMC and \probMaxMC admit polynomial-delay enumeration
  kernels but have no fully-polynomial enumeration kernels. Thus, there are natural enumeration problems with 
  polynomial-delay enumeration kernels that do not admit fully-polynomial enumeration
  kernels (not even one of super-polynomial size).

\begin{table}[ht]
  \caption{An overview of our results. Herein, \lq kernel\rq\ means \fp{} enumeration kernel, \lq delay-kernel\rq\ means \pd{} enumeration kernel and \lq bi\rq\ means bijective enumeration kernel (a slight generalization of full kernels), a ($\star$) means that the lower bound assumes $\ncompass$, \lq ?\rq\ means open status. 
  The cyclomatic number is also known as the feedback edge number.}\label{tab:overview}
   \begin{center}
{ \small  
\begin{tabular}{l|l|l|l}
Parameter $k$\,  & \probMC      & \probMinMC      & \probMaxMC \\
\hline 
treewidth \&  & No poly-size delay-  & No poly-size delay- & No poly-size delay-  \\
cliquewidth & kernel ($\star$) (Prop.~\ref{prop:no-kern-tw})      & kernel ($\star$) (Prop.~\ref{prop:no-kern-tw})     & kernel ($\star$) (Prop.~\ref{prop:no-kern-tw}) \\
\hline 
vertex cover \& & size-$\Oh(k^2)$ delay-kernel   & size-$\Oh(k^2)$ kernel & size-$\Oh(k^2)$ delay-kernel\\
twin-cover    & (Theorems~\ref{thm:vc-kern} \& \ref{thm:tc-kern}) & (Theorems~\ref{thm:vc-kern} \& \ref{thm:tc-kern}) &(Theorems~\ref{thm:vc-kern} \& \ref{thm:tc-kern})\\
number        & No kernel                     &  & No kernel\\
\hline
neighborhood & size-$\Oh(k)$ delay-kernel    & size-$\Oh(k)$ kernel        & size-$\Oh(k)$ delay-kernel \\
diversity    & (Theorem~\ref{thm:nd-kern})& (Theorem~\ref{thm:nd-kern}) & (Theorem~\ref{thm:nd-kern})\\
               & No kernel                &                             & No kernel \\
\hline
modular  & ?     & $\Oh(k)$-kernel             & ?  \\
width   &  & (Theorem~\ref{thm:mw-kern}) & \\    
\hline
cyclomatic & size-$\Oh(k)$ delay-kernel        & size-$\Oh(k)$ delay-kernel & ?\\
number     & (Theorem~\ref{thm:fen-kern-min}) & (Theorem~\ref{thm:fen-kern-min}) & \\
           & No kernel                   &   &\\
\hline
clique partition & size-$\Oh(k^3)$ bi kernel  & size-$\Oh(k^3)$ bi kernel &size-$\Oh(k^3)$ bi kernel \\
number           &(Theorem~\ref{thm:cp}) & (Theorem~\ref{thm:cp}) & (Theorem~\ref{thm:cp})
\end{tabular}
}
\end{center}
\end{table}

We also prove a tight upper bound $F(n + 1)-1$ for the maximum number of matching cuts of an $n$-vertex graph, where $F(n)$ is the $n$-th Fibonacci number and show that all matching cuts can be enumerated in $\Oh^*(F(n))=\Oh^*(1.6181^n)$ time (Theorem~\ref{thm:general}). 

\paragraph{Related work.}
  The current-best exact decision algorithm for \textsc{Matching Cut}, the problem of deciding whether a given graph~$G$ has a matching cut, has a running time of~$O(1.328^n)$ where~$n$ is the number of vertices in~$G$~\cite{KomusiewiczKL20}. Faster exact algorithms can be obtained for the case when the minimum degree is large~\cite{HsiehLLP19}. \textsc{Matching Cut} has \classFPT{}-algorithms for the maximum cut size~$k$~\cite{GomesS19}, the vertex cover number of~$G$~\cite{KratschL16}, and weaker parameters such as the twin-cover number~\cite{AravindKK17} or the cluster vertex deletion number~\cite{KomusiewiczKL20}. 

  For an overview of enumeration 
  algorithms, refer to the survey of
  Wasa~\cite{Wasa16}. A broader discussion of parameterized enumeration is given by
  Meier~\cite{Meier20}.  A different extension of enumeration kernels are \emph{advice
    enumeration kernels}~\cite{BentertNN19}. In these kernels, the solution-lifting
  algorithm does not need the whole input 
    but only a possibly smaller advice. A further loosely connected extension of standard kernelization are \emph{lossy kernels} which are used for optimization problems~\cite{LokshtanovPRS17}; the common thread is that both definitions use a solution-lifting algorithm for recovering solutions of the original instance.

\paragraph*{Graph notation.} All graphs considered in this paper are finite undirected graphs without loops or multiple edges.   
We follow the standard graph-theoretic notation and terminology and refer to the book of Diestel~\cite{Diestel12} for basic definitions. For each of the graph problems considered in this paper, we let $n=|V(G)|$ and $m=|E(G)|$ denote the number of vertices and edges, respectively, of the input graph $G$ if it does not create confusion. 
For a graph~$G$ and a subset $X\subseteq V(G)$ of vertices, we write $G[X]$ to denote the subgraph of $G$ induced by~$X$.
For a set of vertices $X$, $G-X$ denotes the graph obtained by deleting the vertices of~$X$, that is, $G-X=G[V(G)\setminus X]$; for a vertex $v$, we write $G-v$ instead of $G-\{v\}$.
Similarly, for a set of edges $A$ (an edge $e$, respectively), $G-A$ ($G-e$, respectively) denotes the graph obtained by the deletion of the edges of $A$ (the edge $e$, respectively). 
For a vertex $v$, we denote by $N_G(v)$ the \emph{(open) neighborhood} of $v$, i.e., the set of vertices that are adjacent to $v$ in $G$.
We use $N_G[v]$ to denote the \emph{closed neighborhood} $N_G(v)\cup \{v\}$ of~$v$.
For $X\subseteq V(G)$, $N_G[X]=\bigcup_{v\in X}N_G[v]$ and $N_G(X)=N_G[X]\setminus X$. 
For disjoint sets of vertices $A$ and $B$ of a graph $G$, $E_G(A,B)=\{uv\mid u\in A,~v\in B\}$. We may omit subscripts in the above notation if it does not create confusion. We use $P_n$, $C_n$, and $K_n$ to denote the $n$-vertex path, cycle, and complete graph, respectively.  We write $G+H$ to denote the \emph{disjoint union} of $G$ and $H$, and we use $kG$ to denote the disjoint union of $k$ copies of $G$. 

In a graph $G$, a \emph{cut} is a partition $\{A,B\}$ of $V(G)$, and we say that $E_G(A,B)$ is an \emph{edge cut}. 
 A \emph{matching} is an edge set in which no two of the edges have a common end-vertex; note that we allow empty matchings. 
A \emph{matching cut} is a (possibly empty) edge set being an
edge cut and a matching. We underline that by our definition, a matching cut is a set of edges, as sometimes in the literature  (see, e.g.,~\cite{Chvatal84,Graham70}) a matching cut is defined as a partition $\{A,B\}$ of the vertex set such that $E(A,B)$ is a matching. 
While the two variants of the definitions  are equivalent, say when the decision variant of the matching cut problem is considered, this 
is not the case in enumeration and counting when we deal with disconnected graphs. 
For example, the empty graph on $n$ vertices has $2^{n-1}-1$ partitions 
$\{A,B\}$ which all correspond to exactly one matching cut in the sense of our definition, namely $M=\emptyset$.
A matching cut~$M$ of $G$ is \emph{(inclusion) minimal} (\emph{maximal}, respectively) if $G$ has no matching cut $M'\subset M$ ($M'\supset M$, respectively). Notice that a disconnected graph has exactly one minimal matching cut which is the empty set. 

\paragraph*{Organization of the paper.} In Section~\ref{sec:basics}, we introduce and discuss basic notions of enumeration kernelization. In Section~\ref{sec:upper}, we show upper and lower bound for the maximum number of (minimal) matching cuts. In Section~\ref{sec:vc}, we give enumeration kernels for the matching cut problems parameterized by the vertex cover number. Further, in Section~\ref{sec:nd}, we consider the parameterization by the neighborhood diversity and modular width. We proceed in Section~\ref{sec:fen}, by investigating the parameterization by the cyclomatic number (feedback edge number). In Section~\ref{sec:cp}, we give bijective kernels for the parameterization by the clique partition number.  We conclude in Section~\ref{sec:concl}, by outlining some further directions of research in enumeration kernelization.

\section{Parameterized Enumeration and Enumeration Kernels}\label{sec:basics}
We  use the framework for parameterized enumeration proposed by Creignou et al.~\cite{CreignouMMSV17}.
An \emph{enumeration problem} (over a finite alphabet $\Sigma$) is a tuple $\Pi=(L,\sol)$ such that 
\begin{itemize}
\item[(i)] $L\subseteq \Sigma^*$ is a decidable language,
\item[(ii)] $\sol\colon \Sigma^*\rightarrow \mathcal{P}(\Sigma^*)$ is a computable function such that for every $x\in \Sigma^*$, $\sol(x)$ is a finite set and $\sol(x)\neq\emptyset$ if and only if $x\in L$.
\end{itemize}
Here, $\mathcal{P}(A)$ is used to denote the powerset of a set~$A$. A string $x\in \Sigma^*$ is an \emph{instance},  and $\sol(x)$ is the set of solutions to instance $x$. 
A \emph{parameterized enumeration problem} is defined as a triple $\Pi=(L,\sol,\kappa)$ such that $(L,\sol)$ satisfy (i) and (ii) of the above definition, and 
\begin{itemize}
\item[(iii)] $\kappa\colon \Sigma^*\rightarrow \mathbb{N}$ is a parameterization.
\end{itemize}
We say that  $k=\kappa(x)$ is a \emph{parameter}. We define the parameterization as a function of an instance but it is standard to assume that the value of $\kappa(x)$ is either simply given in $x$ or can be computed in polynomial time from $x$. We follow this convention throughout the paper.

An \emph{enumeration algorithm} $\mathcal{A}$ for a parameterized enumeration problem $\Pi$ is a deterministic algorithm that for every instance~$x$, outputs exactly the elements of $\sol(x)$ without duplicates, and terminates after a finite number of steps on every instance.  
The algorithm $\mathcal{A}$ is an \emph{\classFPT enumeration} algorithm if it outputs all solutions in at most $f(\kappa(x))p(|x|)$ steps for a computable function $f(\cdot)$ that depends only on the parameter and a polynomial $p(\cdot)$. 

We also consider output-sensitive enumerations, and for this, we define delays.
Let $\mathcal{A}$ be an enumeration algorithm for $\Pi$. For $x\in L$ and $1\leq i\leq |\sol(x)|$, the \emph{$i$-th delay} of $\mathcal{A}$ is the time between outputting the $i$-th and $(i+1)$-th solutions in $\sol(x)$.  The $0$-th delay is the \emph{precalculation} time which is the time from the start of the computation until the output of the fist solution, and the $|\sol(x)|$-th delay is the \emph{postcalculation} time which is the time after the last output and the termination of $\mathcal{A}$ (if $\sol(x)=\emptyset$, then the precalculation and postcalculation times are the same).
It is said that $\mathcal{A}$ is a \emph{polynomial-delay} algorithm, if all the delays are upper bounded by $p(|x|)$ for a polynomial~$p(\cdot)$. For  a parameterized enumeration  problem $\Pi$, $\mathcal{A}$ is an \emph{\classFPT-delay algorithm}, if the delays are at most $f(\kappa(x))p(|x|)$, where $f(\cdot)$ is a computable function and $p(\cdot)$ is a polynomial. Notice that every \classFPT enumeration algorithm $\mathcal{A}$ is also an \classFPT delay algorithm.  

The key definition for us is the generalization of the standard notion of a kernel in Parameterized Complexity (see, e.g,~\cite{FominLSZ19}) for enumeration problems.
\begin{definition}\label{def:enum-kernel}
Let $\Pi=(L,\sol,\kappa)$ be a parameterized enumeration problem. A \emph{fully-polynomial enumeration kernel(ization)}  for $\Pi$ is a pair of algorithms $\mathcal{A}$ and $\mathcal{A}'$ with the following properties:
\begin{itemize}
\item[\em (i)] For every instance $x$ of $\Pi$, $\mathcal{A}$ computes in time polynomial in $|x|+\kappa(x)$  an instance $y$ of~$\Pi$ such that 
$|y|+\kappa(y)\leq f(\kappa(x))$ for a computable function $f(\cdot)$.
\item[\em (ii)] For every $s\in\sol(y)$, $\mathcal{A}'$ computes in time polynomial in $|x|+|y|+\kappa(x)+\kappa(y)$  a nonempty set of solutions $S_s\subseteq \sol(x)$ such that $\{S_s\mid s\in \sol(y)\}$ is a partition of $\sol(x)$.
\end{itemize}
\end{definition}
Notice that by (ii), $x\in L$ if and only if $y\in L$.

We say that $\mathcal{A}$ is a \emph{kernelization} algorithm and $\mathcal{A}'$ is a \emph{solution-lifting} algorithm. Informally, a solution-lifting algorithm takes as its input a solution for a ``small'' instance constructed by the kernelization algorithm and, having an access to the original input instance,  
outputs polynomially many solutions for the original instance, and  by going over all the solutions to  the small instance, we can generate all the solutions of the original instance without repetitions. 
We say that an enumeration kernel is \emph{bijective} if $\mathcal{A}'$ produces a unique solution to $x$, that is, it establishes a bijection between $\sol(y)$ and $\sol(x)$, that is, the compressed instance essentially has the same solutions as the input instance. In particular, full kernels~\cite{Damaschke06} are the special case of bijective kernels where~$\mathcal{A}'$ is the identity.   
As it is standard, $f(\cdot)$ is the \emph{size} of the kernel, and the kernel has \emph{polynomial size} if $f(\cdot)$ is a polynomial. 

We define   
\emph{polynomial-delay enumeration kernel(ization)}
in a similar way. The only difference is that (ii) is replaced by the condition
\begin{itemize} 
\item[(ii$^*$)] For every $s\in\sol(y)$, $\mathcal{A}'$ computes with delay polynomial in $|x|+|y|+\kappa(x)+\kappa(y)$  a set of solutions $S_s\subseteq \sol(x)$ such that $\{S_s\mid s\in \sol(y)\}$ is a partition of $\sol(x)$.
\end{itemize}
It is straightforward to make the following observation.

\begin{observation}\label{obs:input-special}
Every bijective enumeration kernel is a \fp{} enumeration kernel; every  \fp{} enumeration kernel is a \pd{} enumeration 
kernel.
\end{observation}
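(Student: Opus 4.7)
The plan is to verify the two implications separately, each by straightforward unpacking of the relevant definitions, so that no new construction is required; both kernelization algorithm $\mathcal{A}$ and solution-lifting algorithm $\mathcal{A}'$ are reused verbatim, and only the running-time/delay clause is re-read.

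For the first implication, I would start from a bijective enumeration kernel $(\mathcal{A}, \mathcal{A}')$ and note that, by the paper's definition, this is an enumeration kernel in which, for every $s \in \sol(y)$, the solution-lifting algorithm $\mathcal{A}'$ returns a singleton $S_s = \{t_s\} \subseteq \sol(x)$ and the map $s \mapsto t_s$ is a bijection between $\sol(y)$ and $\sol(x)$. In particular, $\{S_s \mid s \in \sol(y)\}$ is already a partition of $\sol(x)$, so clause (ii) of Definition~\ref{def:enum-kernel} is satisfied with the same time bound. Clause (i) is literally the same in both definitions. Hence $(\mathcal{A}, \mathcal{A}')$ is a \fp{} enumeration kernel.

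For the second implication, I would start from a \fp{} enumeration kernel $(\mathcal{A}, \mathcal{A}')$ and keep the same pair of algorithms, verifying condition (ii$^\ast$). The point is that if $\mathcal{A}'$ enumerates all elements of $S_s$ in \emph{total} time at most $p(|x|+|y|+\kappa(x)+\kappa(y))$ for some polynomial $p$, then (after the standard cosmetic modification of writing each solution to the output tape as soon as it is produced, rather than buffering) the precalculation time, every inter-output delay, and the postcalculation time are each bounded by the same $p$. The partition property $\{S_s\mid s\in \sol(y)\}$ of $\sol(x)$ is demanded in identical form by (ii) and (ii$^\ast$), so no further work is needed.

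I do not expect any genuine obstacle: the only thing to be a little careful about is that both conditions (ii) and (ii$^\ast$) measure the polynomial in the same quantities $|x|+|y|+\kappa(x)+\kappa(y)$, so the bound transfers without any blow-up, and that in the bijective case clause (ii) really is implied rather than only assumed to hold for the underlying \classFPT{}-delay notion of Creignou et al. This is why the authors classify the statement as an observation.
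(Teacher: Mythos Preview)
Your proposal is correct and matches the paper's intent; the paper does not actually give a proof but merely prefaces the observation with ``It is straightforward to make the following observation,'' and your unpacking of the definitions is exactly the straightforward verification the authors have in mind. Note in particular that the bijective case is literally defined in the paper as a special case of Definition~\ref{def:enum-kernel} (the \fp{} kernel), so the first implication is immediate, and your delay argument for the second implication is the obvious one.
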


Notice also that our definition of 
polynomial-delay enumeration 
kernel is different from the definition given by Creignou et al.~\cite{CreignouMMSV17}.
In their definition, Creignou et al.~\cite{CreignouMMSV17} require that the solution-lifting algorithm $\mathcal{A}'$ should list all the solutions in $S_s$ with \classFPT delay for the parameter $\kappa(x)$.  We believe that this condition is too weak.  In particular, with this requirement, every parameterized enumeration problem, that 
has an \classFPT enumeration algorithm $\mathcal{A}^*$ and such that the existence of at least one solution can be verified in polynomial time, has a trivial kernel of constant size. 
The kernelization algorithm can output any instance satisfying (i) and then we can use $\mathcal{A}^*$ as a solution-lifting algorithm that essentially ignores the output of the kernelization algorithm.
Note that for enumeration problems, we typically face the situation where the existence of at least one solution is not an issue.
We argue that our definitions are natural by showing  the following theorem.

\begin{theorem}\label{obs:natural}
A parameterized enumeration problem~$\Pi$ has an \classFPT enumeration algorithm (an \classFPT delay algorithm) if and only if $\Pi$ admits a fully-polynomial enumeration kernel (polynomial-delay enumeration kernel). Moreover, $\Pi$ can be solved in polynomial time (with polynomial delay) if and only if $\Pi$ admits a fully-polynomial enumeration kernel 
(a~polynomial-delay enumeration
kernel) of constant size. 
\end{theorem}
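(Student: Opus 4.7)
The plan is to establish all four equivalences by a unified template that adapts the classical equivalence between \classFPT{} and kernelization for decision problems to the enumeration setting. The kernel-to-algorithm direction is a composition argument, the algorithm-to-kernel direction is a simulation-based dichotomy, and the constant-size variants fall out as immediate corollaries.

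For the kernel-to-algorithm direction, suppose $(\mathcal{A},\mathcal{A}')$ is a fully-polynomial (respectively, polynomial-delay) enumeration kernel. I would compose three steps: apply $\mathcal{A}$ to $x$ to obtain $y$ with $|y|+\kappa(y)\le f(\kappa(x))$ in polynomial time; enumerate $\sol(y)$ by brute force, which costs only a function of $\kappa(x)$ since $\sol$ is computable and $|y|$ is parameter-bounded; and invoke $\mathcal{A}'(s)$ for each $s\in\sol(y)$. Each call of $\mathcal{A}'$ is polynomial in $|x|+|y|+\kappa(x)+\kappa(y)$. Summing over $\sol(y)$ produces an \classFPT{} enumeration algorithm in the fully-polynomial case; in the polynomial-delay case, the brute-force precomputation is charged to the $0$-th delay while all subsequent delays remain polynomial, giving an \classFPT{}-delay algorithm. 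A constant-size kernel collapses the brute-force step to $O(1)$ and delivers a polynomial-time (respectively, polynomial-delay) algorithm.

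For the algorithm-to-kernel direction, let $\mathcal{A}$ be an \classFPT{} enumeration (respectively, \classFPT{}-delay) algorithm with bound $f(\kappa(x))|x|^c$ on the total running time (respectively, on each delay). The kernelization performs the standard dichotomy: if $|x|\le f(\kappa(x))$ it outputs $y:=x$ with $\mathcal{A}'$ the identity; otherwise $|x|>f(\kappa(x))$ forces $\mathcal{A}$ on $x$ to run in polynomial time (respectively, with polynomial delay), so the kernelization outputs a fixed constant-size canonical yes-instance $y^\ast$ of $\Pi$ (or a canonical no-instance if $\sol(x)=\emptyset$) and defines $\mathcal{A}'$ on the unique solution of $y^\ast$ to re-invoke $\mathcal{A}$ on $x$ and return $\sol(x)$. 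The dichotomy is implemented by simulating $\mathcal{A}$ on $x$ for $|x|^{c+1}$ steps: in the total-time case one takes the second branch whenever the simulation terminates, in the delay case whenever the first output or termination is observed within the budget; in either case the alternative yields $f(\kappa(x))|x|^c>|x|^{c+1}$ and hence $|x|<f(\kappa(x))$. When $\mathcal{A}$ is already polynomial-time (respectively, polynomial-delay), only the constant-size branch is ever triggered, producing a constant-size kernel.

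The main obstacle I anticipate is verifying the delay bound on $\mathcal{A}'$ in the polynomial-delay converse: a short $0$-th delay alone does not a priori certify that every subsequent delay of $\mathcal{A}$ on $x$ is polynomial. The cleanest resolution is the standard mild assumption that $f$ is polynomial-time computable, so that the kernelization can explicitly test $|x|>f(\kappa(x))$; in the large-$|x|$ branch the bound $f(\kappa(x))|x|^c$ then collapses to a polynomial in $|x|+\kappa(x)$, matching the delay requirement on $\mathcal{A}'$. A secondary technicality is the assumed availability of canonical constant-size yes- and no-instances of $\Pi$, which is without loss of generality since problems admitting only one type of instance are trivial and can be handled directly.
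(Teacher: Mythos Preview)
Your overall strategy mirrors the paper's, but there is a genuine gap in the algorithm-to-kernel direction. You output a fixed constant-size yes-instance $y^\ast$ and have $\mathcal{A}'$ dump all of $\sol(x)$ on ``the unique solution of $y^\ast$''. The definition of a fully-polynomial (or polynomial-delay) enumeration kernel, however, requires that the sets $S_s$ over \emph{all} $s\in\sol(y)$ form a partition of $\sol(x)$ with each $S_s$ nonempty. Nothing guarantees that $\Pi$ possesses a yes-instance with exactly one solution: if every yes-instance of $\Pi$ has, say, at least five solutions while $|\sol(x)|=1$, no choice of $y^\ast$ can satisfy the partition condition. The paper confronts this explicitly. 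In the large-$|x|$ branch it first computes a minimum-size yes-instance $y$ and tests whether $|\sol(y)|\le|\sol(x)|$; if not, it searches (in constant time, since the search space is bounded by the size of a ``trivial yes-instance'', which depends only on $\Pi$) for an instance $z$ of minimum size with $|\sol(z)|\le|\sol(x)|$. The solution-lifting algorithm then orders $\sol(z)$ and, for the $i$-th solution of $z$, outputs the $i$-th solution of $x$ produced by the enumeration algorithm, with the last solution of $z$ absorbing all remaining solutions of $x$. This is precisely the missing ingredient in your construction.

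A secondary point: your ``mild assumption'' that $f$ is polynomial-time computable is stronger than needed and the paper avoids it. Instead, it runs the algorithm computing $f(k)$ for at most $n=|x|$ steps; if this fails to terminate, then $n\le g(k)$ where $g(k)$ bounds the running time of that algorithm, so $x$ itself is returned as the kernel with size bounded by $h(k)=\max\{f(k),g(k)\}$. This keeps the theorem at the full generality of merely computable $f$, and it also cleanly resolves the delay issue you flagged: once $f(k)$ has actually been computed and $f(k)<|x|$ is certified, every delay of $\mathcal{A}$ on $x$ is bounded by $f(k)|x|^c<|x|^{c+1}$, not just the $0$-th one.
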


\begin{proof}
The proof of the first claim is similar to the standard arguments for showing the equivalence between fixed-parameter tractability and the existence of a kernel  (see, e.g.~\cite[Chapter~2]{CyganFKLMPPS15} or \cite[Chapter~1]{FominLSZ19}). However dealing with enumeration problems requires some specific arguments.
Let $\Pi=(L,\sol,\kappa)$
be a parameterized enumeration problem.

In the forward direction, the claim is trivial. Recall that $L$ is decidable and $\sol(\cdot)$ is a computable function by the definition. If $\Pi$ admits a fully-polynomial enumeration kernel (a polynomial-delay enumeration kernel
respectively), then we apply an arbitrary enumeration algorithm, which is known to exist since $\sol(\cdot)$ is computable, to the instance $y$ produced by the kernelization algorithm. Then, for each $s\in \sol(y)$, use the solution-lifting algorithm to list the solutions to the input instance.

For the opposite direction, assume that $\Pi$ can be solved in $f(\kappa(x))\cdot |x|^c$ time (with $f(\kappa(x))\cdot |x|^c$ delay, respectively) for an instance~$x$, where $f(\cdot)$ is a computable function and $c$ is a positive constant. Since $f(\cdot)$ is computable, we assume that we have an algorithm $\mathcal{F}$ computing $f(k)$ in $g(k)$ time. We define $h(k)=\max\{f(k),g(k)\}$. 

We say that an instance $x$ of $\Pi$ is a \emph{trivial no-instance} if $x$ is an instance of minimum size with $\sol(x)=\emptyset$.
We call $x$ a \emph{minimum yes-instance} if $x$ is an instance of minimum size that has a solution. Notice that if $\Pi$ has instances without solutions, then the size of a trivial no-instance is a constant that depends on $\Pi$ only and such an instance can be computed in constant time. Similarly, if the problem has instances with solutions, then the size of a minimum yes-instance is constant and such an instance can be computed in constant time. 
We say that $x$ is a \emph{trivial yes-instance} if $x$ is an instance with minimum size of $\sol(x)$ that, subject to the first condition, has minimum size. Clearly, the size of a trivial yes-instance is a constant that depends only on~$\Pi$. However, we may be unable to compute a trivial yes-instance. 

Let $x$ be an instance of $\Pi$ and $k=\kappa(x)$. We run the algorithm $\mathcal{F}$ to compute $f(k)$ for at most $n=|x|$ steps. If the algorithm failed to compute $f(k)$ in $n$ steps, we conclude that $g(k)\geq n$.
In this case, the kernelization algorithm outputs $x$. Then the solution-lifting algorithm just trivially outputs its input solutions.  Notice that $|x|\leq g(k)\leq h(k)$ in this case.
Assume from now that $\mathcal{F}$ computed $f(k)$ in at most $n$ steps. 

If $|x|\le f(k)$, then the kernelization algorithm outputs the original instance $x$, and the solution-lifting algorithm trivially outputs its input solutions. Note that $|x|\leq f(k)\leq h(k)$.

Finally, we suppose that $f(k)<|x|$. Observe that the enumeration algorithm runs in $|x|^{c+1}$ time (with $|x|^{c+1}$ delay, respectively) in this case, that is, the running time is polynomial. We use the enumeration algorithm to verify whether $x$ has a solution. For this, notice that a polynomial-delay algorithm can be used to solve the decision problem; we just run it  until it outputs a first solution (or reports that there are no solutions). 
If $x$ has no solution, then $\Pi$ has a trivial no-instance and the kernelization algorithm computes and outputs it. If $x$ has a solution, then the kernelization algorithm computes a minimum yes-instance $y$ in constant time. We use the enumeration algorithm to check whether $|\sol(y)|\leq  |\sol(x)|$. If this holds, then we set $z=y$. Otherwise, if $|\sol(x)|<|\sol(y)|$, we find an instance $z$ of minimum size such that $|\sol(z)|\leq |\sol(x)|$. Notice that this can be done in constant time, because the size of $z$ is upper bounded by the size of a trivial yes-instance. Then we list the solutions of $z$ in constant time and order them. For the $i$-th solution of $z$, the solution-lifting algorithm outputs the $i$-th solution of $x$ produced by the enumeration algorithm, and for the last solution of $z$, the solution-lifting algorithm further runs the enumeration algorithm to output the remaining solutions. Since $|\sol(z)|\leq|\sol(x)|$, the solution-lifting algorithm outputs a nonempty set of solutions for $x$ for every solution of $z$. 
 
 It is easy to see that we obtain a fully-polynomial enumeration kernel of size $\Oh(h(\kappa(x))$ (a~polynomial-delay enumeration
kernel, respectively).

For the second claim, the arguments are the same. 
If a problem admits a 
fully-polynomial (a polynomial-delay) enumeration kernel of constant size, then the solutions of the original instance can be listed in polynomial time (or with polynomial delay, respectively) by the solution-lifting algorithm called for the constant number of the solutions of the kernel. Conversely, if a problem can be solved in polynomial time (with polynomial delay, respectively), we can apply the above arguments assuming that $f(k)$ (and, therefore, $g(k)$) is a constant.
\end{proof}

In our paper, we consider structural parameterizations of \probMinMC, \probMaxMC, and \probMC by several graph parameters, and the majority of these parameterizations are stronger than the parameterization either by the  \emph{treewidth} or the \emph{cliquewidth} of the input graph.  Defining the treewidth (denoted by $\tw(G)$) and cliquewidth (denoted by~$\cw(G)$) goes beyond 
of the scope of the current paper and we refer to~\cite{Courcelle09}  (see also, e.g.,~\cite{CyganFKLMPPS15}). 
By the celebrated result of Bodlaender~\cite{Bodlaender96} (see also~\cite{CyganFKLMPPS15}), it is  \classFPT in $t$ to decide whether $\tw(G)\leq t$ and to construct the corresponding  tree-decomposition. No such  algorithm is known for cliquewidth. However, for algorithmic purposes, it is usually sufficient to use the approximation algorithm of Oum and Seymour~\cite{OumS06} (see also~\cite{Oum08,CyganFKLMPPS15}).  
Observe that the property that a set of edges $M$ of a graph  $G$ is a matching cut of $G$ can be expressed in  \emph{monadic second-order logic} (MSOL);
we refer to~\cite{Courcelle09,CyganFKLMPPS15} for the definition of MSOL on graphs. 
Then the  matching cuts (the minimal or maximal matching cuts) of a graph of treewidth at most $t$ can be enumerated with   \classFPT delay with respect to the parameter $t$ by the celebrated meta theorem of Courcelle~\cite{Courcelle09}. The same holds for the weaker parameterization by the cliquewidth  of the input graph, because we can use MSOL formulas without quantifications over (sets of) edges: For a graph $G$, we pick a vertex in each connected component of $G$ and label it. Let $R$ be the set of labeled vertices. Then the enumeration of nonempty matching cuts is equivalent to the enumeration of all partitions $\{A,B\}$ of $V(G)$ such that (i) $R\subseteq A$ and (ii) $E(A,B)$ is a matching. Notice that condition (ii) can be written as follows: for every $u_1,u_2\in A$ and $v_1,v_2\in B$, if $u_1$ is adjacent to $v_1$ and $u_2$ is adjacent to $v_2$, then either $u_1=u_2$ and $v_1=v_2$ or $u_1\neq u_2$ and $v_1\neq v_2$. Since the empty matching cut can be listed separately if it exists,
we obtain that we can use MSOL formulations of the enumeration problems, where only quantifications over vertices and sets of vertices are used. Then the result of Courcelle~\cite{Courcelle09} implies  that \probMinMC, \probMaxMC, and \probMC  can be solved with  \classFPT delay when parameterized by the cliquewidth of the input graph. 
We summarize these observations in the following proposition.
\begin{proposition}\label{prop:tw-cw}
\probMC, \probMinMC, and \probMaxMC on  graphs of treewidth (cliquewidth) at most $t$ can be solved with \classFPT delay when parameterized by $t$.  
\end{proposition}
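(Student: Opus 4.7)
The plan is to express each of the three enumeration problems in monadic second-order logic and invoke Courcelle's meta-theorem on enumeration with \classFPT{} delay~\cite{Courcelle09}.

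For the treewidth parameterization, I will write an $\mathrm{MSO}_2$ formula $\varphi(M)$ saying that $M\subseteq E(G)$ is a matching cut: (a) no two edges of $M$ share a common endpoint, and (b) there exists a vertex set $A\subseteq V(G)$ such that $M$ equals exactly the set of edges with one endpoint in $A$ and one endpoint outside $A$. Both conditions are $\mathrm{MSO}_2$-definable because both edge-set and vertex-set quantifiers are available. Minimality and maximality of $M$ are then captured by adding a subformula of the shape $\forall M'\subsetneq M\colon \neg\varphi(M')$ or $\forall M'\supsetneq M\colon \neg\varphi(M')$, respectively. All three resulting formulas remain in $\mathrm{MSO}_2$, so the enumeration version of Courcelle's meta-theorem immediately yields the claimed \classFPT{}-delay algorithms on graphs of bounded treewidth.

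For the cliquewidth parameterization, $\mathrm{MSO}_2$ is not available, so I will follow the reformulation already sketched in the excerpt: the enumerated object becomes the partition $\{A,B\}$ of $V(G)$, which is $\mathrm{MSO}_1$-definable, and the matching property of $E(A,B)$ is captured by the first-order vertex-level condition given in the excerpt (for all $u_1,u_2\in A$ and $v_1,v_2\in B$, if $u_1v_1,u_2v_2\in E(G)$, then $u_1=u_2$ iff $v_1=v_2$). To make the enumeration of partitions match the enumeration of matching cuts on disconnected graphs, I will preprocess $G$ by labeling one representative vertex per connected component (call the label set $R$), restrict to partitions with $R\subseteq A$, and output the empty matching cut separately once if $G$ is disconnected. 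Minimality and maximality of the resulting $M=E(A,B)$ are then expressed in $\mathrm{MSO}_1$ by quantifying over another vertex partition $\{A',B'\}$ with $R\subseteq A'$ and comparing the two induced edge sets using only vertex-level predicates.

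The main obstacle is the cliquewidth case, where one must encode both the matching-cut property and the subset/superset comparison between two matching cuts using only vertex and vertex-set quantifiers while simultaneously maintaining a one-to-one correspondence between enumerated partitions and actual matching cuts. Once the labeled-representative trick together with the separate treatment of the empty matching cut resolves the bijection issue, applying Courcelle's enumeration meta-theorem to the resulting $\mathrm{MSO}_1$ (respectively $\mathrm{MSO}_2$) formulas gives \classFPT{}-delay enumeration of $\probMC$, $\probMinMC$, and $\probMaxMC$ parameterized by cliquewidth (respectively treewidth).
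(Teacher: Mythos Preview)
Your proposal is correct and follows essentially the same approach as the paper: expressing the matching-cut property in MSOL and invoking Courcelle's enumeration meta-theorem, with the cliquewidth case handled via the labeled-representative reformulation and separate treatment of the empty cut exactly as the paper sketches. You flesh out the minimality/maximality encoding in $\mathrm{MSO}_1$ (quantifying over a second partition $\{A',B'\}$ with $R\subseteq A'$ and comparing the induced edge sets via vertex-level predicates) in slightly more detail than the paper does, but this is the natural elaboration of the same idea.
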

This proposition implies that \probMC, \probMinMC and \probMaxMC can be solved with \classFPT delay for all structural parameters whose values can be bounded from below by an increasing function of treewidth or cliquewidth. However, we are mainly interested in 
fully-polynomial or polynomial-delay enumeration kernelization.  
 We conclude this section by pointing out that it is unlikely that \probMinMC, 
\probMaxMC, and \probMC admit polynomial-delay 
enumeration kernels of polynomial size
for the treewidth or cliquewidth parameterizations. It was pointed out by  Komusiewicz, Kratsch, and Le~\cite{KomusiewiczKL20} that the decision version of the matching cut problem (that is, the problem asking whether a given graph~$G$ has a matching cut) does not admit a polynomial kernel when parameterized by the treewidth of the input graph unless $\compass$. By the definition of 
a \pd{} enumeration
kernel, this gives the following statement.
\begin{proposition}\label{prop:no-kern-tw}
 \probMinMC, \probMaxMC and \probMC do not admit polynomial-delay enumeration  
 kernels of polynomial size
  when parameterized by the treewidth (cliquewidth, respectively) of the input graph unless $\compass$. 
 \end{proposition}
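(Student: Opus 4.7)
The plan is to derive this directly from the kernelization lower bound for the decision version of \textsc{Matching Cut} parameterized by treewidth (respectively cliquewidth) of Komusiewicz, Kratsch and Le~\cite{KomusiewiczKL20}. Suppose for contradiction that some variant $\Pi \in \{\probMC, \probMinMC, \probMaxMC\}$ admits a polynomial-delay enumeration kernel of polynomial size parameterized by treewidth, with kernelization algorithm $\mathcal{A}$ and solution-lifting algorithm $\mathcal{A}'$. I would use $\mathcal{A}$ alone to build a polynomial kernel for the decision \textsc{Matching Cut} problem parameterized by treewidth, contradicting~\cite{KomusiewiczKL20} under the assumption $\ncompass$.

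The crucial ingredient is extracted from condition (ii$^*$) of Definition~\ref{def:enum-kernel}: since $\{S_s \mid s\in \sol(y)\}$ is a partition of $\sol(x)$ in which \emph{every} block $S_s$ is nonempty, we have $\sol(x) = \emptyset$ if and only if $\sol(y) = \emptyset$. Thus $\mathcal{A}$ on its own already preserves the yes/no status of the enumeration problem. Given an instance $G$ of decision \textsc{Matching Cut}, we may assume $G$ is connected; then every matching cut of $G$ is nonempty, and the existence of a matching cut is equivalent to the existence of a (nonempty) minimal one and of a maximal one, so $G$ is a yes-instance of the decision problem if and only if $\sol_\Pi(G) \neq \emptyset$ for each of the three enumeration variants.

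I would then apply $\mathcal{A}$ to $G$ to obtain a graph $G'$ with $|V(G')|$ polynomially bounded in $\tw(G)$ and $\sol_\Pi(G') = \emptyset$ iff $\sol_\Pi(G) = \emptyset$. The decision kernel is produced by the following small case distinction: if $G'$ is connected, output $G'$ itself (for which the decision problem is equivalent to $\sol_\Pi(G') \neq \emptyset$); if $G'$ is disconnected, then the empty cut already witnesses $\sol_\Pi(G') \neq \emptyset$, hence $\sol_\Pi(G) \neq \emptyset$ and $G$ has a matching cut, so we output a fixed trivial yes-instance of constant size. Either way the output has size polynomial in $\tw(G)$, giving the desired polynomial kernel for decision \textsc{Matching Cut} and a contradiction. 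The cliquewidth version follows from the same reduction combined with the analogous cliquewidth lower bound, which holds because the hard instances of~\cite{KomusiewiczKL20} simultaneously have bounded cliquewidth.

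The only point requiring real care, and the main obstacle I anticipate, is the correct handling of the empty matching cut: it is a perfectly valid enumeration solution for disconnected graphs but does not correspond to a yes-instance of the decision problem. This asymmetry between the decision and enumeration worlds is precisely what forces the small connectivity case distinction on $G'$; otherwise the argument is a direct unpacking of~(ii$^*$).
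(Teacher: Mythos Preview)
Your proposal is correct and follows exactly the approach the paper has in mind: the paper's entire justification is the single sentence ``By the definition of a polynomial-delay enumeration kernel, this gives the following statement,'' relying on the remark after Definition~\ref{def:enum-kernel} that condition~(ii$^*$) forces $x\in L\Leftrightarrow y\in L$, so the kernelization algorithm of any polynomial-delay enumeration kernel is already a decision kernel. Your connectivity case distinction on~$G'$ is harmless but not needed: if $G'$ is disconnected it is itself a yes-instance of the decision problem (via the empty cut), so simply outputting~$G'$ in all cases already yields the desired polynomial decision kernel.
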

\section{A Tight Upper Bound for the Maximum Number of Matching Cuts}\label{sec:upper}
In this section we provide a tight upper bound for the maximum number of matching cuts of an $n$-vertex graph. We complement this result by giving an exact enumeration algorithm for (minimal, maximal) matching cuts. Finally, we give some lower bounds for the maximum number of minimal and maximal matching cuts, respectively. Throughout this section, we use $\mc(G)$ to denote the number of matching cuts of a graph $G$.

To give the upper bound, we use the classical Fibonacci numbers. For a positive integer~$n$, we denote by $F(n)$ the $n$-th Fibonacci number. Recall that $F(1)=F(2)=1$, and for $n\geq 3$, the Fibonacci numbers satisfy the recurrence $F(n)=F(n-1)+F(n-2)$. Recall also that the $n$-th Fibonacci number can be expressed by the following closed formula:
 \begin{equation*}
F(n)=\frac{1}{\sqrt{5}}\Big(\Big(\frac{1+\sqrt{5}}{2}\Big)^n+\Big(\frac{1-\sqrt{5}}{2}\Big)^n\Big)
\end{equation*}
for every $n\geq 1$. In particular, $F(n)=\Oh(1.6181^n)$. 

The following lemma about the Fibonacci numbers is going to be useful for us.

\begin{lemma}\label{lem:fib}
For all integers $p,q\geq 2$, $F(p)F(q)\leq F(p+q-1)-1$. Moreover, if $p\geq 4$ or $q\geq 4$, then the inequality is strict.
\end{lemma}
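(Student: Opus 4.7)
The plan is to reduce everything to the classical Fibonacci identity
\[
F(m+n) \;=\; F(m)F(n+1) + F(m-1)F(n),
\]
which holds for all integers $m,n \geq 1$. This identity can be established by a routine induction on $n$ (fixing $m$), using $F(n+1)=F(n)+F(n-1)$ to pass from the case $n$ to the case $n+1$; I would either cite it as a standard fact or include a one-line inductive check.

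Applying the identity with $m=p$ and $n=q-1$ (both legal since $p,q\geq 2$ implies $p\geq 1$ and $q-1\geq 1$) yields
\[
F(p+q-1) \;=\; F(p)F(q) + F(p-1)F(q-1).
\]
Since $p-1\geq 1$ and $q-1\geq 1$, both $F(p-1)$ and $F(q-1)$ are positive integers, so $F(p-1)F(q-1) \geq 1$, which rearranges to the desired bound $F(p)F(q) \leq F(p+q-1)-1$.

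For the strict inequality, observe that if $p\geq 4$ then $p-1\geq 3$, hence $F(p-1)\geq F(3)=2$, while $F(q-1)\geq 1$; thus $F(p-1)F(q-1)\geq 2 > 1$. The case $q\geq 4$ is symmetric. In either case the term $F(p-1)F(q-1)$ strictly exceeds $1$, giving $F(p)F(q) < F(p+q-1)-1$.

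The only mildly non-routine step is recalling (or re-deriving) the Fibonacci addition formula; everything else is immediate rearrangement. I expect no real obstacle beyond making sure the index shifts are stated correctly, in particular that $q-1\geq 1$ is what allows the identity to be applied and that $F(p-1),F(q-1)\geq 1$ is what forces the remainder term to be at least $1$.
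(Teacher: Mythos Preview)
Your proof is correct but proceeds differently from the paper. The paper argues by induction: it checks the base cases $p,q\le 3$ directly (noting that equality holds there), and for $p\ge 4$ expands $F(p)F(q)=F(p-1)F(q)+F(p-2)F(q)$, applies the inductive hypothesis to each summand, and collects terms to get $F(p+q-1)-2$. Your approach is more direct: you invoke the Fibonacci addition formula $F(p+q-1)=F(p)F(q)+F(p-1)F(q-1)$ once and then read off both the inequality and the strictness from the size of the remainder term $F(p-1)F(q-1)$. Your route is cleaner and dispenses with the case analysis and induction, at the cost of importing (or re-deriving) the addition formula; the paper's argument is entirely self-contained. Either way the lemma is established.
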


\begin{proof}
The proof is inductive. It is straightforward to verify the inequality for $p,q\leq 3$. Notice that $F(p)F(q)= F(p+q-1)-1$ in these cases. Assume now that $p\geq 4$. Then, by induction,
\begin{align*}
F(p)F(q)=&F(p-1)F(q)+F(p-2)F(q)\\
\leq& F(p+q-2)-1+F(p+q-3)-1=F(p+q-1)-2\\
<&F(p+q-1)-1,
\end{align*}
as required. 
\end{proof}

To see the relations between the number of matching cuts and the Fibonacci number, we make the following observation. 

\begin{observation}\label{obs:path}
For every positive integer $n$, the $n$-vertex path has $F(n+1)-1$ matching cuts. 
\end{observation}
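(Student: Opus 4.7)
The plan is to reduce the counting of matching cuts of $P_n$ to counting matchings in $P_n$, and then invoke a standard Fibonacci recurrence. Let $P_n$ have vertices $v_1, \dots, v_n$ and edges $e_i = v_i v_{i+1}$ for $1 \le i \le n-1$.

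First I would establish a bijection between the matching cuts of $P_n$ and the nonempty matchings of $P_n$. In one direction, every matching cut is, by definition, both an edge cut and a matching, so it is in particular a matching, and it must be nonempty because $P_n$ is connected. In the other direction, given a nonempty matching $M \subseteq E(P_n)$, deleting $M$ from $P_n$ yields subpaths $P^1, P^2, \ldots, P^{|M|+1}$ listed in the order they appear along the original path. Placing the odd-indexed subpaths in $A$ and the even-indexed subpaths in $B$ gives a partition $\{A,B\}$ of $V(P_n)$ for which $E(A,B) = M$, so $M$ is indeed a matching cut. This establishes the bijection.

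Next I would count nonempty matchings of $P_n$. Let $m(n)$ be the total number of matchings of $P_n$ (including the empty one). A case distinction on whether the edge $e_1 = v_1 v_2$ belongs to the matching gives the recurrence $m(n) = m(n-1) + m(n-2)$ for $n \ge 3$: if $e_1 \notin M$, the remaining edges form an arbitrary matching of the subpath on $v_2, \dots, v_n$, contributing $m(n-1)$; if $e_1 \in M$, then $e_2 \notin M$ and the remaining edges form an arbitrary matching of the subpath on $v_3, \dots, v_n$, contributing $m(n-2)$. The base cases $m(1) = 1$ and $m(2) = 2$ yield $m(n) = F(n+1)$ by a straightforward induction matching the Fibonacci recurrence. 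Subtracting the single empty matching gives $F(n+1) - 1$ nonempty matchings, which by the bijection equals the number of matching cuts of $P_n$.

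The only step that needs care is the forward direction of the bijection, namely that every nonempty matching of $P_n$ is actually an edge cut; but this is immediate from the alternating two-coloring of the subpaths induced by removing $M$, so there is no real obstacle. The case $n = 1$ is consistent since $P_1$ has no matching cut and $F(2) - 1 = 0$.
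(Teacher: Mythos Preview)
Your proof is correct and follows essentially the same approach as the paper: both arguments identify the matching cuts of $P_n$ with its nonempty matchings and then count these via a first-edge case split yielding the Fibonacci recurrence. The only cosmetic difference is that you run the recurrence on the total number of matchings $m(n)=m(n-1)+m(n-2)$ and subtract one at the end, whereas the paper runs it directly on the nonempty matchings, obtaining $\mc(P_n)=1+\mc(P_{n-1})+\mc(P_{n-2})$.
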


\begin{proof}
The proof is by induction. Clearly, $\mc(P_1)=0=F(2)-1$ and $\mc(P_2)=1=F(3)-1$. Let $n\geq 3$ and $P=v_1\cdots v_n$. Then $M\subseteq E(P)$ is a matching cut of $P$ if and only if $M$ is a nonempty matching and either $M=\{v_1v_2\}$, or $M$ is a nonempty matching of $P'=v_2\cdots v_n$, or $M=M'\cup\{v_1v_2\}$, where $M'$ is a nonempty matching of $P''=v_3\cdots v_n$.  Therefore,
\begin{equation*}
\mc(P_n)=\mc(P)=1+\mc(P')+\mc(P'')=1+\mc(P_{n-1})+\mc(P_{n-2}).
\end{equation*}
By induction, we conclude that
\begin{equation*}
\mc(P_n)=1+\mc(P_{n-1})+\mc(P_{n-2})=1+(F(n)-1)+(F(n-1)-1)=F(n+1)-1
\end{equation*}
as required. 
\end{proof}

We show that, in fact, $F(n+1)-1$ is an upper bound for the number of matching cuts of an $n$-vertex graph. First, we show this for trees.

\begin{lemma}\label{lem:tree}
An $n$-vertex tree $T$ has at most $F(n+1)-1$ matching cuts. Moreover, the bound  $F(n+1)-1$ is tight and is achieved if and only if $T$ is a path.  
\end{lemma}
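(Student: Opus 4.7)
My plan is induction on $n$. The cases $n = 1$ and $n = 2$ are immediate since $\mc(P_1) = 0 = F(2) - 1$ and $\mc(P_2) = 1 = F(3) - 1$. For the inductive step with $n \geq 3$, I would fix any leaf $v$ of $T$, let $u$ be its unique neighbor, and set $k = \deg_T(u) - 1 \geq 1$, so that $T - \{u, v\}$ splits into $k$ pairwise vertex-disjoint subtrees $T_1, \ldots, T_k$, where $T_i$ contains the $i$-th non-$v$ neighbor $w_i$ of $u$. Writing $n_i = |V(T_i)|$, we have $n_i \geq 1$ and $n_1 + \cdots + n_k = n - 2$.

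The heart of the argument is the counting identity
\[
\mc(T) \;=\; 1 \;+\; \mc(T - v) \;+\; \Bigl(\prod_{i=1}^{k}\bigl(1 + \mc(T_i)\bigr) - 1\Bigr),
\]
which I would establish by partitioning the matching cuts $M$ of $T$ into three classes: (a) the single cut $M = \{uv\}$; (b) cuts with $uv \notin M$, which by placing $v$ on $u$'s side biject with the matching cuts of $T - v$; and (c) cuts with $uv \in M$ and $M \neq \{uv\}$, which on removing $uv$ biject with the nonempty matching cuts of $T - v$ having no edge at $u$. For class (c), any such cut forces $u$ together with all $w_1, \ldots, w_k$ onto the same side, so the cut is uniquely determined by independent local bipartitions $(A_i, B_i)$ of each $V(T_i)$ with $w_i \in A_i$ and crossing edges forming a matching. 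Each $T_i$ contributes either the trivial choice $B_i = \emptyset$ or one of the $\mc(T_i)$ matching cuts of $T_i$ (unordered bipartitions of $T_i$ are pinned by fixing $w_i$'s side), giving $1 + \mc(T_i)$ options; subtracting the all-trivial configuration (which would give an empty global cut) yields the product formula.

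Feeding the inductive bounds $\mc(T - v) \leq F(n) - 1$ and $1 + \mc(T_i) \leq F(n_i + 1)$ into the identity splits the argument into two subcases. If $k = 1$, direct substitution gives $\mc(T) \leq (F(n) - 1) + F(n - 1) = F(n + 1) - 1$; equality forces both $T - v$ and $T_1$ to be paths by induction, and since $\deg_T(u) = 2$, reattaching $v$ at the endpoint $u$ of $T - v$ recovers that $T$ is itself a path. If $k \geq 2$, iterating Lemma~\ref{lem:fib} (valid because each $n_i + 1 \geq 2$) yields $\prod_{i=1}^{k} F(n_i + 1) \leq F(n - 1) - 1$, hence $\mc(T) \leq F(n + 1) - 2 < F(n + 1) - 1$, so equality cannot occur here. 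The main technical hurdle I anticipate is the decomposition identity itself, where the tree structure is used twice: the subtrees $T_i$ are pairwise vertex-disjoint so that local matchings combine into a global matching, and each $T_i$ is connected so that ``$B_i \neq \emptyset$'' is equivalent to ``the local cut is nonempty.'' Once these enumerative details are pinned down, the Fibonacci bookkeeping via Lemma~\ref{lem:fib} is routine.
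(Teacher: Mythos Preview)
Your argument is correct, but it takes a different route from the paper's. The paper exploits the equivalence ``matching cuts of a tree $=$ nonempty matchings'' and strengthens the induction hypothesis to \emph{forests}: it shows $\match(H)\le F(n+1)-1$ for every $n$-vertex forest $H$, with strict inequality unless $H$ is a path. Picking a leaf $u$ with neighbor $v$, it uses the simple recurrence
\[
\match(H)=1+\match(H-u)+\match(H-\{u,v\}),
\]
and applies the inductive bound directly to the (possibly disconnected) forest $H-\{u,v\}$; the strictness for non-paths follows because one of $H-u$ or $H-\{u,v\}$ is then not a path. No use of Lemma~\ref{lem:fib} is needed here.

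You instead keep the induction on trees only, which forces you to decompose $T-\{u,v\}$ into its subtree components $T_1,\ldots,T_k$ and to bound the product $\prod_i F(n_i+1)$ via Lemma~\ref{lem:fib} when $k\ge 2$. This works and has the minor advantage of not widening the statement to forests, at the cost of a more elaborate case split and an earlier appeal to Lemma~\ref{lem:fib} (which the paper reserves for the disconnected case of Theorem~\ref{thm:general}). Note also that your bipartition description of class~(c) is more involved than necessary: since every nonempty matching of a tree is a matching cut, class~(c) is immediately in bijection with the nonempty matchings of $T-\{u,v\}$, which yields your product formula directly.
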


\begin{proof}
Clearly, $M\subseteq E(T)$ is a matching cut of $T$ if and only if $M$ is a nonempty matching. Denote by $\match(G)$ the number of nonempty matchings of a graph $G$. We show by induction that for every $n$-vertex forest $H$, $\match(H)\leq F(n+1)-1$ and the inequality is strict whenever $H$ is not a path. The claim is straightforward  if $n\leq 2$. Let $n\geq 3$. If $H$ has no edges, $\match(H)=0$ and the claim holds. Otherwise, $H$ has a leaf $u$. Denote by $v$ the unique neighbor of $u$. Clearly,  $M$ is a nonempty matching of $H$ if and only if either $M=\{uv\}$, or $M$ is a nonempty matching of $H'=H-u$, or $M=M'\cup\{uv\}$, where $M'$ is a nonempty matching of $H''=H-\{u,v\}$.
We have that
\begin{equation*}
\match(H)=1+\match(H-u)+\match(H-\{u,v\}).
\end{equation*} 
By induction, 
\begin{equation}\label{eq:match} 
\match(H)\leq 1+(F(n)-1)+(F(n-1)-1)=F(n+1)-1.
\end{equation}
To see the second claim, note that if $H$ is not a path, then either $H-u$ or $H-\{u,v\}$ is not a path. Then, by induction, the inequality in (\ref{eq:match}) is strict and, therefore, $\match(H)<F(n+1)-1$. This concludes the proof.
\end{proof}

It is well-known that the treewidth of a tree is one (see, e.g.,~\cite{CyganFKLMPPS15}). This observation together with Proposition~\ref{prop:tw-cw} and Lemma~\ref{lem:tree} immediately imply the following lemma.

\begin{lemma}\label{lem:enum-tree} 
The matching cuts of an $n$-vertex tree can be enumerated in  $\Oh^*(F(n))$~time.
\end{lemma}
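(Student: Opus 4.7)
The plan is to combine the structural bound from Lemma~\ref{lem:tree} with the algorithmic statement of Proposition~\ref{prop:tw-cw}, using the fact that trees are the simplest nontrivial bounded-treewidth graphs. First I would recall the standard fact that every tree $T$ has treewidth $\tw(T)\le 1$ (a tree decomposition of width~$1$ is obtained, for instance, by rooting $T$ and letting each non-root vertex $v$ with parent $u$ give a bag $\{u,v\}$). With this in hand the parameter in Proposition~\ref{prop:tw-cw} is a constant, so the generic meta-theorem based delay bound $f(t)\cdot\text{poly}(n)$ collapses to a purely polynomial delay $p(n)$ for enumeration of matching cuts of $T$.

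Next I would invoke Lemma~\ref{lem:tree}, which gives $\mc(T)\le F(n+1)-1$ for any $n$-vertex tree. Multiplying the polynomial per-solution delay by the number of solutions (and accounting for pre- and postcalculation, each also polynomial) yields a total running time of at most
\begin{equation*}
p(n)\cdot\bigl(F(n+1)-1\bigr)+p(n) \;=\; \Oh^*(F(n+1)).
\end{equation*}
Finally, since $F(n+1)=F(n)+F(n-1)\le 2F(n)$, the factor $F(n+1)$ differs from $F(n)$ only by a constant, which is absorbed in the $\Oh^*$ notation; hence the bound is $\Oh^*(F(n))$, as claimed.

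There is essentially no obstacle here: the statement is framed as an immediate consequence, and the only thing to verify is that the delay guarantee from Proposition~\ref{prop:tw-cw} indeed becomes polynomial once $t$ is fixed to a constant (here $t=1$), and that counting $\mc(T)\le F(n+1)-1$ applications of a polynomial-delay routine produces the claimed output-sensitive total running time of $\Oh^*(F(n))$.
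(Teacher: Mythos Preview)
Your proposal is correct and follows essentially the same approach as the paper: invoke treewidth~$1$ for trees, apply Proposition~\ref{prop:tw-cw} to get polynomial delay, and multiply by the bound $\mc(T)\le F(n+1)-1$ from Lemma~\ref{lem:tree}. The paper states this as an immediate consequence of exactly these three ingredients, so there is nothing to add.
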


It is also easy to construct a direct  enumeration algorithm for trees. For example, one can consider the recursive branching algorithm that for an edge, first enumerates matching cuts containing this edge and then the matching cuts excluding the edge. Note that the running time in Lemma~\ref{lem:enum-tree} can be written as $\Oh(1.6181^n)$ to make the exponential dependence on $n$ more clear.

Now we consider general graphs and show the following.

\begin{theorem}\label{thm:general}
An $n$-vertex graph has at most $F(n+1)-1$ matching cuts. The bound is tight and is achieved for paths. Moreover, if $n\geq 5$, then an $n$-vertex graph $G$ has $F(n+1)-1$ matching cuts if and only if $G$ is a path. Furthermore, the matching cuts can be enumerated in  $\Oh^*(F(n))$ time.
\end{theorem}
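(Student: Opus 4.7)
My plan is to reduce the general case to Lemma~\ref{lem:tree} through two structural observations. First, for a connected graph $G$ with spanning tree $T$, I would argue that the restriction map $\phi\colon M \mapsto M \cap E(T)$ is an injection from matching cuts of $G$ into matching cuts of $T$. Well-definedness is immediate: if $M = E_G(A,B)$, then $M \cap E(T) = E_T(A,B)$ is a nonempty subset of the matching $M$ (nonempty since $T$ is connected and $(A,B)$ is nontrivial), so it is a matching cut of $T$. For injectivity, the key point is that $M_T := M \cap E(T)$ determines $(A,B)$ up to swap: deleting $M_T$ from $T$ splits $T$ into exactly $|M_T|+1$ subtrees, and contracting each subtree to a single vertex produces a tree on $|M_T|+1$ vertices with $M_T$ as its edge set (no multi-edges arise because $M_T$ is a matching inside a tree), and such a tree admits a unique proper two-coloring up to swap; this coloring is precisely the grouping of subtrees into $A$ and $B$, so $M = E_G(A,B)$ is recovered from $M_T$. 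Combining with Lemma~\ref{lem:tree} yields $\mc(G) \leq \mc(T) \leq F(n+1)-1$ in the connected case.

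For disconnected $G = G_1 + \cdots + G_c$ with $G_i$ connected on $n_i$ vertices, my plan is to show the product identity $\mc(G) = \prod_{i=1}^c (\mc(G_i)+1)$: each matching cut $M$ of $G$ decomposes as $\bigcup_i M_i$ with $M_i = M \cap E(G_i)$ either empty or a matching cut of $G_i$, and conversely any such tuple is realized by a matching cut of $G$, since $c \geq 2$ allows one to always exhibit a bipartition (the nontrivially split components force both sides to be nonempty, and if all $M_i$ are empty then distributing the components themselves between $A$ and $B$ witnesses the empty matching cut). Induction on $n$ with the hypothesis $\mc(G_i)+1 \leq F(n_i+1)$, together with repeated application of Lemma~\ref{lem:fib} ($F(p)F(q) \leq F(p+q-1)-1$), collapses the product to $\prod_i F(n_i+1) \leq F(n+1)-1$.

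For the equality characterization when $n \geq 5$, I split into three cases. If $G$ is a tree but not a path, Lemma~\ref{lem:tree} already gives a strict inequality. If $G$ is connected but not a tree, I would pick any non-tree edge $e = uv$ and a spanning tree $T$; the unique $u$-$v$ path in $T$ has length at least two (as $e \notin E(T)$), so its first edge $f$ satisfies $f \neq e$ and $f$ is incident to $u$. Then $M_T = \{f\}$ is a matching cut of $T$ whose corresponding bipartition puts $u$ and $v$ on opposite sides, so $e$ crosses the cut; since $e$ and $f$ share the endpoint $u$, $E_G(A,B)$ contains two adjacent edges and is not a matching, so $M_T$ is missed by $\phi$ and $\mc(G) < \mc(T) \leq F(n+1)-1$. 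If $G$ is disconnected, the assumption $n \geq 5$ forces either $c \geq 3$ or $c = 2$ with some $n_i \geq 3$, and the strict clause of Lemma~\ref{lem:fib} then triggers somewhere along the product chain, yielding $\mc(G) < F(n+1)-1$.

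For the enumeration in $\Oh^*(F(n))$ time, my plan is to piggyback on Lemma~\ref{lem:enum-tree}. For connected $G$, compute a spanning tree $T$, enumerate its at most $F(n+1)-1$ matching cuts in $\Oh^*(F(n))$ total time, reconstruct the bipartition from each $M_T$ in polynomial time by two-coloring the quotient tree of $T - M_T$, compute $E_G(A,B)$, and output it whenever it is a matching; injectivity of $\phi$ guarantees no repetitions. For disconnected $G$, enumerate matching cuts of each component separately and combine via the Cartesian product underlying the identity $\mc(G) = \prod (\mc(G_i)+1)$. I expect the main obstacle to be the rigorous verification of the injectivity of $\phi$, specifically that the bipartition is uniquely reconstructed from $M_T$ via the quotient-tree two-coloring, together with the careful bookkeeping in the product formula for disconnected graphs to correctly account for the unique empty matching cut.
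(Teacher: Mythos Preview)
Your proposal is correct and shares the paper's overall architecture: the spanning-tree injection for the connected upper bound, the product identity $\mc(G)=\prod_i(\mc(G_i)+1)$ combined with Lemma~\ref{lem:fib} for the disconnected case, and the tree-enumeration-plus-verification scheme for the algorithm. Two sub-arguments differ in a way worth noting.

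\textbf{Injectivity.} The paper argues more directly: since $G$ is connected, the partition $\{A,B\}$ with $M=E_G(A,B)$ is unique, and since $T$ spans $G$, distinct partitions of $V(G)=V(T)$ yield distinct edge cuts of $T$; hence $M\mapsto E_T(A,B)$ is injective. Your quotient-tree two-coloring reconstructs the inverse explicitly. Both are fine; yours is heavier but pays off because you reuse it verbatim in the enumeration step.

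\textbf{Equality characterization (connected case).} Here you genuinely diverge. The paper splits on the maximum degree: if $\Delta(G)\geq 3$ it chooses a spanning tree that is \emph{not} a path and invokes the strict clause of Lemma~\ref{lem:tree}; if $\Delta(G)\leq 2$ then $G$ is a path or a cycle, and cycles are handled by a direct counting argument. You instead split on whether $G$ is a tree: if $G$ is a non-path tree, Lemma~\ref{lem:tree} applies directly; if $G$ is not a tree, you pick any non-tree edge $e=uv$ and exhibit a concrete matching cut $\{f\}$ of $T$ (the first edge of the $u$--$v$ path in $T$) that is outside the image of $\phi$, giving $\mc(G)<\mc(T)\leq F(n+1)-1$. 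Your route avoids the need to engineer a non-path spanning tree and handles cycles and graphs with high-degree vertices uniformly; the paper's route is slightly shorter once one observes that a non-path spanning tree can always be chosen when $\Delta(G)\geq 3$.

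For the disconnected strictness, the paper tracks the accumulated $-1$'s to obtain $\prod_i F(n_i+1)\leq F(n+1)-(c-1)$, which is strictly below $F(n+1)-1$ whenever $c\geq 3$. Your argument that the strict clause of Lemma~\ref{lem:fib} fires ``somewhere along the chain'' is also valid: at the final multiplication one factor has index $n-n_c+1$ and the other $n_c+1$, and for $n\geq 5$ one of these is always $\geq 4$; but you should make that explicit rather than leave it as an existence claim.
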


\begin{proof}
First, we consider connected graphs.

 Let $G$ be a connected graph. 
Observe that if $M$ is a matching cut of $G$, then the partition $\{A,B\}$ of $V(G)$ such that $M=E(A,B)$ is unique. Therefore, the enumeration of matching cuts of $G$ is equivalent to the enumeration of all partitions $\{A,B\}$ of $G$ such that $M=E_G(A,B)$ is a matching cut. Let $T$ be an arbitrary spanning tree of $G$. Then if $M=E_G(A,B)$  is a matching cut of $G$ for a partition $\{A,B\}$, then $E_T(A,B)$ is a matching cut of $T$. Moreover, for two distinct matching cuts $M=E_G(A,B)$ and~$M'=E_G(A',B')$ we have that $E_T(A,B)$ and~$E_T(A',B')$ are distinct as well.  This implies that $\mc(G)\leq \mc(T)\leq F(n+1)-1$ by Lemma~\ref{lem:tree}.

Now we claim that $G$ has $F(n+1)-1$ matching cuts if and only if $G$ is a path. Note that the spanning tree $T$ is arbitrary. If $G$ has a vertex of degree at least three, then $T$ can be chosen in such a way that $T$ is not a path. Then, by Lemma~\ref{lem:tree}, $\mc(G)\leq \mc(T)< F(n+1)-1$.  Assume that the maximum degree of $G$ is at most two. Then $G$ is either a path or a cycle. In the first case, $\mc(G)=F(n+1)-1$ by Observation~\ref{obs:path}. Suppose that $G$ is a cycle $v_0v_1\cdots v_n$ with $v_0=v_n$. Consider the path $P=v_1\cdots v_n$ spanning $G$. Note that every matching cut of $G$ has at least two edges. This implies that there are matching cuts of $P$ that do not correspond to any matching cuts of $G$. In particular, $M=E_P(\{v_1\},\{v_2,\ldots,v_n\})$ is a matching cut of $P$, but  $M'=E_G(\{v_1\},\{v_2,\ldots,v_n\})$ is not a matching cut. This means, that $\mc(G)<\mc(P)=F(n+1)-1$.

To enumerate the matching cuts of $G$, we consider a spanning tree $T$ and enumerate the matching cuts of $T$ using Lemma~\ref{lem:enum-tree}. Then for every matching cut $M=E_T(A,B)$ for a partition $\{A,B\}$ of $V(T)=V(G)$, we verify whether $M'=E_G(A,B)$ is a matching cut of $G$ and output $M'$ is this holds. This means that the matching cuts of a connected graph $G$ can be enumerated in  $\Oh^*(F(n))$ time. This completes the proof for connected graphs.

Assume that $G$ is a disconnected graph with connected components $G_1,\ldots,G_k$, $k\geq 2$, having $n_1,\ldots,n_k$ vertices, respectively.  Observe that $M\subseteq E(G)$ is a matching cut of $G$ if and only if $M=\bigcup_{i=1}^kM_i$, where for every $i\in\{1,\ldots,k\}$, either $M_i$ is a matching cut of $G_i$ or $M_i=\emptyset$. Therefore, using the proved claim for connected graphs, we have that 
\begin{equation}\label{eq:disc-one}
\mc(G)=\prod_{i=1}^k(\mc(G_i)+1)\leq \prod_{i=1}^kF(n_i+1).
\end{equation}
Applying Lemma~\ref{lem:fib} iteratively, we obtain that 
\begin{align}\label{eq:disc-two}
  \prod_{i=1}^kF(n_i+1)\leq& (F(n_1+n_2+1)-1)\prod_{i=3}^kF(n_i)\leq F(n_1+n_2+1)\prod_{i=3}^kF(n_i)-1\leq\cdots\nonumber\\
  \leq&F(n_1+\cdots+n_k+1)-(k-1)=F(n+1)-(k-1).
  \end{align} 
Combining (\ref{eq:disc-one}) and (\ref{eq:disc-two}), we have that 
\begin{equation}\label{eq:disc-three} 
\mc(G)\leq F(n+1)-1.
\end{equation} 

By the proved claim for connected graphs, we have that the inequality (\ref{eq:disc-three}) is strict if one of the connected components is not a path. By inequality (\ref{eq:disc-two}), (\ref{eq:disc-three}) is also strict if $k\geq 3$. If $k=2$ and $n\geq 5$, then either $n_1\geq 3$ and $n_2\geq 3$. By Lemma~\ref{lem:fib}, (\ref{eq:disc-two}) is strict. Hence, if $n\geq 5$, then $\mc(G)< F(n+1)-1$. This implies that if $n\geq 5$, then $\mc(G)=F(n+1)-1$ if and only if $G$ is a path.

Finally, observe that the matching cuts of $G$ can be enumerated by listing the matching cuts of each connected component and combining them (assuming that these lists contain the empty set) to obtain the matching cuts of $G$. Equivalently, we can take the spanning forest $H$ of $G$ obtained by taking the union of spanning trees of $G_1,\ldots,G_k$, respectively. Then we can list the matching cuts of $H$ and output the matching cuts of $G$ corresponding to them. In both cases, the running time is  $\Oh^*(F(n))$. 
\end{proof}

Let us remark that if $n\leq 4$, then besides paths $P_n$, the graphs $K_p+K_q$ for $1\leq p,q\leq 2$ such that $n=p+q$ have $F(n+1)-1$ matching cuts.

Clearly, the upper bound for the maximum number of matching cuts given in Theorem~\ref{thm:general} is an upper bound for the maximum number of minimal and maximal matching cuts. However, the number of minimal or maximal matching cuts may be significantly less than the number of all matching cuts. We conclude this section by stating  the best lower bounds we know for the
maximum number of maximal matching cuts and minimal matching cuts, respectively.

Our lower bound for the maximal matching cuts is achieved for disjoint unions of the cycles on 7 vertices.

\begin{proposition}\label{prop:lower-max}
The graph $G=kC_7$ with $n=7k$ vertices has $14^k=14^{n/7}\geq  1.4579^n$ maximal matching cuts.
\end{proposition}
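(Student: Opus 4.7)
My plan is to first analyze a single copy of $C_7$, then lift to the disjoint union by an independence-of-components argument, and finally verify the numerical inequality.

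For the first step I would enumerate all matching cuts of $C_7$ directly. Write $C_7 = v_0 v_1 \cdots v_6 v_0$ with edges $e_i = v_i v_{i+1 \bmod 7}$. If $M \subseteq E(C_7)$ is a matching cut realized by a partition $\{A,B\}$, then going once around the cycle the side must alternate each time we traverse an edge of $M$, which forces $|M|$ to be even. Since $C_7$ is connected, $|M| = 0$ is impossible, and since a matching in $C_7$ has size at most $3$, values $|M| \geq 4$ are impossible. Hence $|M| = 2$, and the pair of chosen edges must be non-adjacent. Counting ordered pairs of non-adjacent edges gives $7\cdot 4$, so there are $14$ matching cuts of $C_7$, and every one of them is automatically both minimal and maximal inside $C_7$ (no $1$-edge or $3$-edge option is possible).

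Next, let $G = kC_7$ with components $C^{(1)}, \dots, C^{(k)}$. Since $G$ has no edges between distinct components, any $M \subseteq E(G)$ decomposes uniquely as $M = M_1 \cup \dots \cup M_k$ with $M_i \subseteq E(C^{(i)})$, and $M$ is a matching cut of $G$ iff each $M_i$ is either empty or a matching cut of $C^{(i)}$ (the assembled partition $A = \bigcup A_i$, $B = \bigcup B_i$ is automatically a valid bipartition of $V(G)$ with both sides nonempty whenever at least one $M_i$ is nonempty or at least one component lies fully on each side). I would then argue maximality componentwise: if some $M_i = \emptyset$, one can enlarge $M$ by replacing $M_i$ with any of the $14$ matching cuts of $C^{(i)}$, so $M$ is not maximal; conversely, if every $M_i$ is a matching cut of $C^{(i)}$, then adding any further edge would increase some $M_i$ beyond a maximal matching cut of $C^{(i)}$, which is impossible by the first step. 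Therefore the maximal matching cuts of $G$ are in bijection with $k$-tuples of matching cuts of $C_7$, giving exactly $14^k$ of them.

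Finally, $14^k = 14^{n/7} = (14^{1/7})^n$, and a direct estimate $14^{1/7} > 1.4579$ (e.g.\ $1.4579^7 < 14$) finishes the claimed bound. The only potentially subtle point is the maximality analysis in the second step, but because $C_7$ has no trivial (empty or $1$-edge) matching cuts and the components are edge-disjoint in $G$, the extension arguments are local to one component and go through without complication.
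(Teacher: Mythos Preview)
Your proposal is correct and follows essentially the same approach as the paper's proof: count the maximal matching cuts of a single $C_7$ (the paper simply asserts there are $14$, formed by two-edge matchings, while you give the parity and counting argument explicitly), then use the componentwise characterization of maximal matching cuts in a disjoint union to obtain $14^k$. Your argument is more detailed but structurally identical to the paper's.
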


\begin{proof}
Suppose that $G$ has connected components $G_1,\ldots,G_k$ such that $G_i$ has a matching cut for every $i\in\{1,\ldots,k\}$.
Then  $M\subseteq E(G)$ is a maximal matching cut of $G$ if and only if $M=M_1\cup\cdots\cup M_k$, where $M_i$ is a maximal matching cut of $G_i$ for $i\in\{1,\ldots,k\}$. Observe that $C_7$ has 14 maximal matching cuts formed by matchings with two edges. 
Therefore, $G=kC_7$ has  $14^k$ maximal matching cuts. Since $G$ has $n=7k$ vertices, $14^k=14^{n/7}\geq  1.4579^n$. 
\end{proof}

To achieve a lower bound for the maximum number of minimal matching cuts, we consider the graphs $H_k$ constructed as follows for a positive integer $k$.
\begin{itemize}
\item For every $i\in\{1,\ldots,k\}$, construct two vertices $u_i$ and $v_i$ and a $(u_i,v_i)$-path of length 4.
\item Make the vertices $u_1,\ldots,u_k$ pairwise adjacent, and do the same for $v_1,\ldots,v_k$.
\end{itemize}
\begin{proposition}\label{prop:lower-min}
The number of minimal matching cuts of  $H_k$ with $n=5k$ vertices is at least $4^k=4^{n/5}\geq 1.3195^n$.
\end{proposition}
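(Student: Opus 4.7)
The plan is to exhibit $4^k$ explicit, pairwise distinct minimal matching cuts of $H_k$. For each branch $i \in \{1,\ldots,k\}$ I would write its 4-edge path as $u_i=x_i^0, x_i^1, x_i^2, x_i^3, x_i^4=v_i$. Independently for each $i$, I would pick a ``break edge'' $e_i = x_i^{t_i-1}x_i^{t_i}$ with $t_i \in \{1,2,3,4\}$, and set $M(t_1,\ldots,t_k):=\{e_1,\ldots,e_k\}$. The resulting $4^k$ edge sets are pairwise distinct, since each $e_i$ uniquely determines $t_i$ and lives on a single branch.

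To see that each such $M$ is a matching cut, I would define a bipartition $\{A,B\}$ of $V(H_k)$ by placing the $u$-clique together with every ``left prefix'' $x_j^0,\ldots,x_j^{t_j-1}$ into $A$, and the $v$-clique together with every ``right suffix'' $x_j^{t_j},\ldots,x_j^4$ into $B$. Direct inspection gives $E_{H_k}(A,B)=M$: no edge within either clique crosses the partition, and each branch contributes exactly its break edge $e_i$. Since the $e_i$'s live on vertex-disjoint branch paths, they are pairwise vertex-disjoint, and hence $M$ is a matching cut.

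The main step is minimality. The key observation I would rely on is that $H_k - M$ has exactly two connected components, namely the sets $A$ and $B$ just described: both cliques remain intact, each branch's $A$-prefix is attached to the $u$-clique through $u_i$, and each branch's $B$-suffix is attached to the $v$-clique through $v_i$. Given any $\emptyset \neq M' \subsetneq M$, choose any $e_j \in M\setminus M'$; since $e_j$ joins an $A$-vertex to a $B$-vertex, putting $e_j$ back into $H_k - M$ already reconnects the two components, so $H_k - M'$ is connected. Hence $M'$ cannot equal $E_{H_k}(A',B')$ for any nontrivial bipartition $\{A',B'\}$, so $M'$ is not a matching cut. The empty set fails to be a matching cut since $H_k$ is connected. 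Thus $M$ is an inclusion-minimal matching cut, and the bound $4^k = 4^{n/5} \geq 1.3195^n$ follows.

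The hard part is the two-component claim for $H_k - M$, but I expect it to reduce to a one-line observation: each $e_i$ cleanly splits its branch into a subpath hanging off $u_i$ and a subpath hanging off $v_i$, while the two cliques are entirely untouched by $M$. Thus the whole proof becomes essentially a careful accounting exercise around this structural remark.
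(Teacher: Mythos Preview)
Your proposal is correct and follows essentially the same approach as the paper: pick one edge from each of the $k$ length-$4$ paths and argue that the resulting set is a minimal matching cut. The paper's own proof is a terse three-sentence version of exactly this argument (it simply asserts ``Clearly, $M$ is a minimal matching cut''), whereas you spell out the bipartition, the matching property, and the two-component structure of $H_k-M$ that underlies minimality; your added detail is welcome and does not deviate from the intended route.
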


\begin{proof}
Consider a matching cut $M$ composed by taking one edge of each  $(u_i,v_i)$-path for $i\in\{1,\ldots,k\}$. Clearly, $M$ is a minimal matching cut of $G$. Observe that $H_k$ has 
$4^k$ minimal matching cuts of this form. Since $H_k$ has $n=5k$ vertices, 
$4^k=4^{n/5}\geq 1.3195^n$.
\end{proof}

\section{Enumeration Kernels for the Vertex Cover Number Parameterization}\label{sec:vc}
In this section, we consider the parameterization of the matching cut problems by the vertex cover number of the input graph. Notice that this parameterization is one of the most thoroughly investigated with respect to classical kernelization (see, e.g., the recent paper of Bougeret, Jansen, and Sau~\cite{BougeretJS20} for the currently most general results of this type). However, we are interested in 
enumeration kernels.

Recall that a set of vertices $X\subseteq V(G)$ is a \emph{vertex cover} of $G$ if for every edge $uv\in E(G)$, at least one of its end-vertices is in $X$, that is, $V(G)\setminus X$ is an independent set.  The \emph{vertex cover number} of $G$, denoted by $\vc(G)$, is the minimum size of a vertex cover of $G$. Computing~$\vc(G)$ is \classNP-hard but one can find a 2-approximation by taking the end-vertices of a maximal matching of~$G$~\cite{GareyJ79} (see also~\cite{Karakostas09} for a better approximation) and this suffices for our purposes.  
Throughout this section, we assume that the parameter $k=\vc(G)$ is given together with the input graph. 
Note that for every graph $G$, $\tw(G)\leq \vc(G)$. Therefore, \probMC, \probMinMC, and \probMaxMC can be solved with \classFPT delay when parameterized by the vertex cover number by Proposition~\ref{prop:tw-cw}. 
%


First, we describe the basic kernelization algorithm that is exploited for all the kernels in this subsection.
Let $G$ be a graph that has a vertex cover of size $k$. The case when $G$ has no edges is trivial and will be considered separately. Assume from now that $G$ has at least one edge and $k\geq 1$.

We use the above-mentioned 2-approximation algorithm to find a vertex cover $X$ of size at  most $2k$. Let $I=V(G)\setminus X$. Recall that $I$ is an independent set. 
 Denote by $I_0$, $I_1$, and $I_{\geq 2}$ the subsets of vertices of $I$ of degree 0, 1, and at least 2, respectively. 
We use the following \emph{marking} procedure to label some vertices of $I$.
\begin{itemize}
\item[(i)] \emph{Mark} an arbitrary vertex of $I_0$ (if it exists).
\item[(ii)] For every $x\in X$, \emph{mark} an arbitrary vertex of $N_G(x)\cap I_1$ (if it exists). 
\item[(iii)] For every two distinct vertices $x,y\in X$, select an arbitrary set of $\min\{3,|(N_G(x)\cap N_G(y))\cap I_{\geq 2}|\}$ vertices in $I_{\geq 2}$ that are adjacent  to both $x$ and $y$, and \emph{mark} them for the pair $\{x,y\}$. 
\end{itemize}
Note that a vertex of $I_{\geq 2}$ can be marked for distinct pairs of vertices of $X$.
Denote by $Z$ the set of marked vertices of $I$. Clearly,  $|Z|\leq 1+|X|+3\binom{|X|}{2}$.   
We define $H=G[X\cup Z]$. Notice that $|V(H)|\leq |X|+|Z|\leq 1+2|X|+3\binom{|X|}{2}\leq 6k^2+k+1$. This completes the description of the basic kernelization algorithm that returns $H$. It is straightforward to see that $H$ can be constructed in polynomial time.

It is easy to see that $H$ does not keep the information about all matching cuts in $G$ due to the deleted vertices. However, the crucial property is that $H$ keeps all matching cuts of $G'=G-(I_0\cup I_1)$.  Formally, we define $H'=H-(I_0\cup I_1)$ and show the following lemma.

\begin{lemma}\label{lem:kernel-vc}
A set of edges $M\subseteq E(G')$ is a matching cut of $G'$ if and only if $M\subseteq E(H')$ and $M$ is a matching cut of $H'$.
\end{lemma}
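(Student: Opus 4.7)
}
The plan is to exploit the marking carefully so that for each ``unmarked'' vertex $v \in W := I_{\geq 2} \setminus Z$, every pair of its $X$-neighbors has enough marked common neighbors to force the two sides of any matching cut to agree on $v$'s $X$-neighborhood. Note first that $H' = G[X \cup (Z \cap I_{\geq 2})]$ and $G' = G[X \cup I_{\geq 2}]$, so $H' = G' - W$. Since $W \subseteq I$ is independent, every edge of $G'$ incident to $W$ goes to $X$.

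The main technical tool is the following claim, which is the crux of the whole argument. \emph{For every $v \in W$ and every two distinct $x,y \in N_G(v) \cap X$, there exist three distinct vertices $w_1, w_2, w_3 \in (N_G(x) \cap N_G(y) \cap I_{\geq 2}) \cap Z$, all different from $v$.} This follows from marking rule (iii): if $|N_G(x) \cap N_G(y) \cap I_{\geq 2}| \leq 3$, then all common neighbors are marked, so $v \in Z$, a contradiction; otherwise at least $3$ common neighbors are in $Z$, and since $v \notin Z$ they are distinct from $v$. The main obstacle I expect is using this claim cleanly to rule out both ``bad'' sides simultaneously; this is handled by a pigeonhole argument on $w_1, w_2, w_3$.

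For the forward direction, suppose $M = E_{G'}(A,B)$ is a matching cut of $G'$. I will show that no edge of $M$ is incident to $W$. Assume for contradiction some $v \in W$ has an edge $vx \in M$; WLOG $v \in A$ and $x \in B$. Since $v \in I_{\geq 2}$, pick $y \in (N_G(v) \cap X) \setminus \{x\}$. The matching property forces $y \in A$ (else $vy$ would be a second matching edge at $v$). Apply the claim to get $w_1, w_2, w_3 \in V(H')$, each adjacent to both $x$ and $y$. For each $w_i$, since $x$ and $y$ lie on different sides and $M$ is a matching, exactly one of $w_ix, w_iy$ lies in $M$: namely $w_ix$ if $w_i \in A$, and $w_iy$ if $w_i \in B$. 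By pigeonhole, two of the $w_i$ are on the same side, producing two $M$-edges at either $x$ or $y$, a contradiction. Therefore $M \subseteq E(H')$. Setting $A' := A \cap V(H')$ and $B' := B \cap V(H')$ yields $M = E_{H'}(A',B')$; both $A'$ and $B'$ are nonempty, because if, say, $A' = \emptyset$ then every $v \in A \subseteq W$ would have all $\geq 2$ of its $X$-neighbors in $B$, violating the matching property.

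For the backward direction, let $M = E_{H'}(A',B')$ be a matching cut of $H'$. I claim that for every $v \in W$, either $N_G(v) \cap X \subseteq A'$ or $N_G(v) \cap X \subseteq B'$. Indeed, otherwise pick $x \in N_G(v) \cap A'$ and $y \in N_G(v) \cap B'$; the very same pigeonhole argument applied to the three marked common neighbors $w_1, w_2, w_3$ (which lie in $V(H')$) gives two $M$-edges meeting at $x$ or $y$, contradicting that $M$ is a matching in $H'$. Hence I can extend $(A',B')$ to a partition $(A,B)$ of $V(G')$ by putting each $v \in W$ on the same side as all of its $X$-neighbors. No edge incident to $W$ enters the cut, edges within $W$ do not exist since $W$ is independent, and the cut restricted to $V(H')$ is $M$. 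Thus $E_{G'}(A,B) = M$ is still a matching and, as $A' \ne \emptyset \ne B'$, a matching cut of $G'$, finishing the proof.
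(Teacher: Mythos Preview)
Your proof is correct and follows essentially the same approach as the paper: both directions hinge on the observation that an unmarked vertex $v\in I_{\ge 2}$ adjacent to two vertices $x,y\in X$ forces three marked common neighbors of $x$ and $y$ in $H'$, and a pigeonhole argument on these three vertices yields two cut edges at $x$ or at $y$. Your write-up is slightly more explicit than the paper's (you isolate the ``three marked witnesses'' claim and you verify nonemptiness of $A',B'$ when restricting the partition to $V(H')$), but the underlying argument is the same.
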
  

\begin{proof}
Suppose that $M\subseteq E(G')$ is a matching cut of $G'$ and assume that $M=E_{G'}(A,B)$ for a partition $\{A,B\}$ of $V(G')$. We show that $M\subseteq E(H')$. 
For the sake of contradiction, suppose that there is some edge $uv\in M\setminus E(H')$. This means that either $u\notin V(H')$ or $v\notin V(H')$. By symmetry, we can assume without loss of generality that $u\notin V(H')$. Then $u\in I_{\geq 2}\setminus Z$, that is, $u$ is an unmarked vertex of $I_{\geq 2}$. Recall that every vertex of $I_{\geq 2}$ has degree at least two. Hence, $u$ has a neighbor $w\neq v$. Because $M$ is a matching cut, $w\in A$. Notice that $w,v\in X$. Because $u$ is unmarked and adjacent to both $w$ and $v$, there are three vertices $z_1,z_2,z_3\in Z$ that are marked for the pair $\{w,v\}$. Since either $A$ or $B$ contains at least two of the vertices $z_1,z_2,z_3$, either $w$ or $v$ has at least two neighbors in $B$ or $A$, respectively. This contradicts that $M$ is a matching cut and we conclude that $M\subseteq E(H')$. Since $H'$ is an induced subgraph of $G'$, $M$ is a matching cut of $H'$.

For the opposite direction, assume that $M$ is a matching cut of $H'$.  Let $M=E_{H'}(A,B)$ for a partition $\{A,B\}$ of $V(H')$.   We claim that for every $v\in V(G')\setminus V(H')$, either $v$ has all its neighbors in $A$ or $v$ has all its neighbors in $B$. The proof is by contradiction. Assume that $v\in V(G')\setminus V(H')$ has a neighbor $u\in A$ and a neighbor $w\in B$. Then $v$ is an unmarked vertex of $I_{\geq 2}$, and  $u,w\in X$. We have that for the pair $\{u,w\}$, there are three marked vertices $z_1,z_2,z_3\in I_{\geq 2}$ that are adjacent to both $u$ and $w$. Since $z_1,z_2,z_3$ are marked, $z_1,z_2,z_3\in V(H')$. In the same way as above,  either $A$ or $B$ contains at least two of the vertices $z_1,z_2,z_3$ and this implies that either $u$ or $v$ has at least two neighbors in the opposite set of the partition. This contradicts the assumption that $M$ is a matching cut of $H'$. Since for every $v\in V(G')\setminus V(H')$, either $v$ has all its neighbors in $A$ or $v$ has all its neighbors in $B$, $M$ is a cut of $G'$, that is, $M$ is a matching cut of $G'$.
\end{proof}

To see the relations between matching cuts of $G$ and $H$, we define a special equivalence relation for the subsets of edges of $G$. For a vertex $x\in X$, let $L_x=\{xy\in E(G)\mid y\in I_1 \}$, that is, $L_x$ is the set of pendant edges of $G$ with exactly one end-vertex in the vertex cover.  Observe that if $L_x\neq \emptyset$, then there is $\ell_x\in L_x$ such that $\ell_x\in E(H)$, because for every $x\in X$, a neighbor in $I_1$ is marked if it exists. We define $L=\bigcup_{x\in X}L_x$. 
Notice that each matching cut of $G$ contains at most one edge of every $L_x$.
We say that two sets of edges $M_1$ and $M_2$ are \emph{equivalent} if $M_1\setminus L=M_2\setminus L$ and 
for every $x\in X$, $|M_1\cap L_x|=|M_2\cap L_x|$.  It is straightforward to verify that the introduced relation is indeed an equivalence relation.
It is also easy to see that if $M$ is a matching cut of $G$, then every $M'\subseteq E(G)$ equivalent to $M$ is a matching cut.
 We show the following lemma.

\begin{lemma}\label{lem:equiv}
A set of edges $M\subseteq E(G)$ is a  matching cut (minimal or maximal matching cut, respectively)  of $G$ if and only if $H$ has a matching cut (minimal or maximal matching cut, respectively) $M'$   equivalent to $M$. 
\end{lemma}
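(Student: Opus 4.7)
I plan to prove each direction of the ``matching cut'' equivalence separately and then lift both to the minimal and maximal cases via a canonical representative map $\phi$.

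For the $(\Rightarrow)$ direction, given a matching cut $M = E_G(A,B)$ of $G$, I will define $M' := (M \setminus L) \cup \{\ell_x : x \in X,\; |M \cap L_x|=1\}$ and show that $M'$ is a matching cut of $H$ equivalent to $M$. I first observe that the only edges of $G$ incident to $I_0 \cup I_1$ are those in $L$, since $I_0$ is edgeless and $I_1$-vertices are pendants to $X$; hence $M \setminus L = M \cap E(G')$. If $M \setminus L \neq \emptyset$, then both $A \cap V(G')$ and $B \cap V(G')$ are nonempty (otherwise $M \subseteq L$), so $M \setminus L$ is a matching cut of $G'$ and Lemma~\ref{lem:kernel-vc} yields $M \setminus L \subseteq E(H') \subseteq E(H)$. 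To realise $M'$ as an edge cut of $H$, I will lift $\{A,B\}$ to a partition $\{A',B'\}$ of $V(H)$ as follows: keep the original side for every vertex in $X$ and every marked vertex of $I_{\geq 2}$; place a marked pendant $y_x$ opposite to $x$ precisely when $|M \cap L_x|=1$ (and on the same side otherwise); and place the (at most one) marked $I_0$-vertex freely, swapping it if needed so that both $A'$ and $B'$ remain nonempty in the degenerate case $M = \emptyset$. A case analysis on edge types confirms $M' = E_H(A',B')$ and that $M'$ is a matching, and equivalence of $M'$ with $M$ is immediate from the definition of $M'$.

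For the $(\Leftarrow)$ direction, given a matching cut $M' = E_H(A',B')$ of $H$ and $M \subseteq E(G)$ equivalent to $M'$, I will build a partition $\{A,B\}$ of $V(G)$ witnessing $M = E_G(A,B)$. The subtlety is that for $x$ with $|M' \cap L_x| = 1 = |M \cap L_x|$, the actual edge in $M \cap L_x$ may be some $xy$ with $y \neq y_x$; in that case I will move $y_x$ back to the side of $x$ in the lift (which costs nothing, since $y_x$ has only the edge $\ell_x$ in $G$) and then place the unmarked pendant $y$ on the opposite side of $x$. All other unmarked pendants in $I_1$ inherit the side of their neighbour. For an unmarked $v \in I_{\geq 2}$, I will place $v$ on the common side of its $X$-neighbours; the key step is to argue that these neighbours do all lie on the same side in $\{A',B'\}$, since otherwise, for some pair $\{u,w\} \subseteq N_G(v) \cap X$ with $u \in A'$ and $w \in B'$, two of the three vertices marked for $\{u,w\}$ would fall on the same side by pigeonhole and would create a vertex of degree two in $M'$, contradicting the matching property. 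A final edge-by-edge verification then shows $M = E_G(A,B)$ and that $M$ is a matching.

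To pass to minimal and maximal matching cuts, I will exploit the canonical map $\phi(M) := (M \setminus L) \cup \{\ell_x : |M \cap L_x|=1\}$ on equivalence classes. Because $L \cap E(H) = \{\ell_x : L_x \neq \emptyset\}$, $\phi(M)$ is the unique matching cut of $H$ equivalent to $M$, and $\phi$ agrees with $M$ on the $\setminus L$-part while being determined on the $L$-part by the counts $|M \cap L_x|$. Hence $M^{*} \subsetneq M$ in $G$ implies $\phi(M^{*}) \subsetneq \phi(M)$ in $H$, and conversely every strict inclusion $M'' \subsetneq \phi(M)$ in $H$ can be lifted to a strict inclusion $M^{*} \subsetneq M$ in $G$ by choosing, for each $x$ with $|M'' \cap L_x|=1$, the already-selected edge of $M \cap L_x$. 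Combined with the two preceding directions, a contrapositive argument then yields both the minimality and the maximality equivalence. The main obstacle I anticipate is the partition bookkeeping in the backward direction and the corner case $M = \emptyset$, where keeping both $A'$ and $B'$ nonempty depends on the marked $I_0$-vertex and on the possibility of swapping its side without changing any edge of $M'$.
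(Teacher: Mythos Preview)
Your proposal is correct and follows essentially the same approach as the paper: define the canonical representative $\phi(M)=(M\setminus L)\cup\{\ell_x:|M\cap L_x|=1\}$, use Lemma~\ref{lem:kernel-vc} for the $M\setminus L$ part, and observe that $\phi$ is inclusion-preserving to handle the minimal and maximal cases. The main difference is stylistic: the paper works more economically by invoking Lemma~\ref{lem:kernel-vc} as a black box in both directions and by appealing to the pre-stated fact that equivalence preserves the matching-cut property, whereas you build the witnessing partitions of $V(H)$ and $V(G)$ explicitly and, in the backward direction, re-derive the pigeonhole argument for unmarked $I_{\ge 2}$-vertices that already lives inside the proof of Lemma~\ref{lem:kernel-vc}; your treatment of the $M=\emptyset$ corner case via the marked $I_0$-vertex is more careful than the paper's one-line ``by the construction, $H$ is disconnected as well'', and it is sound (when $I_0=\emptyset$ the restriction of $\{A,B\}$ to $V(H)$ already has both sides meeting $X$).
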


\begin{proof}
We prove the lemma for matching cuts.  

For the forward direction, let $M$ be a matching  cut of $G$. We show that there is a matching cut $M'$ of $H$ that is a matching cut of $G$ equivalent to $M$. 
If $M=\emptyset$, then $G$ is disconnected. Notice that, by the construction, $H$ is disconnected as well. Hence, $M'=M$ is a matching cut of $H$. Clearly, $M'$ is equivalent to $M$.
Assume that $M\neq\emptyset$.
We construct $M'$ from $M$ by the following operation: for every $x\in X$ such that $M\cap L_x\neq\emptyset$, replace the unique edge of $M$ in $L_x$ by $\ell_x$. By the definition, $M'$ is a matching cut that is equivalent to $M$. We show that $M'$ is a matching cut of $H$. Let $M_1=L\cap M'$ and $M_2=M'\setminus M_1$. We have that either $M_2=\emptyset$ or $M_2$ is a nonempty matching cut of $G'$.  If $M_2=\emptyset$, then it  is straightforward to see that $M'=M_1$ is a matching cut of $H$. 
If $M_2\neq\emptyset$, then by Lemma~\ref{lem:kernel-vc}, $M_2$ is a matching cut of $H'$. This implies that $M'$ is a matching cut of $H$.

For the the opposite direction, assume that $M'$ is a matching cut of $H$. If $M'=\emptyset$, then $H$ is disconnected. By construction, $G$ is disconnected as well and $M'$ is a matching cut of $G$. Let $M'\neq\emptyset$. Let also $M_1=M'\cap L$ and $M_2=M'\setminus M_1$. If $M_2=\emptyset$, then $M'=M_1$ is a matching cut of $G$. If $M_2\neq\emptyset$, then $M_2$ is a matching cut of $H'$. By Lemma~\ref{lem:kernel-vc}, $M_2$ is a matching cut of $G'$. Then $M'$ is a matching cut of $G$. It remains to notice that for every $M\subseteq E(G)$ equivalent to $M'$, $M$ is a matching cut of $G$. 

For minimal and maximal matching cuts, the arguments are the same. It is sufficient to note that if $M$ and $\hat{M}$ are matching cuts of $G$ such that $M\subset\hat{M}$, then their equivalent matching cuts $M'$ and $\hat{M}'$ of $H$ constructed in the proof for the forward direction satisfy the same inclusion property $M'\subset \hat{M}'$. 
\end{proof}

We use Lemma~\ref{lem:equiv} to obtain our kernelization results. For \probMinMC, we show that the problem admits a fully-polynomial enumeration kernel, and we prove that   \probMaxMC and \probMC have 
polynomial-delay enumeration kernels.

\begin{theorem}\label{thm:vc-kern}
\probMinMC admits a \fp{} enumeration kernel and \probMC and \probMaxMC admit 
 \pd{} enumeration kernels with $\Oh(k^2)$
vertices when parameterized by the vertex cover number $k$ of the input graph. 
\end{theorem}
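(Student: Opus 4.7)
The kernelization algorithm is already provided: output $H = G[X \cup Z]$, which has $|V(H)| \le 6k^2 + k + 1 \in O(k^2)$. What remains is to design a solution-lifting algorithm for each of the three variants.

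The common solution-lifting algorithm takes $M' \in \sol(H)$ and outputs the equivalence class $[M'] = \{M \subseteq E(G) : M \equiv M'\}$. This is well-defined and yields a partition of $\sol(G)$: by construction $L \cap E(H)$ contains only the chosen representatives $\ell_x$, so two distinct matching cuts of $H$ cannot be equivalent, and Lemma~\ref{lem:equiv} then gives that $\{[M'] : M' \in \sol(H)\}$ partitions $\sol(G)$. The class $[M']$ admits a simple enumeration: writing $M' \cap L = \{\ell_{x_1}, \dots, \ell_{x_t}\}$, we have $[M'] = \{(M' \setminus L) \cup \{e_1, \dots, e_t\} : e_i \in L_{x_i}\}$, and iterating through the Cartesian product $L_{x_1} \times \cdots \times L_{x_t}$ lists the elements of $[M']$ with polynomial delay. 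Before invoking this for \probMinMC and \probMaxMC, I would verify the minimality/maximality transfer implicit in Lemma~\ref{lem:equiv}: any proper sub- or super-matching-cut of some $M \in [M']$ in $G$ lifts, via the canonical construction in the proof of Lemma~\ref{lem:equiv}, to a proper sub- or super-matching-cut of $M'$ in $H$, contradicting the corresponding extremality of $M'$. These ingredients together give polynomial-delay enumeration kernels for \probMC and \probMaxMC.

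To strengthen the result for \probMinMC to a \fp{} enumeration kernel, the key additional claim is that $|[M']|$ is polynomially bounded whenever $M'$ is a minimal matching cut of $H$. Concretely: if $M'$ is minimal and $\ell_x = xy \in M'$ for some $x$, then $M' = \{\ell_x\}$. Indeed, suppose otherwise, letting $\{A,B\}$ be the partition of $V(H)$ corresponding to $M'$ with $x \in A$ and $y \in B$. Since $M' \setminus \{\ell_x\}$ is nonempty, $B \setminus \{y\}$ must also be nonempty (otherwise $E_H(A,B) = \{\ell_x\}$); moving $y$ to the $A$-side then produces a new partition whose edge cut is exactly $M' \setminus \{\ell_x\}$, because $y$'s only neighbor in $H$ is $x$. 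This set remains a matching and is a strict subset of $M'$, contradicting minimality. Hence $|[M']| \le \max_x |L_x| \le n$, so the solution-lifting algorithm for \probMinMC terminates in polynomial total time per class.

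The main obstacle is the last argument — bounding the size of each equivalence class for minimal matching cuts — which crucially uses that a pendant can always be ``flipped back'' to the main side without breaking the matching property. The rest is routine bookkeeping on top of the already-established Lemma~\ref{lem:equiv}, plus the observation that $H$ retains at most one representative $\ell_x$ per family $L_x$, which makes the partition of $\sol(G)$ into classes $[M']$ unambiguous.
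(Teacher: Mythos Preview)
Your proposal is correct and follows essentially the same approach as the paper: the same kernel $H$, the same solution-lifting via enumerating the equivalence class $[M']$ by iterating over $\prod_i L_{x_i}$, and the same key observation that a minimal matching cut of $H$ meeting $L$ must be a singleton $\{\ell_x\}$, which bounds $|[M']|\le n$ for \probMinMC. The only cosmetic difference is that the paper derives $M'=\{\ell_x\}$ by noting that $\{\ell_x\}$ itself is already a matching cut (so minimality forces equality), whereas you obtain the same conclusion by flipping the pendant vertex~$y$ to exhibit the smaller cut $M'\setminus\{\ell_x\}$; both arguments are equally valid.
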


\begin{proof}
Let $G$ be a graph with $\vc(G)=k$. If $G=K_1$, then the kernelization algorithm returns $H=G_1$ and the solution-lifting algorithm is trivial as $G$ has no matching cuts. Assume that~$G$ has at least $2$~vertices. If $G$ has no edges, then the empty set is the unique matching cut of $G$. Then the kernelization algorithm returns $H=2K_1$, and the solution-lifting algorithm outputs the empty set for the empty matching cut of $H$. Thus, we can assume without loss of generality that $G$ has at least one edge and $k\geq 1$.

We use the same basic kernelization algorithm that constructs $H$ as described above and output $H$ for all the problems. 
Recall that $|V(H)|\leq 6k^2+k+1$. 
The kernels differ only in the solution-lifting algorithms. These algorithms exploit  Lemma~\ref{lem:equiv} and for every matching cut (minimal or maximal matching cut, respectively) $M$ of $H$, they list the equivalent matching cuts of $G$. Lemma~\ref{lem:equiv} guarantees that the families of matching cuts (minimal or maximal matching cuts, respectively) constructed for every matching cut of $H$ compose the partition of the sets of  matching cuts (minimal or maximal matching cuts, respectively) of $G$.  This is exactly the property that  is required by the definition of  
a \fp{} (\pd{})
enumeration kernel. To describe the algorithm, we use the notation defined in this section. 
 
First, we consider \probMinMC. Let $M$ be a minimal matching cut of $H$. If $M\cap L=\emptyset$, then $M$ is the unique matching cut of $G$ that is equivalent to $M$, and our algorithm outputs~$M$. Suppose that $M\cap L\neq\emptyset$. Then by the minimality of $M$, $M=\{\ell_x\}$ for some $x\in X$, because every edge of $L$ is a matching cut. Then
the sets $\{e\}$ for every $e\in L_x$ are exactly the matching cuts equivalent to~$M$. Clearly, we have at most $n$ such matching cuts and they can be listed in linear time. This implies that condition (ii) of the definition of a \fp{} enumeration kernel is fulfilled. Thus, \probMinMC has a \fp{} enumeration kernel with at most $6k^2+k+1$  vertices.

Next, we consider \probMaxMC and \probMC. The solution-lifting algorithms for these problems are the same. Let $M$ be a (maximal) matching cut of $H$. Let also $M_1=M\cap L$ and $M_2=M\setminus M_1$. If $M_1=\emptyset$, then  $M$ is the unique matching cut of $G$ that is equivalent to $M$, and our algorithm outputs $M$. Assume from now that $M_1\neq \emptyset$. Then there is $Y\subseteq X$ such that $M_1=\{\ell_x\mid x\in Y\}$.  
We use the recursive algorithm 
\textsc{Enum Equivalent} (see Algorithm~\ref{alg:max}) that takes as an input a matching $S$ of $G$ and $W\subseteq Y$ and outputs the equivalent matching cuts $M'$ of $G$ such that 
(i) $S\subseteq M'$, (ii) $M'$ is equivalent to $M$,  and (iii) the constructed matchings $M'$  differ only by some edges of the sets $L_x$ for $x\in W$. Initially, $S=M_2$ and $W=Y$. 

\begin{algorithm}[ht]
\caption{$\textsc{Enum Equivalent}(S,W)$}\label{alg:max}
\If{$W=\emptyset$}{\Output{$S$}}
\ElseIf{$S\neq\emptyset$}
{
select arbitrary $x\in W$\;
\ForEach{$e\in L_x$}
{$\textsc{Enum Equivalent}(S\cup\{e\}, W\setminus\{x\})$
}
}
\end{algorithm}

To enumerate the matching cuts equivalent to $M$, we call $\textsc{Enum Equivalent}(M_2,Y)$. We claim that 
 $\textsc{Enum Equivalent}(M_2,Y)$ enumerates the matching cuts of $G$ that are equivalent to $M$ with $\Oh(n)$ delay. 

By the definition of the equivalence and Lemma~\ref{lem:equiv}, every matching cut $M'$ of $G$ that is equivalent to $M$ can be written as $M'=M_2\cup\{e_x\mid x\in Y\}$, where $e_x$ is an edge of $L_x$ for $x\in Y$.  Then to see the correctness of \textsc{Enum Equivalent}, observe the following.
If $W\neq\emptyset$, then the algorithm picks a vertex $x\in W$. Then for every edge $e\in L_x$, it enumerates the matching cuts containing $S$ and $e$. This means that our algorithm is, in fact, a standard \emph{backtracking} enumeration algorithm (see~\cite{Marino15}) and immediately implies that the algorithm lists all the required matching cuts exactly once. Since the depth of the recursion is at most~$n$ and the algorithm always outputs a matching cut for each leaf of the search tree, the delay is~$\Oh(n)$.
This completes the proof of the \pd{} enumeration kernel for \probMaxMC and \probMC.

To conclude the proof of the theorem, let us remark that, formally, the solution-lifting algorithms described in the proof require $X$. However, in fact, we use only sets $L_x$ that can be computed in polynomial time for given $G$ and $H$.  
\end{proof}

Notice that Theorem~\ref{thm:vc-kern} is tight in the sense that 
\probMaxMC and \probMC do not admit \fp{} enumeration kernels for the parameterization by the vertex cover number. To see this, let $k$ be a positive integer and consider the $n$-vertex graph $G$, where $n>k$ is divisible by~$k$, that is the union of $k$ stars $K_{1,p}$ for $p=n/k-1$.  Clearly, $\vc(G)=k$. We observe that $G$ has $p^k=(n/k-1)^k$ maximal matching cut that are formed by picking one edge from each of the $k$ stars. Similarly, $G$ has $(p+1)^k=(n/k)^k$ matching cuts obtained by picking at most one edge from each star. In both cases,  this means that the (maximal) matching cuts cannot be enumerated by an \classFPT algorithm. By Theorem~\ref{obs:natural}, this rules out the existence of a \fp{} enumeration kernel.

By Theorems~\ref{thm:vc-kern} and~\ref{obs:natural}, we have that the minimal matching cuts of a graph $G$ can be enumerated in $2^{\Oh(\vc(G)^2)}\cdot n^{\Oh(1)}$ time by the applying the enumeration algorithm from Theorem~\ref{thm:general} for  $H$. Similarly, the (maximal) matching cuts can be listed with $2^{\Oh(\vc(G)^2)}\cdot n^{\Oh(1)}$ delay. We show that this running time can be improved and the dependence on the vertex cover number can be made single exponential.

\begin{theorem}\label{thm:upper-vc}
The minimal matching cuts of an $n$-vertex graph $G$ can be enumerated in $2^{\Oh(\vc(G))}\cdot n^{\Oh(1)}$ time, and the (maximal) matching cuts of $G$ can be enumerated with 
$2^{\Oh(\vc(G))}\cdot n^{\Oh(1)}$ delay.
\end{theorem}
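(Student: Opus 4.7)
The plan is to replace the blind application of the $\Oh^*(F(n))$ enumeration of Theorem~\ref{thm:general} on the kernel (which would only give $2^{\Oh(k^2)}$) by a direct algorithm that iterates over the $2^{|X|}\le 4^k$ ways to partition a $(2k)$-vertex cover. Compute a vertex cover $X$ of $G$ of size at most $2k$ in polynomial time and let $I=V(G)\setminus X$. Every matching cut $M=E_G(A,B)$ induces a $2$-partition $(X_A,X_B)=(X\cap A,X\cap B)$ of $X$, so the algorithm iterates over all such partitions and, for each of them, enumerates the matching cuts of $G$ whose restriction to $X$ is $(X_A,X_B)$.

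Fix $(X_A,X_B)$. First, verify that each $x\in X$ has at most one neighbor on the opposite side within $X$; otherwise discard the partition. For each $v\in I$, set $a_v=|N(v)\cap X_A|$ and $b_v=|N(v)\cap X_B|$. If both $a_v,b_v\ge 2$, discard; if $a_v\ge 2$ and $b_v\le 1$ (resp.\ $b_v\ge 2$ and $a_v\le 1$), then $v$ is \emph{forced} into $A$ (resp.\ $B$), because placing it on the opposite side would create at least two cross edges at $v$; all remaining vertices of $I$ are \emph{flexible} with $a_v,b_v\le 1$. Apply the forced placements, update the cross-edge count of each $x\in X$, and discard the partition if any $x$ already has two cross neighbors. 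The remaining task is to assign each flexible vertex: a flexible $v$ with $a_v=b_v=1$ contributes exactly one cross edge either at $x_A^v\in X_A$ or at $x_B^v\in X_B$, while a flexible $v$ with exactly one $X$-neighbor~$x$ contributes a cross edge to~$x$ iff $v$ is placed on the opposite side, and an isolated $v$ contributes no edges. Valid placements are enumerated by a backtracking procedure whose branching on each flexible $v$ is filtered by a polynomial-time extensibility oracle: the oracle checks whether the mandatory cross-edge demands of the remaining $a_v=b_v=1$ vertices can still be saturated against the residual budget $r(x)\in\{0,1\}$ of the $x\in X$, which reduces to a bipartite matching computation. This gives polynomial delay per output inside a single $X$-partition, so the outer $2^{\Oh(k)}$ loop yields delay $2^{\Oh(k)}\cdot n^{\Oh(1)}$ for \probMC; duplicates from the swap $(X_A,X_B)\leftrightarrow(X_B,X_A)$ are avoided by canonicalization, and isolated flexible vertices are pinned to a fixed side since they do not influence the enumerated edge set.

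For \probMaxMC the same outer loop is enriched by a polynomial-time maximality test: once a partition $(A,B)$ is produced, a cut is enlargeable iff some vertex can be switched to the opposite side so that it becomes incident to a cross edge at a not-yet-saturated $x\in X$ without losing any current cross edge, which is checkable locally at each vertex; this yields the same delay bound. For \probMinMC we use the characterization that in a connected graph a matching cut $M$ is minimal iff both $G[A]$ and $G[B]$ are connected: this forces every flexible vertex with exactly one $X$-neighbor to sit on the same side as that neighbor (otherwise it is isolated in the opposite side and disconnects it), so the only truly flexible vertices are those with $a_v=b_v=1$. Each such vertex consumes one unit of cross-edge budget, and since the total budget is at most $|X|\le 2k$, there are at most $2k$ such vertices, and hence at most $2^{\Oh(k)}$ completions per $X$-partition and at most $2^{\Oh(k)}$ minimal matching cuts overall. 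Each of them is produced in polynomial time, for a total running time of $2^{\Oh(k)}\cdot n^{\Oh(1)}$. The main obstacle is to guarantee polynomial delay of the inner enumeration, which rests on the correctness and efficiency of the bipartite-matching extensibility oracle and on a careful handling of boundary cases (isolated vertices of $G$, trivial sides, disconnected input, and an explicit connectivity check for $G[X_A\cup I_A^{\text{forced}}]$ and $G[X_B\cup I_B^{\text{forced}}]$ in the minimal case) so that no branch of the backtracking tree wastes super-polynomial time between two consecutive outputs.
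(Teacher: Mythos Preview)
Your approach for \probMinMC{} and \probMC{} is essentially sound and close in spirit to the paper's: both enumerate the $2^{\Oh(k)}$ bipartitions of a size-$2k$ vertex cover $X$ and then extend to the independent side. The paper, however, does not work on~$G$ directly. It first passes to the kernel~$H$ of Theorem~\ref{thm:vc-kern}, proves that $H$ has only $2^{\Oh(k)}$ matching cuts (via a branching argument whose depth is bounded because every branch saturates a vertex of~$X$), and then simply invokes the already-established solution-lifting algorithms. This spares the paper the bipartite-matching extensibility oracle, the canonicalisation bookkeeping, and the connectivity characterisation you use; the kernel absorbs all of that. Your route is more self-contained but also more fragile, as the next point shows.

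For \probMaxMC{} your argument has a genuine error. Your maximality test declares $M=E(A,B)$ enlargeable only when a \emph{single} vertex can be switched without losing a cross edge while gaining one at an unsaturated $x$. Unfolding this, the switched vertex must be unsaturated and have all its neighbours on its own side, hence have degree at most one. On $C_8=v_1\cdots v_8$ take $A=\{v_2,v_3,v_4,v_5\}$, $B=\{v_6,v_7,v_8,v_1\}$, so $M=\{v_1v_2,v_5v_6\}$. Every vertex has degree~$2$, so your test reports $M$ maximal; but $M\subsetneq\{v_1v_2,v_3v_4,v_5v_6,v_7v_8\}$, which \emph{is} a matching cut (with $A'=\{v_2,v_3,v_6,v_7\}$). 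Thus your algorithm would output non-maximal cuts. Even if you repaired the test, the ``enumerate all matching cuts and filter'' strategy does not yield the stated delay: within a fixed $X$-partition the number of non-maximal cuts between two consecutive maximal ones is not bounded by anything you prove, and you give no argument that it stays within $2^{\Oh(k)}\cdot n^{\Oh(1)}$.

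The paper sidesteps both issues because maximality is handled on the kernel side: $H$ has at most $2^{\Oh(k)}$ matching cuts in total, hence at most $2^{\Oh(k)}$ maximal ones, and Lemma~\ref{lem:equiv} together with the solution-lifting algorithm of Theorem~\ref{thm:vc-kern} already produces, for each maximal matching cut of~$H$, exactly the equivalent maximal matching cuts of~$G$ with polynomial delay. If you want to keep your direct-on-$G$ approach, you must either prove a correct global maximality criterion or, more simply, mimic the paper and route the maximal case through the kernel.
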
  
  
\begin{proof} 
Recall that our kernelization algorithm that is the same for all the problems, given a graph $G$ with $\vc(G)=k$, constructs the graph~$H$ which gives 
a \fp{} (\pd{})
enumeration kernel, by Theorem~\ref{thm:vc-kern}. By Definition~\ref{def:enum-kernel},  it is thus sufficient to show that~$H$ has $2^{\Oh(k)}$ matching cuts that can be listed in  $2^{\Oh(k)}$ time.  We do it using the structure of $H$ following the notation introduced in the beginning of the section. 

Notice that every  matching cut of $H$ can be written as $M=M_1\cup M_2$, where $M_1=\{\ell_x\mid x\in Y\}$ for some $Y\subseteq X$ and $M_2$ is either empty or a nonempty matching cut of $H'=H-(I_0\cup I_1)$. Observe that because $|X|\leq 2k$, $H$ has at most $2^{2k}$ sets of edges of the form $\{\ell_x\mid x\in Y\}$ for some $Y\subseteq X$ and all such sets can be listed in $2^{\Oh(k)}$ time. Hence, it is sufficient to show that $H'$ has $2^{\Oh(k)}$ matching cuts that can be enumerated in  $2^{\Oh(k)}$ time.

Let $M$ be a nonempty matching cut of $H'$ and let $\{U,W\}$ be a partition of $V(H')$ such that $M=E(U,W)$. Recall that each vertex $v\in V(H')$ that is not a vertex of $X$ belongs to $I_{\geq 2}$, that is, has at least two neighbors in $X$. Therefore, $U\cap X$ and $W\cap X$ are nonempty.   
Since the empty matching cut can be listed separately  if it exists, it is sufficient to enumerate matching cuts of the form $M=E(U,W)$ with nonempty $U\cap X$ and $W\cap X$. For this we consider all partitions $\{A,B\}$ of $X$, and for each partition, enumerate matching cuts $M=E(U,W)$, where $A\subseteq U$ and $B\subseteq W$. Since $|X|\leq 2k$, there are at most $2^{2k}$ partitions $\{A,B\}$ of $X$ and they can be listed in $2^{\Oh(k)}\cdot n^{\Oh(1)}$ time. 

Assume from now that a partition $\{A,B\}$ of $X$ is given. We construct a recursive branching algorithm \textsc{Enumerate MC}$(A',B')$, whose  input consists of two disjoint sets $A'$ and $B'$ such that $A\subseteq A'$ and $B\subseteq B'$, and the algorithm outputs all  matching cuts of the form $M=E(U,W)$ with $A\subseteq U$ and $B\subseteq W$.  It is convenient for us to write down the algorithm as a series of steps and \emph{reduction} and \emph{branching} rules as it is  common for exact algorithms (see, e.g.,~\cite{FominK10}). The algorithm maintains the set $S$ of the end-vertices of the edges of $E(A',B')$ in $X$, and we implicitly assume that $S$ is recomputed at each step if necessary. We say that $S$ is the set of \emph{saturated} vertices of $X$. Initially, $S$ is the set of end-vertices of $E(A,B)$.

Clearly, if $M=E(A',B')$ is not a matching, then $M$ cannot be extended to a matching cut and we can stop considering $\{A',B'\}$.

\begin{step}\label{st:one}
If $E(A',B')$ is not a matching, then quit. 
\end{step}

If $\{A',B'\}$ is a partition of $V(H')$, then the algorithm finishes its work. Note that since the algorithm did not quit in the previous step, $E(A',B')$ is a matching.

\begin{step}
If $\{A',B'\}$ is a partition of $V(H')$ and $M=E(A',B')$ is a matching cut, then output $M$ and quit.
\end{step}

From now, we can assume that there is $v\in V(H')\setminus (A'\cup B')$. Recall that by the construction of $H'$, $v\in I_{\geq 2}$, that is, $v$ has at least two neighbors in $X$. 

Clearly, if $v\in V(H')\setminus (A'\cup B')$ has at least two neighbors in $A'$, then $v\in U$ for every matching cut $E(U,W)$ with $A'\subseteq U$. This gives us the following reduction rule.

\begin{reduction}
If there is $v\in V(H')\setminus (A'\cup B')$ such that $|N_{H'}(v)\cap A'|\geq 2$ ($|N_{H'}(v)\cap B'|\geq 2$, respectively), then call $\textsc{Enumerate MC}(A'\cup\{v\},B')$ 
($\textsc{Enumerate MC}(A',B'\cup\{v\})$, respectively).
\end{reduction}

If $v$ has at least two neighbors in both $A'$ and $B'$, we place $v$ in $A'$; note that we stop in Step~\ref{st:one} afterwards.
From now, we assume that the rule cannot be applied, that is, every vertex $v\in V(H')\setminus (A'\cup B')$ has exactly one neighbor $x$ in $A'$ and exactly one neighbor $y$ in $B'$. Notice that either $vx$ or $vy$ should be in a (potential) matching cut. This gives the following rules.

\begin{reduction}
If there is $v\in V(H')\setminus (A'\cup B')$ with neighbors $x\in A'$ and $y\in B'$ such that $x\in S$ ($y\in S$), then call  $\textsc{Enumerate MC}(A'\cup\{v\},B')$ 
($\textsc{Enumerate MC}(A',B'\cup\{v\})$, respectively).
\end{reduction} 

If both neighbors of $v$ are saturated, we place $v$ in $A'$; notice that we stop in Step~\ref{st:one} in  $\textsc{Enumerate MC}(A'\cup\{v\},B')$. 
The rule is safe, because every saturated vertex of $X$ can be adjacent to only one edge of a matching cut. From now, we can assume that the neighbors of~$v$ are not saturated. In this case, we branch on two possibilities for $v$.

\begin{branch}\label{br:one}
If there is $v\in V(H')\setminus (A'\cup B')$ with neighbors $x\in A'$ and $y\in B'$ such that $x,y\notin S$, then call  
\begin{itemize}
\item $\textsc{Enumerate MC}(A'\cup\{v\},B')$, 
\item $\textsc{Enumerate MC}(A',B'\cup\{v\})$.
\end{itemize}
\end{branch}
This finishes the description of the algorithm; its correctness  follows directly from the discussion of the reduction and branching rules. To evaluate the running time, note that for every recursive call of \textsc{Enumerate MC} in Branching Rule~\ref{br:one}, we increase $S$ by including either $x$ or $y$ into the set of saturated vertices of $X$. Since $|S|\leq |X|\leq 2k$, the depth of the search tree is upper bounded by $2k$. Because we have two branches in Branching Rule~\ref{br:one}, the number of leaves in the search tree is at most $2^{2k}$. Observing that all the steps and rules can be executed in polynomial time, we obtain that the total running time is $2^{\Oh(k)}$ using the standard analysis of the running time of recursive branching algorithm (see~\cite{FominK10}).  

Since the search tree has at most $2^{2k}$ leaves, the number of matching cuts produced by the algorithm from the given partition $\{A,B\}$ of $X$ is at most $2^{2k}$.  Because the number of partitions is at most $2^{2k}$, we have that $H'$ has at most $2^{\Oh(k)}$ matching cuts. Then the number of matching cuts of $H$ is $2^{\Oh(k)}$.
Combining \textsc{Enumerate MC} with the previous steps for the enumeration of the matching cuts, we conclude that the matching cuts  of $H$ can be listed in 
 $2^{\Oh(k)}$ time.
\end{proof}

 We complement Theorem~\ref{thm:upper-vc} by the following lower bound that shows that exponential in $\vc(G)$ time for \probMinMC is unavoidable.
 
 \begin{proposition}\label{prop:lower-vc}
There is an infinite family of graphs whose number of minimal matching cuts is~$\Omega(2^{\vc(G)})$.
\end{proposition}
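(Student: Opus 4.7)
The plan is to exhibit an infinite family $(G_k)_{k \geq 2}$ of connected graphs with $\vc(G_k) \leq k+2$ and at least $2^k$ minimal matching cuts each, which immediately yields $\Omega(2^{\vc(G_k)})$. The graph $G_k$ has two designated ``hub'' vertices $s$ and $t$, and for each $i \in \{1,\dots,k\}$ an internally disjoint $(s,t)$-path of length four, namely $s\,a_i\,b_i\,c_i\,t$. Thus $G_k$ has $3k+2$ vertices and $4k$ edges, and $\{s,t,b_1,\dots,b_k\}$ is plainly a vertex cover (every edge has an endpoint in $\{s,t\}$ or equals $a_ib_i$ or $b_ic_i$), so $\vc(G_k)\leq k+2$.

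For each subset $S\subseteq\{1,\dots,k\}$ I would define
\[
M_S \;:=\; \{a_ib_i : i\in S\} \;\cup\; \{b_ic_i : i\notin S\},
\]
a matching of exactly $k$ edges. The partition $\{A_S,B_S\}$ with $A_S:=\{s\}\cup\{a_1,\dots,a_k\}\cup\{b_i : i\notin S\}$ and $B_S:=V(G_k)\setminus A_S$ satisfies $E_{G_k}(A_S,B_S)=M_S$ by a one-line check per path: on path $i$ the unique cut edge is $a_ib_i$ if $i\in S$ and $b_ic_i$ otherwise. Distinct subsets yield distinct $M_S$, so this exhibits $2^k$ matching cuts of $G_k$.

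The heart of the argument is that each $M_S$ is minimal. Assuming for contradiction that $M_S\setminus\{e\}$ is a matching cut witnessed by some partition $\{A',B'\}$, with $e$ lying on path $j$, I would observe that no edge of path $j$ lies in $M_S\setminus\{e\}$; hence all of $s,a_j,b_j,c_j,t$ lie on the same side, say $A'$. In particular $s,t\in A'$. Now fix any $i\neq j$: exactly one of $a_ib_i,b_ic_i$ is in $M_S\setminus\{e\}$, while the other three edges of path $i$ are not. Propagating side-assignments along path $i$ starting from $s,t\in A'$ then forces a contradiction --- the edge of path $i$ adjacent to whichever of $\{s,t\}$ is ``cut off'' (either $sa_i$ or $c_it$) must itself be a cut edge, yet it does not belong to $M_S$. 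Since $k\geq 2$ guarantees the existence of some $i\neq j$, no such partition exists, so $M_S$ is minimal.

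Combining the two bounds, $G_k$ has at least $2^k\geq \tfrac{1}{4}\cdot 2^{\vc(G_k)}$ minimal matching cuts, giving the desired $\Omega(2^{\vc(G)})$ lower bound and proving the proposition. The only nontrivial step is the propagation argument for minimality; the vertex-cover bound and the matching-cut verification are both immediate from the construction.
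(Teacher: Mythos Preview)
Your construction is exactly the paper's: two hubs $s,t$ joined by $k$ internally disjoint length-$4$ paths, the vertex cover $\{s,t,b_1,\dots,b_k\}$, and the matching cuts $M_S$ picking one of the two middle edges on each path. So the approach is the same.

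There is, however, a small but real gap in your minimality argument. You argue by assuming that $M_S\setminus\{e\}$ is a matching cut and derive a contradiction. In general, showing that no single-edge deletion yields a matching cut is \emph{not} sufficient for minimality. For instance, take two disjoint $4$-cycles joined by a single bridge; if $M$ consists of two opposite edges from each square, then $M$ is a matching cut, the two proper subsets obtained by dropping one square's pair are matching cuts, yet no $M\setminus\{e\}$ is an edge cut (any single deletion leaves an odd intersection with one of the $4$-cycles). So the implication ``no $M\setminus\{e\}$ is a matching cut $\Rightarrow$ $M$ is minimal'' fails.

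The fix in your setting is immediate and is precisely what the paper does: argue against an \emph{arbitrary} proper subset $M'\subsetneq M_S$. Since $M_S$ has exactly one edge on each path and $|M'|<k$, some path $j$ carries no $M'$-edge; hence $s$ and $t$ lie on the same side of the witnessing partition. If $M'=\emptyset$ we are done because $G_k$ is connected; otherwise pick a path $i$ with an $M'$-edge and run your propagation to obtain the contradiction. This is the same propagation you already wrote, just started from the correct hypothesis.
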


\begin{proof}
  Consider a graph consisting of $k$ disjoint copies of $P_3$s ~$u_iv_iw_i$,~$1\le i \le k$, and two
  additional vertices~$s$ and~$t$ such that~$s$ is adjacent to each~$u_i$,~$1\le i\le k$,
  and~$t$ is adjacent to each~$w_i$,~$1\le i\le k$. We call~$s u_i v_i w_i
  t$,~$1\le i \le k$, a 5-path of this graph. The vertex cover number of this graph
  is~$k+2$. Now consider the matching cuts which contain at most one edge of each
  5-path. Such a matching cut~$M$ contains also at least one edge of each 5-path:
  otherwise,~$s$ and~$t$ are connected via some 5-path and, since all other vertices are
  connected to~$s$ or~$t$, the graph is connected after the removal of~$M$.

  Thus, each matching cut~$M$ that contains exactly one edge of each 5-path is
  minimal. Now consider any index set~$I\subseteq \{1,\ldots ,k\}$ and observe
  that~$M_I=\{u_iv_i\mid i\in I\}\cup \{v_iw_i\mid i\in \{1,\ldots,k\} \setminus
  I\}$ is a minimal matching cut. Since there are~$2^k$ possibilities for~$I$, the number
  of minimal matching cuts is~$\Omega(2^{k})=\Omega(2^{\vc(G)})$.
\end{proof}

We conclude this section by showing that Theorem~\ref{thm:vc-kern} can be generalized to the weaker parameterization by the \emph{twin-cover} number, introduced by Ganian~\cite{Ganian11,Ganian15} as a generalization of a vertex cover. Recall that two vertices $u$ and $v$ of a graph $G$ are \emph{true twins} if $N[u]=N[v]$. A set of vertices $X$ of a graph $G$ is said to be a \emph{twin-cover} of $G$ if for every edge $uv$ of $G$, at least one of the following holds: (i) $u\in X$ or $v\in X$ or (ii) $u$ and $v$ are true twins.  The \emph{twin-cover} number, denoted by $\tc(G)$, is the minimum size of a twin-cover. Notice that  $\tc(G)\leq \vc(G)$ and $\tc(G)\geq \cw(G)+2$ for every $G$~\cite{Ganian11, Ganian15}. 
This means that the parameterization by the twin-cover number is weaker than the parameterization by the vertex cover number but stronger than the cliquewidth parameterization.
It is convenient for us to consider a parameterization that is weaker than the twin-cover number. 
Let $\mathcal{X}=\{X_1,\ldots,X_r\}$ be the partition of $V(G)$ into the classes of true twins. Note that $\mathcal{X}$ can be computed in linear time using an algorithm for computing a modular decomposition~\cite{TedderCHP08}. Then we can define the \emph{true-twin quotient} graph $\mathcal{G}$ with respect to $\mathcal{X}$, that is, the graph with the node set $\mathcal{X}$ such that two classes of true twins $X_i$~and~$X_j$ are adjacent in $\mathcal{G}$ if and only if the vertices of $X_i$ are adjacent to the vertices of $X_j$ in~$G$. 
Then it can be seen that $\tc(G)\geq\vc(\mathcal{G})$. We prove that \probMinMC admits a \fp{}  enumeration kernel and \probMaxMC and \probMC admit \pd{} enumeration kernels for the parameterization by the vertex cover number of the true-twin quotient graph of the input graph.  In particular, this
implies the same kernels for the twin-cover parameterization.

As the first step, we show the following corollary of Theorem~\ref{thm:vc-kern}.

\begin{corollary}\label{cor:vc-kern-A}
Let $\mathcal{C}$ be the class of graphs $G=\hat{G}+sK_2$, where $\vc(\hat{G})\leq k$. 
Then  \probMinMC admits a \fp{} enumeration kernel and \probMC and \probMaxMC admit  
\pd{} enumeration kernels with $\Oh(k^2)$ 
vertices for graph in $\mathcal{C}$. 
\end{corollary}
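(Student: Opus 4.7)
The plan is to reduce directly to Theorem~\ref{thm:vc-kern} applied to $\hat{G}$. I would first isolate the $sK_2$ portion of $G$ by detecting all connected components isomorphic to $K_2$; removing their vertices leaves $\hat{G}$, which by hypothesis satisfies $\vc(\hat{G})\leq k$. Applying Theorem~\ref{thm:vc-kern} to $\hat{G}$ yields a kernel $\hat{H}$ on $\Oh(k^2)$ vertices together with a solution-lifting algorithm $\alpha_{\hat{G}}$.

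If $s=0$, the kernel produced for $\hat{G}$ serves directly as the kernel for $G$, so assume $s\geq 1$, which makes $G$ disconnected. For \probMinMC the picture collapses: the unique minimal matching cut of $G$ is the empty set, so I would output the trivial kernel $H=2K_1$ (whose unique matching cut is $\emptyset$) and define the solution-lifting algorithm to send $\emptyset\mapsto\{\emptyset\}$; this is in fact a bijective kernel.

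For \probMC and \probMaxMC, I would output $H = \hat{H} + K_2$, which still has $\Oh(k^2)$ vertices. The key structural observation is that every matching cut of $G$ decomposes uniquely as $M = M_{\hat{G}}\cup\{e_i:i\in I\}$, where $M_{\hat{G}} = M\cap E(\hat{G})$ is either empty or a matching cut of $\hat{G}$, $I\subseteq\{1,\ldots,s\}$, and $e_i$ denotes the edge of the $i$-th $K_2$ component; an analogous decomposition with $s$ replaced by $1$ describes matching cuts of~$H$. The solution-lifting algorithm, given a matching cut $M''\cup F$ of $H$ with $F\in\{\emptyset,\{e\}\}$, outputs all sets $M'\cup\{e_i:i\in I\}$ where $M'$ ranges over $\alpha_{\hat{G}}(M'')$ when $M''$ is a matching cut of~$\hat{H}$ (and $M'=\emptyset$ when $M''=\emptyset$ is not a matching cut of $\hat{H}$), and $I\subseteq\{1,\ldots,s\}$ is required to be nonempty precisely when $F=\{e\}$. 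For \probMaxMC one further restricts to $F=\{e\}$, so $I=\{1,\ldots,s\}$, exploiting that by Theorem~\ref{thm:vc-kern} the lifting $\alpha_{\hat{G}}$ maps maximal matching cuts of $\hat{H}$ to families of maximal matching cuts of $\hat{G}$.

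The main technical task is verifying the partition property and the polynomial delay. The partition claim follows from uniqueness of the decomposition $(M'',F)$ for matching cuts of $H$ combined with the partition property of $\alpha_{\hat{G}}$ from Theorem~\ref{thm:vc-kern}; disjointness between the $F=\emptyset$ and $F=\{e\}$ families is enforced by the link between $I$ and $F$. Polynomial delay is achieved by enumerating the subsets $I$ of $\{1,\ldots,s\}$ with constant delay (for instance in Gray-code order) and invoking the polynomial-delay routine $\alpha_{\hat{G}}$ only when a new $M'$ is needed. A minor subtlety I would address explicitly is the case $M''=\emptyset$ when $\hat{H}$ happens to be connected: there $M''$ is not formally a matching cut of $\hat{H}$, so we treat it as a separate pseudo-class whose $\hat{G}$-part is forced to be $\emptyset$, accounting for those matching cuts of $G$ that leave $\hat{G}$ entirely on one side of the cut.
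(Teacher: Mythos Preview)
Your proposal is correct and follows essentially the same idea as the paper: represent the $s$ copies of $K_2$ by a single $K_2$ in the kernel and recover the subsets of $\{e_1,\ldots,e_s\}$ in the solution-lifting step. The only difference is the order of operations. The paper applies Theorem~\ref{thm:vc-kern} directly to $G'=\hat{G}+K_2$ (which has $\vc(G')\leq k+1$) and then post-processes the output of that solution-lifting algorithm by replacing $e$ with a nonempty subset $L\subseteq\{e_1,\ldots,e_s\}$; you instead apply Theorem~\ref{thm:vc-kern} to $\hat{G}$ alone, add a fresh $K_2$ to the resulting kernel, and assemble the lifting yourself from $\alpha_{\hat G}$ and a subset enumeration over~$I$.

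Both routes yield an $\Oh(k^2)$-vertex kernel with polynomial delay. The paper's ordering is slightly cleaner because the single $K_2$ is already part of the graph handed to Theorem~\ref{thm:vc-kern}, so the ``$\emptyset$ is not a matching cut of $\hat{H}$'' corner case never needs to be singled out; your ordering requires exactly the pseudo-class patch you describe. Conversely, your version keeps the black-box call to Theorem~\ref{thm:vc-kern} on the smaller graph $\hat{G}$ and makes the product structure of the solution space (matching cuts of $\hat{G}$ times subsets of the $K_2$-edges) fully explicit, which is arguably more transparent.
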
 

\begin{proof}
Let $G=\hat{G}+sK_2$ and $\vc(G)\leq k$. The claim is an immediate corollary of Theorem~\ref{thm:vc-kern} if $s=0$, that is, if $\vc(G)\leq k$. Assume that $s\geq 1$.

For \probMinMC, it is sufficient to observe that $G$ is disconnected if $s\geq 1$ and, therefore, the empty set is the unique minimal matching cut. Then the kernelization algorithm outputs $2K_1$ and the solution-lifting algorithm outputs the empty set.

For \probMC and \probMaxMC, let $G'=\hat{G}+K_2$. Denote by $e$ the unique edge of the copy of $K_2$. Observe that $\vc(G')=\vc(\hat{G})+1\leq k+1$. We apply Theorem~\ref{thm:vc-kern} for $G'$ and obtain 
a \pd{} enumeration kernel $H$ with $\Oh(k^2)$ vertices. It is easy to observe that every maximal matching cut of $G'$ contains $e$ and every maximal matching cut of $G$ contains all the edges of the $s$ copies of $K_2$. Then for \probMaxMC, we modify the solution-lifting algorithm as follows: for every maximal matching cut that the algorithm outputs for a maximal matching cut of $H$ and the graph $G'$, 
 we construct the matching cut of $G$ by replacing $e$ in the matching cut by $s$ edges of the $s$ copies of $K_2$ in $G$. 
 For \probMC, the modification of the solution-lifting algorithm is a bit more complicated. Let $M$ be a matching cut produced    
by the solution-lifting algorithm for a matching cut of $H$ and the graph $G'$. If $e\notin M$, then the modified solution-lifting algorithm just outputs $M$. Otherwise, if $e\in M$, the solution-lifting algorithm outputs the matching cuts $(M\setminus\{e\})\cup L$, where $L$ 
 is a nonempty subset of edges of the $s$ copies of $K_2$ in $G$. Since all nonempty subsets of a finite set can be enumerated with polynomial delay by very basic combinatorial tools (see, e.g.,~\cite{Marino15}), the obtained modification of the solution-lifting algorithm is a solution-lifting algorithm for $H$ and $G$.
\end{proof}
 
 Using Corollary~\ref{cor:vc-kern-A}, we prove the following theorem.
\begin{theorem}\label{thm:tc-kern}
\probMinMC admits a \fp{} enumeration kernel and \probMC and \probMaxMC admit  
\pd{} enumeration kernels with $\Oh(k^2)$
vertices when parameterized 
by the vertex cover number of the true-twin quotient graph of the input graph.  
\end{theorem}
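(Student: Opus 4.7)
The plan is to reduce the input $G$ to a graph of the form $\hat G + sK_2$ with $|V(\hat G)|=\mathcal{O}(k^2)$ that preserves matching cuts via an equivalence analogous to Lemma~\ref{lem:equiv}, and then invoke the solution-lifting technique of Corollary~\ref{cor:vc-kern-A} to handle the $sK_2$-part. The compression operates on two levels simultaneously: the size of each true-twin class and the number of twin classes retained.

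First I would compute the true-twin partition $\mathcal{X}$ and quotient graph $\mathcal{G}$ in polynomial time, and find a vertex cover $X$ of $\mathcal{G}$ with $|X|\le 2k$ via 2-approximation. Write $I=V(\mathcal{G})\setminus X$ and split $I=I_0\cup I_1\cup I_{\ge 2}$ by degree in $\mathcal{G}$. Let $s$ be the number of $K_2$-components of $G$, i.e.\ size-$2$ twin classes isolated in $\mathcal{G}$; these are set aside for the $sK_2$-part. Then I would apply the marking procedure from Theorem~\ref{thm:vc-kern} with $\mathcal{G}$ in place of $G$ and classes playing the role of vertices: mark one class of $I_0$ (other than a $K_2$-component), one class of $N_\mathcal{G}(x)\cap I_1$ for each $x\in X$, and up to three classes of $N_\mathcal{G}(x)\cap N_\mathcal{G}(y)\cap I_{\ge 2}$ for each pair $\{x,y\}\subseteq X$, yielding $Z\subseteq I$ with $|Z|=\mathcal{O}(k^2)$. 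The kernel is $\hat G=G\big[\bigcup_{X_i\in X\cup Z} R_i\big]$, where $R_i\subseteq X_i$ is an arbitrarily chosen set of $\min(|X_i|,2)$ representatives; thus $|V(\hat G)|\le 2|X\cup Z|=\mathcal{O}(k^2)$.

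The key analog of Lemmas~\ref{lem:kernel-vc} and~\ref{lem:equiv} rests on the following \emph{clumping property}: in any matching cut of $G$, every twin class $X_i$ of size at least $2$ that has a $\mathcal{G}$-neighbor must lie entirely on one side of the bipartition together with all of its $\mathcal{G}$-neighbors. (Otherwise a single vertex on the opposite side would be incident to at least $|X_i|\ge 2$ cut edges, contradicting the matching condition; in particular no such class can be split, only true $K_2$-components can.) This has two consequences: first, reducing a twin class of size $\ge 3$ to two representatives preserves the admissible bipartitions, since any ``extra'' vertex is forced to the same side as the representatives; and second, the pigeonhole argument in the proof of Lemma~\ref{lem:kernel-vc} still applies to the three marked common neighbors of each pair $\{x,y\}\subseteq X$, forcing unmarked $I_{\ge 2}$-classes to contribute no edges to any matching cut of $G$. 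The residual equivalence (the $L_x$ bookkeeping for pendant edges, the class-size expansion, and the $sK_2$-expansion) then combines the techniques of Theorem~\ref{thm:vc-kern} and Corollary~\ref{cor:vc-kern-A} into a single solution-lifting algorithm.

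The main obstacle will be to synthesise the three independent forms of expansion (pendant-edge, class-size, and $sK_2$) into one solution-lifting algorithm that meets the required polynomial-time or polynomial-delay bound, and to verify the equivalence of matching cuts, minimal matching cuts, and maximal matching cuts (all three at once). Fortunately the class-size expansion is entirely deterministic by clumping (extras always follow their representatives), so the only combinatorial choices are the $L_x$-subset choices and the $sK_2$-subset choices, both of which admit polynomial-delay enumeration. For \probMinMC, if $s\ge 1$ then $G$ is disconnected and the unique minimal matching cut is $\emptyset$ (kernel output $2K_1$), while for $s=0$ minimality collapses each $L_x$-choice to at most a single edge, giving a fully-polynomial enumeration kernel; for \probMC and \probMaxMC, both expansions run with polynomial delay, yielding polynomial-delay enumeration kernels with $\mathcal{O}(k^2)$ vertices as claimed.
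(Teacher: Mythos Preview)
Your strategy—compressing the quotient graph via the marking procedure of Theorem~\ref{thm:vc-kern} while simultaneously shrinking twin classes—is a genuinely different route from the paper's. The paper proceeds sequentially: first it applies four reduction rules that shrink each twin class (to size at most~$3$ in the cover, to size~$1$ in the independent part, adding clique edges on $N_G(X_i)$ when needed) so that the resulting graph is exactly $\hat G+sK_2$ with $\vc(\hat G)\le 6k$, and then invokes Corollary~\ref{cor:vc-kern-A} verbatim. That modularity means all the pendant-edge and $sK_2$ bookkeeping is inherited for free. Your approach instead works on two levels at once and crafts a bespoke solution-lifting; if it works, it is arguably more direct, but it forces you to redo the equivalence analysis in the heterogeneous setting where twin classes have different sizes.

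That heterogeneity is where your sketch has a concrete gap. When you ``mark one class of $N_{\mathcal G}(x)\cap I_1$'', the classes in $I_1$ are \emph{not} interchangeable: a size-$1$ class adjacent to a size-$1$ cover class gives a genuine pendant edge, but a size-$\ge 2$ class adjacent to the same cover class is clumped to it and contributes nothing. If your marking happens to pick the latter while size-$1$ pendants also exist, those pendants vanish from $\hat G$ and their matching cuts are lost. Concretely, take $G$ with a vertex $b$ adjacent to $c_1,c_2$ and to a true-twin pair $\{d_1,d_2\}$: the quotient has $X=\{\{b\}\}$ and $I_1=\{\{c_1\},\{c_2\},\{d_1,d_2\}\}$; if you mark $\{d_1,d_2\}$ then $\hat G$ is a triangle with no matching cut, yet $G$ has the matching cuts $\{bc_1\}$ and $\{bc_2\}$. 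A similar issue arises for $I_0$: keeping two representatives of an isolated class of size~$\ge 3$ manufactures a spurious $K_2$-component. Both are repairable (prefer size-$1$ classes when marking $I_1$; keep one representative for large isolated cliques), but as written the lifting is not a bijection between the claimed partitions, so the proof does not go through without these case distinctions.
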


\begin{proof}
Let $G$ be a graph and let $\mathcal{G}$ be its true-twin quotient graph with
$\vc(\mathcal{G})=k$. Let also  $\mathcal{X}=\{X_1,\ldots,X_r\}$ be the partition of $V(G)$ into the classes of true twins (initially, $r=k$). 
Recall that $\mathcal{X}$ can be computed in linear time~\cite{TedderCHP08}. 

We apply the following series of reduction rules. All these rules use the property that if $K$ is a clique of $G$ of size at least three, then either $K\subseteq A$ or $K\subseteq B$ for every partition $\{A,B\}$ of $V(G)$ such that $M=E(A,B)$ is a matching cut.

\begin{reduction}\label{red:clique-red}
If there is $i\in\{1,\ldots,r\}$ such that  $|X_i|\geq 4$, then delete $|X_i|-3$ vertices of $X_i$. 
\end{reduction} 
 
To see that the rule is safe, let $X_i'$ be the clique obtained from $X_i$ by the rule for some $i\in\{1,\ldots,r\}$, and denote by $G'$ the obtained graph.  We claim that $M$ is a matching cut of $G$ if and only if $M$ is a matching cut of $G'$. Let $M=E_g(A,B)$, where $\{A,B\}$  is a partition of~$V(G)$. We have that either $X_i\subseteq A$ or $X_i\subseteq B$. By symmetry, we can assume without loss of generality that $X_i\subseteq A$. Note that since $|X_i|\geq 2$, the vertices of  $N_G(X_i)$ are in $A$. Otherwise, we would have a vertex $v\in B$ with at least two neighbors in $A$. This implies that no edge of $M$ is incident to a vertex of $X_i$ and, therefore, $M\subseteq E(G')$. Since $G'$ is an induced subgraph of $G$, $M$ is a matching cut of $G'$. For the opposite direction, the arguments are essentially the same. If  $\{A',B'\}$  is a partition of $V(G')$ with $M=E_{G'}(A',B')$, then we can assume without loss of generality that $X_i'\subseteq A'$. Then $N_{G'}(X_i)\subseteq A'$. This implies that for  $A=A'\cup X_i$ and $B=B'$, $E_G(A,B)=E_{G'}(A',B')=M$, that is, $M$ is a matching cut of $G$. Summarizing, we conclude that the enumeration of matching cuts (minimal or maximal matching cuts, respectively) for $G$ is equivalent to their enumeration for $G'$. This means that Reduction Rule~\ref{red:clique-red} is safe. 

We apply   Reduction Rule~\ref{red:clique-red} for all classes of true twins of size at least four. To simplify notation, we use $G$ to denote the obtained graph and $X_1,\ldots,X_r$ is used to denote the obtained classes of true twins. We have that $|X_i|\leq 3$ for $i\in\{1,\ldots,r\}$. We show that we can reduce the size of some classes even further. This is straightforward for classes $X_i$ of size three that induce connected components of $G$.

\begin{reduction}\label{red:comp-red}
If there is $i\in\{1,\ldots,r\}$ such that  $|X_i|=3$ and $G[X_i]$ is a connected component of $G$, then delete arbitrary two vertices of $X_i$. 
\end{reduction}    
  
Further, we delete some classes having a unique neighbor. 
  
\begin{reduction}\label{red:comp-del}
If there is $i\in\{1,\ldots,r\}$ such that $|X_i|\geq 2$ and $|N_G(X_i)|=1$, then delete the vertices of $X_i$. 
\end{reduction}   
  
To prove safeness, assume that the rule is applied for $X_i$. Let $G'$ be the graph obtained by the deletion of $X_i$ and let $y$ be the unique vertex of $N_G(X_i)$. We show that $M$ is a matching cut of $G$ if and only if $M$ is a matching cut of $G'$. Assume first that   
  $M=E_G(A,B)$ is a matching cut of $G$ for a partition $\{A,B\}$ of $V(G)$.  Since $X_i\cup \{y\}$ is a clique of size at least three, either $X_i\cup\{y\}\subseteq A$ or $X_i\cup\{y\}\subseteq B$. By symmetry, we can assume without loss of generality that   $X_i\cup\{y\}\subseteq A$. Then no edge of $M$ is incident to a vertex of $X_i$. This implies that $M\subseteq E(G')$. Since $G'$ is an induced subgraph of $G$,  $M$ is a matching cut of $G'$. For the opposite direction, assume that $M=E_{G'}(A',B')$ is a matching cut of $G'$ for a partition $\{A',B'\}$ of $V(G')$. We can assume without loss of generality that $y\in A'$. Then it is straightforward  
 to see that $M=E_G(A'\cup X_i,B')$, that is, $M$ is a matching cut of $G$. We obtain that Reduction Rule~\ref{red:comp-del} is safe, because the enumeration of   matching cuts (minimal or maximal matching cuts, respectively) for $G$ is equivalent to their enumeration for $G'$.  
  
We apply Reduction Rules~\ref{red:comp-red} and \ref{red:comp-del} for all classes $X_i$ satisfying their conditions.
We use the same convention as before, and use $G$ to denote the obtained graph and $X_1,\ldots,X_r$ is used to denote the obtained sets of true twins.   
  
 Finally, we reduce the size of some classes that have at least two neighbors. Recall that $\tc(G)=k$. This means that $\vc(\mathcal{G})=k$ for the quotient graph $\mathcal{G}$ constructed for $\mathcal{X}=\{X_1,\ldots,X_r\}$. We compute $\mathcal{G}$ and 
 use, say a 2-approximation algorithm~\cite{GareyJ79}, to find a vertex cover $\mathcal{Z}$ of size at  most $2k$. Let $\mathcal{I}=V(\mathcal{G})\setminus \mathcal{Z}$. Recall that $\mathcal{I}$ is an independent set.

 \begin{reduction}\label{red:comp-two}
If there is $i\in\{1,\ldots,r\}$ such that $X_i\in\mathcal{I}$,  $|X_i|\geq 2$, and $|N_G(X_i)|\geq 2$, then delete arbitrary $|X_i|-1$ vertices of $X_i$ and make the vertices of $N_G(X_i)$ pairwise adjacent by adding edges. 
\end{reduction}  
 
To show that the rule is safe assume that $X_i\in\mathcal{I}$,  $|X_i|\geq 2$, and $|N_G(X_i)|\neq 2$ for some $i\in\{1,\ldots,r\}$ and the rule is applied for $X_i$. Denote by $G'$ the graph obtained by the application of the rule, and let $x\in X_i$ be the vertex of $G'$. 
We claim that  $M$ is a matching cut of $G$ if and only if $M$ is a matching cut of $G'$.

For the forward direction, let $M=E_G(A,B)$ be a matching cut of $G$ for a partition $\{A,B\}$ of $V(G)$. For every $y\in N_G(X_i)$, $Z_y=X_i\cup\{y\}$ is a clique of size at least three in $G$. Therefore, either $Z_y\subseteq A$ or $Z_y\subseteq B$. By symmetry, we can assume without loss of generality that $Z_y\subseteq A$ for all $y\in N_G(X_i)$, that is, $N_G[X_i]\subseteq A$ and, moreover, the edges of $M$ are not incident to the vertices of $X_i$. Therefore, $M\subseteq E(G')$ and the edges between the vertices of $N_{G}[X_i]$ that may be added by the rule have their end-vertices in $A$. This implies that $M$ is a matching cut of $G'$.

Assume now that $M=E_{G'}(A',B')$ is a matching cut of $G'$ for  a  partition $\{A',B'\}$ of $V(G')$. Assume without generality that $x\in A$. Let also $K=N_{G'}[x]$; note that $K$ is a clique of $G'$. Since $|N_G(X_i)|\geq 2$, $|K|\geq 3$. Hence, $K\subseteq A$. Notice also that the edges of $M$ are not incident to $x$ and are not edges of $G'[K]$. Because $N_G(z)\cap V(G')=N_G(x)\cap V(G')=N_{G'}(x)$ for every $z\in X_i$, we obtain that $M=E_G(A'\cup X_i,B')$ and $M$ is a matching cut of $G$. We conclude that   the enumeration of   matching cuts (minimal or maximal matching cuts, respectively) for $G$ is equivalent to their enumeration for $G'$.  Therefore, Reduction Rule~\ref{red:comp-two} is safe.

We apply Reduction Rule~\ref{red:comp-two} for the classes in $\mathcal{I}$ exhaustively. Denote by $G^*$ the obtained graph and let $X_1^*,\ldots,X_r^*$ be the constructed classes of true twins. We use $\mathcal{I}^*$ to denote the family of sets of true twins obtained from the sets of $\mathcal{I}$. Note that the sets of $\mathcal{Z}$ are not modified by Reduction Rule~\ref{red:comp-two}.

We show the following claim summarizing the properties of the obtained sets of true twins.

\begin{claim}\label{cl:final-red}
For every $X_i^*\in \mathcal{I}^*$, either $G[X_i^*]$ is a connected component of $G^*$ and $|X_i^*|=2$ or $|X_i|=1$, and for every $X_i^*\in\mathcal{Z}$, $|X_i^*|\leq 3$.
\end{claim}
 
\begin{proof}[Proof of Claim~\ref{cl:final-red}]
Let $X_i^*\in \mathcal{I}^*$.
If $G[X_i^*]$ is a connected component of $G$, then because Reduction Rule~\ref{red:comp-del} is not applicable, $|X_i^*|\leq 2$.  Assume that $X_i^*\in\mathcal{I}^*$ is not the set of vertices of a connected component. Then $N_{G^*}(X_i^*)\neq\emptyset$. If $|N_{G^*}(X_i^*)|=1$, then $|X_i^*|=1$, because Reduction Rule~\ref{red:comp-del} cannot be applied. If $|N_{G^*}(X_i^*)|\geq 2$, then $|X_i^*|=1$, because Reduction Rule~\ref{red:comp-two} is not applicable.
In both cases $|X_i^*|=1$ as required. Finally, if $X_i^*\in \mathcal{Z}$, then $|X_i|\leq 3$ because of Reduction Rule~\ref{red:clique-red}. 
\end{proof} 
 
Let $G_1,\ldots,G_s$ be  the connected components of $G^*$ induced by the sets $X_i$ of size two, and let $\hat{G}=G^*-\bigcup_{i=1}^s V(G_i)$, that is, $G^*=\hat{G}+sK_2$. Claim~\ref{cl:final-red} implies that every edge of $\hat{G}$ has at least one end-vertex in a set $X_i$ for $X_i\in\mathcal{Z}$. Since $|X_i|\leq 3$ for each set $X_i\in\mathcal{Z}$, $\vc(\hat{G})\leq 3|\mathcal{Z}|\leq 6k$. Because Reduction Rules~\ref{red:clique-red}--\ref{red:comp-two} are safe, 
the enumeration of   matching cuts (minimal or maximal matching cuts, respectively) for the input graph is equivalent to their enumeration for $G^*=\hat{G}+sK_2$. 
Because $\vc(\hat{G})\leq 6k$, we can apply  Corollary~\ref{cor:vc-kern-A}. Since the initial partition $V(G)$ into the twin classes can be computed in linear time and each of the Reduction Rules~\ref{red:clique-red}--\ref{red:comp-two} can be applied in polynomial time, we conclude that 
 \probMinMC admits a \fp{} enumeration kernel and \probMC and \probMaxMC admit 
 \pd{} enumeration kernels with $\Oh(k^2)$ vertices.
\end{proof}

\section{Enumeration Kernels for the Neighborhood Diversity and Modular Width Parameterizations}\label{sec:nd}
The notion of the \emph{neighborhood diversity} of a graph was introduced by Lampis~\cite{Lampis12} (see also~\cite{Ganian12}). Recall that a  set of vertices $U\subseteq V(G)$ is  a \emph{module} of $G$ if for every vertex $v\in V(G)\setminus U$, either $v$ is adjacent to each vertex of $U$ or $v$ is non-adjacent the vertices of $U$. 
The \emph{neighborhood decomposition} of $G$ is a partition of $V(G)$ into modules such that every module is either a clique or an independent set. We call these modules \emph{clique} or \emph{independent} modules, respectively; note that a module of size one is both clique module and an independent module. 
The \emph{size} of a decomposition is the number of modules. 
The \emph{neighborhood diversity} of a graph $G$, denoted $\nd(G)$, is the minimum size of a neighborhood decomposition. Note (see, e.g., \cite{Lampis12,TedderCHP08}) that the neighborhood diversity and the corresponding neighborhood decomposition can be computed in linear time. We show \fp{} (\pd{}) enumeration kernels for the matching cut problem parameterized by the neighborhood diversity. 

There are many similarities between the results in this subsection and Subsection~\ref{sec:vc}. Hence, we will only sketch some proofs. 

Let $G$ be a graph and let $k=\nd(G)$.  
The case when $G$ has no edges is trivial and can be easily considered separately. From now, we assume that $G$ has at least one edge. 

Consider a minimum-size neighborhood decomposition $\mathcal{U}=\{U_1,\ldots,U_k\}$ of $G$. Let $\mathcal{G}$ be the \emph{quotient} graph for $\mathcal{U}$, that is, $\mathcal{U}$ is the set of nodes of $\mathcal{G}$ and two distinct nodes $U_i$ and $U_j$ are adjacent in $\mathcal{G}$ if and only if the vertices of modules $U_i$ and $U_j$ are adjacent in $G$.  We call the elements of $V(\mathcal{G})$ \emph{nodes} to distinguish them form the vertices of $G$. 
We say that a module $U_i$ is \emph{trivial} if $U_i$ is an independent module and $U_i$ is an isolated node of $\mathcal{G}$. Notice that $\mathcal{U}$ can contain at most one trivial module. We call $U_i$ a \emph{pendent} module if $U_i$ is an independent module of degree one in $\mathcal{G}$ such that its unique neighbor $U_j$ in $\mathcal{G}$ has size one; we say that $U_j$ is a \emph{subpendant} module. Notice that each subpendant is adjacent  to exactly one pendant module and the pendant modules are pairwise nonadjacent in $\mathcal{G}$.  

As in Section~\ref{sec:vc}, our kernelization algorithm is the same for all the considered problems. First, we \emph{mark} some vertices.
\begin{itemize}
\item[(i)] If $\mathcal{U}$ contains a trivial module, then \emph{mark} one its vertex. 
\item[(ii)] For every pendant module $U_i$, \emph{mark} an arbitrary vertex of $U_i$.
\item[(iii)] For ever module $U_i$ that is not trivial of pendant, \emph{mark} arbitrary $\min\{3,|U_i|\}$ vertices. 
\end{itemize}
Let $W$ be the set of marked vertices. Note that since we marked at most three vertices in each module, $|W|\leq 2k$.
We define $H=G[W]$ and our kernelization algorithm returns $H$. 

To see the relations between matching cuts of $G$ and $H$, we show the analog of Lemma~\ref{lem:kernel-vc}. For this, denote by $G'$ the graph obtained by the deletion of the vertices of trivial and pendant modules. The graph $H'$ is obtained from $H$ in the same way. 
\begin{lemma}\label{lem:kernel-nd}
A set of edges $M\subseteq E(G')$ is a matching cut of $G'$ if and only if $M\subseteq E(H')$ and $M$ is a matching cut of $H'$.
\end{lemma}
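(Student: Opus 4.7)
The plan is to mirror the proof of Lemma~\ref{lem:kernel-vc}, replacing the pigeonhole argument over three marked common neighbors of a pair of vertex-cover vertices by a pigeonhole over the three marked vertices in each module, together with marked vertices of the module's neighbors in the quotient graph $\mathcal{G}$. For the forward direction, I would suppose $M=E_{G'}(A,B)$ is a matching cut of $G'$ and that some edge $uv\in M$ has, say, $u\notin V(H')$. Then $u$ is unmarked in a non-trivial, non-pendant module $U_i$ with $|U_i|\geq 4$. If $U_i$ is a clique module, then since $|U_i|\geq 3$ we must have $U_i\subseteq A$ (WLOG), and the vertex $v$, being adjacent to the whole of $U_i$, acquires at least four $M$-edges, a contradiction. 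If $U_i$ is an independent module, then the three marked vertices $z_1,z_2,z_3\in U_i\cap W$ are all adjacent to $v$, and since $v$ has the single $M$-neighbour $u\in A$ they must all lie in $B$. Non-pendantness and non-triviality of $U_i$ now supply a marked vertex $w$ in some neighbor module $U_\ell$ of $U_i$ in $\mathcal{G}$ that is distinct from the module of $v$: either $U_i$ has degree $\geq 2$ in $\mathcal{G}$ (take $U_\ell$ to be the second neighbour), or $U_i$ has a unique neighbour $U_j\ni v$ with $|U_j|\geq 2$ (take $w$ to be a second marked vertex in $U_j$). Since $|U_i|\geq 4>1$, $U_\ell$ is not pendant and hence has marked vertices in $H'$. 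Then $w$ is adjacent to $z_1,z_2,z_3\in B$; placing $w$ in $A$ gives three $M$-edges at $w$, while placing $w$ in $B$ gives two $M$-edges at $u$ (namely $uv$ and $uw$). Both are contradictions, so $M\subseteq E(H')$, and being a matching cut of the induced subgraph $H'$ follows immediately.

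For the backward direction, suppose $M=E_{H'}(A,B)$ is a matching cut of $H'$. The key claim is that for every non-trivial, non-pendant module $U_i$ with $|U_i|\geq 4$, the three marked vertices $z_1,z_2,z_3$ lie entirely on one side of $\{A,B\}$. For clique modules this is trivial since the three form a triangle in $H'$. For independent modules, suppose for contradiction that, WLOG, $z_1\in A$ and $z_2,z_3\in B$. Exactly as in the forward direction, non-pendantness and non-triviality yield two marked vertices $p,q$ in neighbor module(s) of $U_i$, each adjacent to all of $z_1,z_2,z_3$. If $p\in A$, then $p$ has two $M$-neighbours $z_2,z_3$; if $p\in B$, then $z_1$ has $p$ as one $M$-neighbour, and by the same reasoning applied to $q$ we either obtain a second neighbor of $z_1$ in $B$ (if $q\in B$) or a second neighbor of $z_2$ or $z_3$ in $A$ (if $q\in A$); either way $M$ is not a matching.

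Given the claim, I extend $\{A,B\}$ to a partition $\{A^*,B^*\}$ of $V(G')$ by assigning every unmarked vertex of a module $U_i$ to the side of its marked vertices. To verify that $M=E_{G'}(A^*,B^*)$, the non-trivial direction is to show that any edge $uv\in E(G')$ with at least one unmarked endpoint has both endpoints on the same side of $\{A^*,B^*\}$. If $u\in U_i$ is unmarked, $U_i$ is a clique module and $v\in U_i$, then $v$ is on the same side by construction. Otherwise $v$ lies in a neighbor module $U_j$ of $U_i$ in $\mathcal{G}$: if $v$ itself is marked, then the three marked vertices of $U_i$ (all on $u$'s side) are adjacent to $v$, so $v$ placed on the opposite side would give three $M$-edges at $v$; if $v$ is also unmarked, then the same argument applied to any marked vertex $w\in U_j\cap W$ forces $w$, and hence the unmarked $v$, to sit on $u$'s side.

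The main obstacle is the backward-direction claim for independent modules: the case analysis must simultaneously cover the two reasons why $U_i$ fails to be pendant (degree $\geq 2$ in $\mathcal{G}$ versus degree $1$ with a larger unique neighbour) and, within each, must ensure that the two ``pigeonhole witnesses'' $p,q$ survive in $H'$, which is exactly why the marking procedure keeps three vertices per non-trivial, non-pendant module and why pendant modules require a separate treatment. The clique-module cases and the handling of trivial/pendant modules, by contrast, are essentially bookkeeping.
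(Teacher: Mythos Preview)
Your proof is correct, and the underlying idea --- use the three marked vertices of a module together with the module property to derive contradictions --- matches the paper's. The execution, however, is more elaborate than necessary in both directions.

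In the forward direction, the paper avoids your case split on the degree of $U_i$ in $\mathcal{G}$. After placing the three marked vertices $u_1,u_2,u_3$ of $U_i$ in $B$ (and ruling out the clique case), it argues directly: if $u$ had \emph{any} neighbour $w\neq v$ in $G$, then $w\in A$ (since $M$ is a matching cut at $u$), yet $w$ is adjacent to $u_1,u_2,u_3\in B$ by the module property --- a contradiction. Hence $v$ is the unique neighbour of $u$, forcing $U_i$ to be pendant. There is no need to locate a specific marked witness in a neighbouring module or to distinguish the two ways $U_i$ can fail to be pendant.

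In the backward direction, the paper skips your ``key claim'' that all three marked vertices of $U_i$ lie on one side. It applies pigeonhole immediately: if an unmarked $v\in U_i$ has neighbours $u\in A$ and $w\in B$, then two of the three marked vertices of $U_i$ are on one side, say $A$, and then $w$ (adjacent to them by the module property) already has two neighbours in $A$. This is shorter but leaves the extension of $\{A,B\}$ to all of $V(G')$ implicit; your version makes that extension explicit, which is arguably cleaner since the edge between two unmarked vertices does need the argument you give. One minor slip: in your backward case analysis, ``a second neighbor of $z_2$ or $z_3$ in $A$ (if $q\in A$)'' should read ``$q$ has two neighbours $z_2,z_3$ in $B$''; the contradiction is at $q$, not at $z_2$ or $z_3$.
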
  
\begin{proof}
Suppose that $M\subseteq E(G')$ is a matching cut of $G'$ and assume that $M=E_{G'}(A,B)$ for a partition $\{A,B\}$ of $V(G')$. We show that $M\subseteq E(H')$. 
For the sake of contradiction, assume that there is  $uv\in M$ with $u\in A$ and $v\in B$ such that $uv\notin E(H')$. This means that  $u\notin V(H')$ or $v\notin V(H')$. By symmetry, we can assume without loss of generality that $u\notin V(H')$. Then there is $i\in\{1,\ldots,k\}$ such that $u\in U_i$. Since $U_i$ is not trivial and not pendant, $U_i$ contains three marked vertices $u_1,u_2,u_3$. 
Notice that $v\notin U_i$, because, otherwise, $U_i$ would be a clique module and no matching cut can separate vertices of a clique of size at least 3. Observe also that $u_1,u_2,u_3\in B$ as, otherwise, $v$ would have at least two neighbors in $A$. 
This implies that $U_i$ is an independent module, because $u$ would have at least three neighbors in $B$ otherwise.  
Suppose that $u$ has a neighbor $w\neq v$ in $G$. Then because $M$ is a matching cut, $w\in A$. However, $u_1,u_2,u_3\in B$ are adjacent with $w$, because these vertices are in the same module with $u$. This contradiction implies that $v$ is the unique neighbor of $u$ in $G$. Hence, $v$ is the unique neighbor of every vertex of $U_i$. This means that $U_i$ is a pendant module and $\{v\}$ is the corresponding subpendant module.  However, $G'$ does not contain vertices of the pendant modules by the construction.
 This final contradiction proves that $uv\in E(H')$. Therefore,  $M\subseteq E(H')$.  
Since $H'$ is an induced subgraph of $G'$, $M$ is a matching cut of $H'$.

For the opposite direction, assume that $M$ is a matching cut of $H'$.  Let $M=E_{H'}(A,B)$ for a partition $\{A,B\}$ of $V(H')$.   We claim that for every $v\in V(G')\setminus V(H')$, either $v$ has all its neighbors in $A$ or $v$ has all its neighbors in $B$. For the sake of contradiction, assume that there is $v\in V(G')\setminus V(H')$ that has a neighbor $u\in A$ and a neighbor $w\in B$. Note that $v\in U_i$ for some $i\in\{1,\ldots,k\}$ and $|U_i|\geq 4$. Then $U_i$ contains three marked vertices
and either at least two of these marked vertices of are in $A$ or at least two of these marked vertices  are in $B$. 
In the first case, $w$ has at least two neighbors in $A$, and $u$ has at least two neighbors in $B$ in the second. In both cases, we have a contradiction with the assumption that $M$ is a matching cut. Since for every $v\in V(G')\setminus V(H')$, either $v$ has all its neighbors in $A$ or $v$ has all its neighbors in $B$, $M$ is a matching cut of $G'$.
\end{proof}

To proceed,  we denote by $X$ the set of vertices of the subpendant modules. By the definition of pendant and subpendant modules, for each $x\in X$, $\{x\}$ is a subpendant module that is adjacent in $\mathcal{G}$ to a unique pendant module $U$. We define  $L_x=\{xu\mid u\in U\}$. Notice that for each $x\in X$, $H$ contains a unique edge $\ell_x\in L_x$, because exactly one vertex of $U$ is marked. Let $L=\bigcup_{x\in X}L_x$. Exactly as in Section~\ref{sec:vc}, we say that 
two sets of edges $M_1$ and $M_2$ of $G$ are \emph{equivalent} if $M_1\setminus L=M_2\setminus L$ and 
for every $x\in X$, $|M_1\cap L_x|=|M_2\cap L_x|$. Using exactly the same arguments as in the proof of Lemma~\ref{lem:equiv}, we show its analog.
\begin{lemma}\label{lem:equiv-nd}
A set of edges $M\subseteq E(G)$ is a  matching cut (minimal or maximal matching cut, respectively) of $G$ if and only if $H$ has a matching cut (minimal or maximal matching cut, respectively) $M'$   equivalent to $M$. 
\end{lemma}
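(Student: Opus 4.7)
The plan is to replicate the structure of the proof of Lemma~\ref{lem:equiv}, with Lemma~\ref{lem:kernel-nd} playing the role of Lemma~\ref{lem:kernel-vc} and the pendant/subpendant structure playing the role of the $I_1$/vertex-cover structure. The equivalence relation is defined in exactly the same way on edge sets, so the same two-part skeleton (arbitrary matching cuts first, then minimal and maximal as a corollary) should go through.

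For the forward direction I would take a matching cut $M$ of $G$. If $M=\emptyset$, then $G$ is disconnected, and by construction $H$ is disconnected as well (a trivial module of $\mathcal{U}$ contributes a marked isolated vertex to $H$, and each connected component of $G$ containing any non-pendant/non-trivial module contributes at least one of its marked vertices to $H$), so $M'=\emptyset$ is a matching cut of $H$ equivalent to $M$. For nonempty $M$, I would construct $M'$ from $M$ by the same swapping operation as in Lemma~\ref{lem:equiv}: for every $x\in X$ with $M\cap L_x\neq\emptyset$, replace the unique edge of $M$ in $L_x$ by $\ell_x$. By the definition of equivalence, $M'$ is equivalent to $M$ and is itself a matching cut of $G$ (swapping edges within $L_x$ just changes which vertex of the pendant module sits opposite $x$ in the partition, which is legal since $U$ is an independent module of size possibly larger than one with unique external neighbor $x$). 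Writing $M_1 = M'\cap L \subseteq E(H)$ and $M_2 = M'\setminus M_1 \subseteq E(G')$, either $M_2=\emptyset$, in which case $M'=M_1$ is immediately a matching cut of $H$, or $M_2$ is a nonempty matching cut of $G'$, in which case Lemma~\ref{lem:kernel-nd} gives $M_2\subseteq E(H')$ and $M_2$ is a matching cut of $H'$; either way $M'$ is a matching cut of $H$.

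For the reverse direction I would mirror the argument. Start from a matching cut $M'$ of $H$; if $M'=\emptyset$ then $H$ is disconnected and, by construction, so is $G$, so $M=\emptyset$ is a matching cut of $G$. Otherwise split $M' = M_1 \cup M_2$ with $M_1 = M'\cap L$ and $M_2 = M'\setminus M_1$. If $M_2=\emptyset$, the pendant edges in $M_1$ directly form a matching cut of $G$; if $M_2\neq\emptyset$, Lemma~\ref{lem:kernel-nd} lifts $M_2$ to a matching cut of $G'$, and combining with $M_1$ yields a matching cut $M'$ of $G$. Finally, any $M\subseteq E(G)$ equivalent to $M'$ differs from $M'$ only by swapping pendant edges within the various $L_x$, which (as above) preserves the matching-cut property.

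For the minimal and maximal versions I would use the observation that the forward construction $M\mapsto M'$ and any choice of representative in the reverse direction are both inclusion-preserving on $L$ (each $L_x$ contributes at most one edge) and identical on $E(G)\setminus L$. Hence if $M\subset\hat M$ are matching cuts of $G$, the constructed $M'\subset\hat M'$ are matching cuts of $H$, and vice versa, so minimality and maximality transfer across the equivalence. The main obstacle, as in Lemma~\ref{lem:equiv}, is purely bookkeeping: checking that $\ell_x$ is well-defined (which is exactly marking rule~(ii)), that swapping edges within $L_x$ preserves the partition structure, and that the corner cases of disconnected $G$ and of trivial modules are correctly aligned with the marking procedure.
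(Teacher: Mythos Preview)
Your proposal is correct and follows exactly the approach the paper intends: the paper explicitly states that Lemma~\ref{lem:equiv-nd} is proved ``using exactly the same arguments as in the proof of Lemma~\ref{lem:equiv}'', and you have carried out precisely that transfer, with Lemma~\ref{lem:kernel-nd} substituted for Lemma~\ref{lem:kernel-vc} and the pendant/subpendant structure replacing the $I_1$ structure. Your additional bookkeeping (verifying that $G$ disconnected implies $H$ disconnected via the fact that every module receives at least one marked vertex, and that swapping within $L_x$ is legal because pendant modules are independent with unique external neighbor~$x$) fills in details the paper leaves implicit but does not deviate from its argument.
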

The lemma allows us to prove the main theorem of this section by the same arguments as in Theorem~\ref{thm:vc-kern}. 
\begin{theorem}\label{thm:nd-kern}
\probMinMC admits a \fp{} enumeration kernel and \probMC and \probMaxMC admit  
\pd{} enumeration kernels with at most $3k$  vertices when parameterized by the neighborhood  diversity $k$ of the input graph. 
\end{theorem}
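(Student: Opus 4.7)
The plan is to follow the template of Theorem~\ref{thm:vc-kern} essentially verbatim, using Lemma~\ref{lem:equiv-nd} as the analog of Lemma~\ref{lem:equiv}. The kernelization algorithm that returns $H=G[W]$ has already been described, so I would first verify the size bound: the marking procedure selects at most three vertices from each of the $k$ modules of the neighborhood decomposition (at most one for the trivial module or a pendant module, at most three for the remaining modules), hence $|V(H)|\le 3k$. The core task is then to describe appropriate solution-lifting algorithms and check that they yield a partition of the solution set of $G$.

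For \probMinMC I would specify the following solution-lifting algorithm. Given a minimal matching cut $M$ of $H$, split into two cases. If $M\cap L=\emptyset$, then by Lemma~\ref{lem:equiv-nd} the set $M$ is itself the unique matching cut of $G$ equivalent to $M$ and it remains minimal, so output $M$. If $M\cap L\ne\emptyset$, then because every edge of $L$ separates a single pendant vertex from the rest of the graph (and is therefore already a matching cut), minimality forces $M=\{\ell_x\}$ for a unique subpendant vertex $x$; the equivalence class of $M$ consists exactly of the singletons $\{e\}$ with $e\in L_x$, which can be enumerated in linear time. Lemma~\ref{lem:equiv-nd} guarantees that as $M$ ranges over the minimal matching cuts of $H$, the produced sets partition the minimal matching cuts of $G$, meeting condition~(ii) of Definition~\ref{def:enum-kernel}.

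For \probMC and \probMaxMC the solution-lifting algorithm is the recursive backtracking routine \textsc{Enum Equivalent} from the proof of Theorem~\ref{thm:vc-kern}, reused without change. Given a (maximal) matching cut $M$ of $H$, decompose it as $M_1=M\cap L$ and $M_2=M\setminus M_1$; if $M_1=\emptyset$, output $M$, and otherwise identify $Y\subseteq X$ with $M_1=\{\ell_x\mid x\in Y\}$ and invoke $\textsc{Enum Equivalent}(M_2,Y)$ to enumerate all completions obtained by choosing one edge of each $L_x$ with $x\in Y$. The recursion depth is bounded by $|X|\le n$, and since a distinct matching cut of $G$ is output at every leaf, the delay is polynomial, while the partition property again follows from Lemma~\ref{lem:equiv-nd} together with the definition of equivalence. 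The main obstacle, beyond re-checking bookkeeping, is the structural observation that in the neighborhood-diversity setting minimal matching cuts of $H$ that meet $L$ must be singletons $\{\ell_x\}$; this relies on the fact that each pendant module has a singleton subpendant as its sole neighbor, so cutting off one pendant vertex always produces a valid matching cut. Once this is in hand, the remainder of the argument is a direct transcription of the vertex-cover proof.
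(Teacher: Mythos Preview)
Your proposal is correct and follows essentially the same approach as the paper, which explicitly states that the theorem is proved ``by the same arguments as in Theorem~\ref{thm:vc-kern}'' using Lemma~\ref{lem:equiv-nd} in place of Lemma~\ref{lem:equiv}. Your verification of the $3k$ vertex bound, the case split for \probMinMC (including the key observation that any edge of $L$ is itself a matching cut because it isolates a degree-one pendant vertex, forcing a minimal matching cut meeting $L$ to be a singleton $\{\ell_x\}$), and the reuse of \textsc{Enum Equivalent} for \probMC and \probMaxMC all match the intended argument.
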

Note that similarly to the parameterization by the vertex cover number, \probMaxMC and \probMC do not admit \fp{} enumeration kernels for the parameterization by the neighborhood diversity as demonstrated by the example when $G$ is the union of stars. Observe that the neighborhood diversity of the disjoint union of $k$ stars $K_{1,n/k-1}$ is $2k$.  

Combining Theorems~\ref{thm:nd-kern},~\ref{thm:general} and~\ref{obs:natural}, we obtain the following corollary.
\begin{corollary}\label{thm:upper-nd}
The minimal matching cuts of an $n$-vertex graph $G$ can be enumerated in $2^{\Oh(\nd(G))}\cdot n^{\Oh(1)}$ time, and the (maximal) matching cuts of $G$ can be enumerated with 
$2^{\Oh(\nd(G))}\cdot n^{\Oh(1)}$ delay.
\end{corollary}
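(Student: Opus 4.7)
The plan is to combine the three cited results directly, with $k=\nd(G)$. First, I would run the kernelization algorithm of Theorem~\ref{thm:nd-kern} in polynomial time to produce a kernel $H$ with $|V(H)|\le 3k$, together with the associated solution-lifting algorithm: a fully-polynomial one in the case of \probMinMC and polynomial-delay ones in the cases of \probMC and \probMaxMC.

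Second, I would enumerate the relevant matching cuts of the kernel $H$ using Theorem~\ref{thm:general}. Since $H$ has at most $3k$ vertices and the Fibonacci numbers satisfy $F(n)=\Oh(1.6181^n)=2^{\Oh(n)}$, all matching cuts of $H$ can be listed in $\Oh^*(F(3k+1))=2^{\Oh(k)}\cdot n^{\Oh(1)}$ total time, and in particular their number is $2^{\Oh(k)}$. The minimal and maximal matching cuts can be extracted from this list by a straightforward polynomial-time post-processing (or by keeping only the minima/maxima with respect to inclusion), so the same bound applies to each of the three problems.

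Third, I would invoke the solution-lifting algorithms. For \probMinMC the lifting is fully-polynomial, so after enumerating the $2^{\Oh(k)}$ minimal matching cuts of $H$, lifting each in polynomial time yields a partition of the set of minimal matching cuts of $G$ in total time $2^{\Oh(k)}\cdot n^{\Oh(1)}$. For \probMC and \probMaxMC the lifting has polynomial delay. Here I would precompute all (maximal) matching cuts of $H$ upfront as the initial delay cost of $2^{\Oh(k)}\cdot n^{\Oh(1)}$, and then run the polynomial-delay liftings one after another. Between consecutive outputs we are either inside a single lifting (polynomial delay) or transitioning to the next kernel solution, which is already stored; hence the overall delay is $2^{\Oh(k)}\cdot n^{\Oh(1)}$.

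There is no genuine obstacle: the proof is essentially a composition argument, and the only point worth spelling out is that the $2^{\Oh(k)}\cdot n^{\Oh(1)}$ precomputation for the kernel solutions is absorbed into the delay bound because a single pre-computation is charged to the first delay only. Alternatively, one can simply cite Theorem~\ref{obs:natural} to pass from the existence of the kernels of Theorem~\ref{thm:nd-kern} to \classFPT{} enumeration and \classFPT{} delay bounds, and then read off the concrete dependence $f(k)=2^{\Oh(k)}$ from the Fibonacci bound of Theorem~\ref{thm:general} applied to the kernel of size $\Oh(k)$.
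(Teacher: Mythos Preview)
Your proposal is correct and follows essentially the same approach as the paper, which simply states that the corollary follows by combining Theorems~\ref{thm:nd-kern}, \ref{thm:general}, and~\ref{obs:natural}. You have spelled out precisely this combination: kernelize to at most $3k$ vertices, enumerate all matching cuts of the kernel in $\Oh^*(F(3k+1))=2^{\Oh(k)}$ time, and lift; your closing remark about citing Theorem~\ref{obs:natural} is exactly the paper's one-line justification.
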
  
Observe that the neighborhood diversity of $P_n$ is $n$. By Observation~\ref{obs:path}, $P_n$ has $F(n+1)-1=F(\nd(P)+1)-1$ matching cuts. 
This immediately implies that the exponential dependence on $\nd(G)$ in the running time for \probMinMC is unavoidable. 

We conclude this section by considering the parameterization by the \emph{modular width} that is weaker than the neighborhood diversity parameterization but stronger than the cliquewidth parameterization.

 The \emph{modular width} of a graph $G$ (see, e.g.,~\cite{GajarskyLO13}), denoted by $\mw(G)$, is the minimum positive integer $k$ such that a graph isomorphic to $G$ can be recursively constructed by the following operations:
 \begin{itemize}
 \item Constructing a single vertex graph.
 \item The \emph{substitution} operation with respect to some graphs $Q$ with $2\leq r\leq k$ vertices $v_1,\ldots,v_r$ applied to $r$ disjoint graphs $G_1,\ldots,G_r$ of modular width at most $k$; the substitution operation, that generalizes the disjoint union and complete join, creates the graph $G$ with $V(G)=\bigcup_{i=1}^rV(G_i)$ and 
 \begin{equation*}
 E(G)=\big(\bigcup_{i=1}^rE(G_i)\big)\cup\big(\bigcup_{v_iv_j\in E(Q)}\{xy\mid x\in V(G_i)\text{ and }y\in V(G_j) \}\big).
 \end{equation*}
 \end{itemize}
 The modular width of a graph $G$ can be computed in polynomial (in fact, linear) time~\cite{GajarskyLO13,TedderCHP08}.
Notice that $\cw(G)-1\leq \mw(G)\leq \nd(G)$. 

We show that \probMinMC admits a \fp{} enumeration kernel for the modular width parameterization.
\begin{theorem}\label{thm:mw-kern}
\probMinMC admits a \fp{} enumeration kernel with at most $6k$~vertices  when parameterized by the modular width $k$ of the input graph. 
\end{theorem}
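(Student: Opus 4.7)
The plan is to adapt the approach of Theorem~\ref{thm:nd-kern} to the modular-width setting. Given $G$ with $\mw(G)=k$, I would compute in polynomial time the top-level modular decomposition $G=Q[G_1,\ldots,G_r]$, where $r\le k$ and $V_i:=V(G_i)$ denotes the vertex set of the $i$-th module (using, e.g., the algorithm of~\cite{TedderCHP08,GajarskyLO13}). The trivial subcases ($G$ disconnected, or $|V(G)|\le 1$) are handled as in previous theorems. For the main case ($G$ connected, $r\ge 2$, $Q$ connected), I would build the kernel as $H=Q[G_1',\ldots,G_r']$, where each representative $G_i'$ has at most $6$ vertices, giving $|V(H)|\le 6r \le 6k$; the solution-lifting algorithm will then recover $\sol(G)$ from $\sol(H)$ as required by Definition~\ref{def:enum-kernel}.

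The technical heart is a structural lemma for minimal matching cuts $M=E_G(A,B)$. Since $v_iv_j\in E(Q)$ forces all edges between $V_i$ and $V_j$ to be present in $G$, the matching condition at any $w\in V_j\cap B$ yields $|V_i\cap A|\le 1$ (and symmetrically). A short case analysis then shows that whenever $|V_i|\ge 3$ and $V_i$ is split across $\{A,B\}$, this rigidity cascades to force $v_i$ to have a unique $Q$-neighbor $v_j$, the module $V_j$ to be a singleton $\{w\}$, the minority side of $V_i$ to consist of a single vertex $u$ isolated in $G_i$, and the cut to collapse to $\{uw\}$. Consequently every minimal matching cut of $G$ is either (i) a single-edge \emph{leaf cut} at a pendant vertex $u$ of $G$ (and the leaves of $G$ are exactly the isolated vertices of some $G_i$ whose $v_i$ has a unique $Q$-neighbor that is a singleton module), or (ii) a \emph{module-level cut} in which every module of size at least $3$ lies entirely on one side of $\{A,B\}$, while modules of size at most $2$ may be split. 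As a byproduct, internal connectivity of $G_i$ turns out to be irrelevant to cut validity: in a module-level cut with $|V_i|\ge 2$ and $V_i\subseteq A$, matching forces some $Q$-neighbor $v_j$ of $v_i$ to have $V_j\cap A\ne\emptyset$, and the complete-join edges then automatically glue all components of $G_i[V_i]$ inside $G[A]$.

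Using the lemma, I would define $G_i'$ for modules with $|V_i|\ge 7$ to have at most $6$ vertices and preserve only (a) the signal $|V_i|\ge 3$ (by ensuring $|V(G_i')|\ge 3$), and (b) the presence of at least one isolated vertex (by including exactly one isolated vertex in $G_i'$ iff $G_i$ has one); keep $G_i'=G_i$ when $|V_i|\le 6$. Since the validity of a module-level cut depends only on which modules have sizes $\ge 1,\ge 2,\ge 3$ and on $Q$ (which is unchanged), module-level minimal matching cuts of $G$ correspond bijectively to those of $H$; likewise, leaf cuts of $H$ are in one-to-polynomially-many correspondence with leaf cuts of $G$ via the isolated representative. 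The solution-lifting algorithm then outputs, for each module-level cut of $H$, the unique corresponding cut of $G$ together with the polynomially many splittings of any size-$2$ module that is split in $H$, and for each leaf cut $\{u'w\}$ of $H$ it outputs the list $\{\{uw\}: u\text{ is an isolated vertex of }G_i\}$ where $u'$ is the isolated representative of $G_i'$; the resulting sets are non-empty, computable in polynomial time, and partition $\sol(G)$. The main obstacle is carrying out the case analysis precisely enough to pin down the right $6$-vertex representative $G_i'$ for every combination of isolated-vertex presence and size regime, so that neither spurious minimal matching cuts are created in $H$ nor any are lost from $G$; the structural rigidity above fortunately reduces this to a finite case check rather than an open-ended verification.
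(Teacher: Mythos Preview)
Your structural analysis is essentially sound and would lead to a correct proof, but it takes a substantially different and longer route than the paper. The paper does not build the kernel directly inside the modular decomposition. Instead, it proves a single short claim: for \emph{any} matching cut $E(A,B)$ and any top-level module $U_i$, the set $Y_i$ of non-isolated vertices of $G[U_i]$ lies entirely in $A$ or entirely in $B$ (because any edge inside $U_i$ sits in a triangle with a vertex of a neighbouring module, and triangles cannot be separated). It then \emph{adds} edges to turn each $Y_i$ into a clique; by the claim this leaves the set of matching cuts unchanged. The resulting graph $G'$ has neighbourhood diversity at most $2r\le 2k$ (each module splits into a clique module $Y_i$ and an independent module $X_i$), and Theorem~\ref{thm:nd-kern} applied to $G'$ immediately gives a fully-polynomial enumeration kernel with at most $3\cdot 2k=6k$ vertices. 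No direct design of representatives $G_i'$ and no bespoke solution-lifting are needed.

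Your approach reproves much of the neighbourhood-diversity kernel from scratch, which is why it requires a finer case analysis. Two places in your write-up would need tightening before it goes through. First, your two output rules do not cover leaf cuts of $H$ at a pendant vertex $u'\in V(G_i')$ when $3\le |V_i|\le 6$: such a cut is not module-level (a size-$\ge 3$ module is split) and $u'$ is not an ``isolated representative'' (you kept $G_i'=G_i$), so neither rule fires; you need a third rule that outputs it verbatim. Second, the phrase ``together with the polynomially many splittings of any size-$2$ module that is split in $H$'' is misplaced: since you keep all modules of size $\le 6$ intact, a size-$2$ module is split identically in $G$ and $H$, so the module-level correspondence is already bijective there. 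These are easy fixes, but they illustrate why the paper's reduction to neighbourhood diversity is preferable: it offloads exactly this bookkeeping to the already-proved Theorem~\ref{thm:nd-kern}.
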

\begin{proof}
Let $G$ be a graph with $\mw(G)=k$. If $G$ is disconnected, then the empty set is a unique matching cut of $G$. Then the kernelization algorithm outputs $H=2K_1$. The solution-lifting algorithm is trivial in this case. The case $G=K_1$ is also trivial. 
 Assume from now that $G$ is a connected graph with at least two vertices. This implies that $G$ is obtained by the substitution operation with respect to some connected graph $Q$ with at most $k$ vertices from graphs of modular width at most $k$. This means that $G$ has a \emph{modular decomposition} $\mathcal{U}=\{U_1,\ldots,U_r\}$ for $2\leq r\leq k$, that is, a partition of $V(G)$ into $r$ modules. These modules can be computed in linear time~\cite{TedderCHP08}. For each $i\in \{1,\ldots,r\}$, let $X_i\subseteq U_i$ be the set of isolated vertices of $G[U_i]$ and let $Y_i=U_i\setminus X$. We show the following claim.
\begin{claim}\label{cl:module} 
If $M=E(A,B)$ is a matching  cut of $G$ for a partition $\{A,B\}$ of $V(G)$, then for every $i\in\{1,\ldots,r\}$, either $Y_i\subseteq A$ or $Y_i\subseteq B$.
\end{claim}
\begin{proof}[Proof of Claim~\ref{cl:module}]
Consider $Y_i$ for some $i\in \{1,\ldots,r\}$. The claim is trivial if $Y_i=\emptyset$. Assume that $Y_i\neq\emptyset$. Recall that $H=G[Y_i]$ has no isolated vertices by definition. Let $uv\in E(H)$. Because $G$ is connected and has at least two modules, there is $w\in V(G)\setminus U_i$ that is adjacent to $u$ and $v$. Because $u$, $v$, and $w$ compose a triangle in $G$ that cannot be separated by a matching cut, we conclude that either $u,v,w\in A$ or $u,v,w\in B$. This implies that for every connected component $H'$ of $H$, either $V(H')\subseteq A$ or $V(H')\subseteq B$. Now suppose that $H'$ and $H''$ are distinct components of $H$. Let $uv\in E(H')$ and $u'v'\in E(H'')$. By the same arguments as before, there is  $w\in V(G)\setminus U_i$ that is adjacent to $u$, $v$, $u'$, and $v'$. Since the triangles $uvw$ and $u'v'w$ are either in $A$ or in $B$, we conclude that either 
$V(H')\cup V(H'')\subseteq A$ or $V(H')\cup V(H'')\subseteq B$. Therefore, either $Y_i\subseteq A$ or $Y_i\subseteq B$.
\end{proof}
We construct $G'$ from $G$ by making each set $Y_i$ a clique by adding edges. By Claim~\ref{cl:module}, $M$ is a matching cut of $G$ if and only if $M$ is a matching cut of $G'$. Hence, the enumeration of minimal matching cuts of $G$ is equivalent to the enumeration of minimal matching cuts of $G'$. Because every $X_i$ is an independent set and every $Y_i$ is a clique, $\nd(G)\leq 2r\leq 2k$. This allows to apply Theorem~\ref{thm:nd-kern} that implies the existence of a \fp{} enumeration kernel for \probMinMC with at most $6k$ vertices.  
\end{proof}
Notice that it is crucial for the \fp{} enumeration kernel for \probMinMC parameterized by the modular width that the empty set is the unique matching cut of a disconnected graph. If we exclude empty matching cuts, then we can obtain the following kernelization conditional lower bound. 
\begin{proposition}\label{prop:no-kern-mw}
 The problem of enumerating nonempty matching cuts (minimal or maximal matching cuts, respectively)
  does not admit a \pd{} enumeration kernel of polynomial size when parameterized by the modular width of the input graph unless $\compass$. 
 \end{proposition}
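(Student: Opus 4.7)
The plan is to rule out polynomial kernels for the three corresponding decision problems (``does $G$ have a nonempty matching cut, respectively a nonempty minimal or maximal matching cut?'') under the modular width parameterization. A polynomial-size \pd{} enumeration kernel would yield such a polynomial decision kernel, since by condition~(ii$^*$) of Definition~\ref{def:enum-kernel} the sets $\{S_s\mid s\in\sol(y)\}$ form a partition of $\sol(x)$ with each $S_s$ nonempty, so $\sol(x)\neq\emptyset$ iff $\sol(y)\neq\emptyset$. Hence it suffices to establish the kernelization lower bound for the decision versions.

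For the all-cuts and the maximal-cuts decision problems I would apply an OR-cross-composition in the sense of Bodlaender, Jansen and Kratsch from the classical \textsc{Matching Cut} problem, which is \NP-hard by Chv\'atal. Given $t$ instances $G_1,\ldots,G_t$ of the same number of vertices, set the composed instance to be the disjoint union $G = G_1 + G_2 + \cdots + G_t$. A nonempty matching cut of $G$ must contain an edge belonging to some $G_i$ and hence restricts to a nonempty matching cut of that $G_i$; conversely, any nonempty matching cut of some $G_i$ lifts to one of $G$ by placing the remaining components on one side. The same equivalence holds for nonempty maximal matching cuts, since every nonempty matching cut extends to a maximal one on the same instance. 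The disjoint union is realized by a substitution into the two-vertex graph $Q=\overline{K_2}$, so
\begin{equation*}
\mw(G)\leq \max\{2,\mw(G_1),\ldots,\mw(G_t)\}\leq \max_i |V(G_i)|,
\end{equation*}
i.e.\ the modular width of the composed instance is polynomial in the maximum input size, as the cross-composition framework requires. Together with the \NP-hardness of the source, this rules out a polynomial kernel for the all-cuts and maximal-cuts decision variants unless $\compass$.

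The nonempty-minimal variant is more delicate because a disconnected graph has no nonempty minimal matching cut at all (its only minimal matching cut being $\emptyset$), so the plain disjoint union construction yields a trivial \noinstance{} and does not realize an OR-cross-composition for this case. The plan is to augment the disjoint union by a constant-size connecting gadget that makes the graph connected while preserving the OR-behavior: for instance, a small structure consisting of a few triangles whose ``apex'' vertices are each linked to a common designated vertex of every $G_i$, where (a) the triangle edges force all gadget vertices to lie on the same side of any matching cut, and (b) the multiplicity of attachment edges at each designated vertex forbids any single cross edge from forming a standalone matching cut. Given such a gadget, the matching cuts of the augmented graph decompose into the matching cuts of the individual $G_i$'s, so the OR equivalence carries over to the minimal case; since the gadget has constant size, it increases $\mw(G)$ by at most a constant.

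The main technical obstacle is this last verification: one has to exhibit a gadget and case-analyse the side of the partition containing each gadget vertex to show simultaneously (i) that no spurious nonempty matching cut is introduced by the gadget or by the gadget-to-$G_i$ edges, (ii) that the composed graph is connected so that nonempty minimal matching cuts can exist, and (iii) that the modular width remains polynomial in the input size. Conditions (ii) and (iii) are routine, while condition (i) is the crux of the minimal-case argument.
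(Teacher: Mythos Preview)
Your reduction from a polynomial-delay enumeration kernel to a polynomial decision kernel is correct, and for the all-cuts and the maximal-cuts variants your OR-cross-composition via disjoint union is exactly what the paper does: take $G=G_1+\cdots+G_t$, observe that $G$ has a nonempty matching cut iff some $G_i$ does (and likewise for nonempty maximal cuts), and bound $\mw(G)$ by $\max\{2,\max_i\mw(G_i)\}$ using iterated substitution into $\overline{K_2}$. The paper's proof does not treat the minimal case separately; it simply asserts that it suffices to rule out a polynomial kernel for the single decision problem ``does $G$ have a nonempty matching cut?''.

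You are right that this does not straightforwardly cover the minimal case: the decision problem underlying ``enumerate nonempty \emph{minimal} matching cuts'' is ``is $G$ connected and does $G$ have a matching cut?'', and the disjoint union always produces a disconnected \noinstance{}. So the subtlety you flag is real.

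Where your proposal has a gap is in the fix. You describe a constant-size connecting gadget attached to one designated vertex of each $G_i$ and assert that condition~(iii), bounded modular width, is ``routine''. It is not. The moment a gadget vertex is adjacent to some but not all vertices of $G_i$, the set $V(G_i)$ ceases to be a module of the combined graph, and the recursive substitution structure that keeps $\mw$ bounded is lost; there is no reason the modular width of the resulting graph should be bounded by any function of $\max_i|V(G_i)|$ independent of $t$. Conversely, gadgets that \emph{do} preserve the module structure must be adjacent to all of each $G_i$ (or to none), and a short case check shows that any such complete-join gadget with at least one vertex forces essentially all of $V(G)$ onto one side of every matching cut, killing the OR-equivalence. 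So condition~(iii) is in tension with condition~(i), and it is this tension---not the case analysis for~(i)---that is the crux of the minimal-case argument. Your plan needs either a substantially different composition that keeps the target connected while respecting the module structure, or an indirect route via a polynomial-parameter transformation.
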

 \begin{proof}
 As with Proposition~\ref{prop:no-kern-tw}, it is sufficient to show that the decision version of the matching cut problem does not admit a polynomial kernel when parameterized by the modular width of the input graph unless $\compass$.
 Let $G_1,\ldots,G_t$ be disjoint graphs of modular width at most $k\geq 2$. Let $G$ be the disjoint union of $G_1,\ldots,G_t$. By the definition of the modular width, $\mw(G)\leq k$.  
 Clearly, $G$ has a nonempty matching cut if and only if there is $i\in\{1,\ldots,t\}$ such that $G_i$ has a matching cut. Since deciding whether a graph has a nonempty matching cut is \classNP-complete~\cite{Chvatal84}, the results of Bodlaender et al.~\cite{BodlaenderDFH09} imply that the decision problem does not admit a polynomial kernel unless $\compass$.   
  \end{proof}
Proposition~\ref{prop:no-kern-mw} indicates that it is unlikely that \probMC and \probMaxMC have polynomial 
\pd{} enumeration 
kernels of polynomial size under the modular width parameterization. Notice, however, that Proposition~\ref{prop:no-kern-mw} by itself does not imply a kernelization lower bound.

\section{Enumeration Kernels for the Parameterization by the Feedback Edge Number}\label{sec:fen}
A set of edges $X$ of a graph $G$ is said to be a \emph{feedback} edge set if $G-S$ has no cycle, that is, $G-S$ is a forest. The minimum size of a feedback edge set is called the \emph{feedback edge} number or the \emph{cyclomatic} number. We use $\fen(G)$ to denote the feedback edge number of a graph $G$. It is well-known (see, e.g.,~\cite{Diestel12}) that if $G$ is a graph with $n$ vertices, $m$ edges and $r$ connected components, then $\fen(G)=m-n+r$ and a feedback edge set of minimum size can be found in linear time. Throughout this section, we assume that the input graph in an instance of \probMinMC or \probMC is given together with a feedback edge set.  Equivalently, we may assume that kernelization and solution-lifting algorithms are supplied by the same algorithm computing a minimum feedback edge set. Then this algorithm computes exactly the same set for the given input graph.
Note that $\tw(G)\leq \fen(G)+1$, because a forest can be obtained from~$G$ by deleting an arbitrary end-vertex of each edge of a feedback edge set.

Our algorithms use the following folklore observation that we prove for completeness.

\begin{observation}\label{obs:vertices}
Let $F$ be a forest. 
Let also $n_{\leq 1}$ be the number of vertices of degree at most one, $n_2$ be the number of vertices of degree two, and $n_{\geq 3}$ be the number of vertices of degree at least three.  Then $n_{\geq 3}\leq n_{\leq 1}-2$.    
\end{observation}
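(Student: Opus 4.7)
The plan is to apply the handshake lemma together with Euler's formula for forests. If $F$ has $n$ vertices, $m$ edges, and $c$ connected components, then $m = n - c$. Let $n_0$ and $n_1$ denote the numbers of vertices of degree exactly $0$ and exactly $1$ respectively, so that $n_{\leq 1} = n_0 + n_1$ and $n = n_0 + n_1 + n_2 + n_{\geq 3}$. Using that the sum of all degrees equals $2m = 2(n-c)$ and that each vertex of degree at least $3$ contributes at least $3$ to this sum, I would obtain
$$n_1 + 2 n_2 + 3 n_{\geq 3} \;\leq\; 2(n_0 + n_1 + n_2 + n_{\geq 3}) - 2c,$$
which simplifies to $n_{\geq 3} \leq 2 n_0 + n_1 - 2c$.

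I would then sharpen this by treating components individually. Specializing the above to a single non-trivial tree component $T_i$ (for which $n_0^{(i)} = 0$ and $c = 1$) gives $n_{\geq 3}^{(i)} \leq n_1^{(i)} - 2$. An isolated-vertex component contributes $0$ to $n_{\geq 3}$ and $1$ to $n_0$. Summing over all $c_{\mathrm{nt}}$ non-trivial components yields
$$n_{\geq 3} \;\leq\; n_1 - 2 c_{\mathrm{nt}},$$
and combining with $n_{\leq 1} = n_0 + n_1$ gives $n_{\geq 3} \leq n_{\leq 1} - 2$ whenever either $c_{\mathrm{nt}} \geq 1$ or $n_0 \geq 2$; in the applications of interest $F$ has at least two vertices, so one of these alternatives holds.

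The argument is essentially routine degree counting. The only mild subtlety, which would be the main thing to get right, is handling isolated vertices cleanly: combining them with non-trivial components via the per-component viewpoint avoids any loss in the constant $-2$ and sidesteps the degenerate one-vertex case implicitly excluded by the applications in Section~\ref{sec:fen}.
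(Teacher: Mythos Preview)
Your proof is correct and follows essentially the same degree-counting approach as the paper: handshake lemma combined with the forest edge count $m=n-c$. The paper is marginally more direct, using the single global bound $|E(F)|\le |V(F)|-1-n_0$ (equivalently $c\ge n_0+1$, valid once there is at least one non-trivial component) instead of your per-component refinement, but the substance is identical, and your explicit handling of the degenerate one-vertex case is a welcome clarification that the paper leaves implicit.
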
 

\begin{proof}
Denote by $n_0$ the number of isolated vertices, and let $n_1$ be the number of vertices of degree one.
 Observe that $|V(F)|=n_0+n_1+n_2+n_{\geq 3}$, $|E(F)|\leq |V(F)|-1-n_0$, and 
 \begin{equation*}
 2|E(F)|=\sum_{v\in V(F)}d_F(v)\geq n_1+2n_2+3n_{\geq 3}.
 \end{equation*}
 Then 
 \begin{equation*}
 n_1+2n_2+3n_{\geq3}\leq 2(n_{1}+n_2+n_{\geq 3})-2-2n_0.
 \end{equation*} 
Therefore,  $n_{\geq 3}\leq n_{1}-2-2n_0\leq n_{\leq 1}-2$.
\end{proof}

In contrast to vertex cover number and neighborhood diversity, \probMinMC does not admit a \fp{} enumeration kernel in case of the feedback edge number: let $\ell$~and~$k$ be positive integers and consider the graph $H_{k,\ell}$  that is constructed as follows. 
\begin{itemize}
\item For every $i\in\{1,\ldots,k\}$, construct two vertices $u_i$ and $v_i$ and a $(u_i,v_i)$-path of length $\ell$.
\item Add edges to make each of $u_1,\cdots,u_k$ and~$v_1,\cdots,v_k$ a path of length~$k-1$.
\end{itemize} 
Observe that $H_{k,\ell}$ has at least $\ell^k$ minimal matching cuts composed by taking one edge from  every $(u_i,v_i)$-path. Since $H_{k,\ell}$ has $n=k(\ell+1)$ vertices and $\fen(H_{k,\ell})=k-1$, the number of minimal matching cuts is at least $\big(\frac{n}{\fen(H_{k,\ell})-1}-1\big)^{\fen(H_{k,\ell})}$. This immediately implies that the minimal matching cuts cannot be enumerated in \classFPT time. In particular,  \probMinMC cannot have a \fp{} enumeration kernel by Theorem~\ref{obs:natural}. However, this problem and \probMC admit \pd{} enumeration kernels. 

The kernels for the problems are similar but the kernel for \probMC  requires some technical details that do not appear in the kernel for \probMinMC. Therefore, we consider \probMinMC  separately even if some parts of the proof of the following theorem will be repeated later.

\begin{theorem}\label{thm:fen-kern-min}
\probMinMC admits a 
\pd{} enumeration kernel with $\Oh(k)$ vertices when parameterized by the feedback edge number $k$ of the input graph.
\end{theorem}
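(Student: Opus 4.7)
The plan is to construct a kernel $H$ of size $\Oh(k)$ by combining equivalence-based reductions in the spirit of Section~\ref{sec:vc} with path compression that exploits the near-tree structure of $G$, and then to provide a backtracking solution-lifting algorithm analogous to Algorithm~\ref{alg:max}.

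Begin with the trivial case: if $G$ is disconnected, the empty set is the unique minimal matching cut, so output $H = 2K_1$ with trivial solution-lifting. Otherwise, compute a minimum feedback edge set $F$ with $|F| = k$ in linear time and let $T = G - F$ be a spanning tree. Every minimal matching cut of $G$ is of one of two types: $(a)$ a singleton $\{e\}$ for a bridge $e$ of $G$, or $(b)$ a cut of size at least two contained in a single $2$-edge-connected subgraph of $G$. This dichotomy suggests grouping bridges into equivalence classes while preserving the cyclic structure explicitly.

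Three reduction rules are then applied. $(i)$ A \emph{leaf-twin reduction}: whenever two pendants of $G$ share a common neighbor, delete one. $(ii)$ A \emph{pendant-tree reduction}: each maximal subtree of $G$ attached to the rest of $G$ via a single cut-vertex is replaced by one pendant placeholder, since every edge of such a subtree is a bridge and gives a singleton minimal MC. $(iii)$ A \emph{path compression}: each maximal induced path $u - v_1 - \cdots - v_\ell - w$ whose internal vertices have $G$-degree two is replaced by a path of length three when $\ell + 1 > 3$. Rule $(iii)$ is valid because in a minimal matching cut $M$ the intersection $M \cap E(\text{chain})$ contains at most two non-adjacent edges: three or more chain cuts would leave isolated interior chain pieces in $G - M$ (their vertices have only chain neighbors), so $G - M$ would have at least three components, contradicting the minimality of $M$. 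After all reductions, the topologically reduced cyclic core of $G$ has $\Oh(k)$ vertices (by Observation~\ref{obs:vertices} combined with a degree-sum argument using that every interior vertex of the core has degree at least three), and at most $\Oh(k)$ placeholder pendants are attached, yielding $|V(H)| = \Oh(k)$.

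The solution-lifting algorithm, given a minimal matching cut $M'$ of $H$, enumerates the minimal matching cuts of $G$ equivalent to $M'$ with polynomial delay via backtracking: for each twin-leaf or pendant placeholder used by $M'$, enumerate the choice of an actual leaf or bridge edge in $G$; for each compressed chain on which $M'$ places one or two cut edges, enumerate the corresponding positions in the original chain. The delay is polynomial, mirroring Algorithm~\ref{alg:max}. The main obstacle is to verify that the reductions preserve minimal matching cuts up to the chosen equivalence (analogous to Lemma~\ref{lem:equiv}). Rule $(iii)$ in particular requires distinguishing bridge-chains (whose edges are each bridges of $G$, producing singleton minimal MCs) from cycle-chains (whose edges only produce minimal MCs in combination with cuts elsewhere) and arguing uniformly that a length-three kernel chain faithfully encodes the set of cut patterns---zero, one, or two non-adjacent cuts---in both cases, so that the equivalence classes of minimal matching cuts of $G$ correspond bijectively to the minimal matching cuts of $H$.
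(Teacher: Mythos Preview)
Your overall plan---strip pendant trees, compress degree-two chains, and use a backtracking solution-lifting algorithm---is exactly the shape of the paper's argument, but there is a genuine gap in your size analysis, coming from rule~(ii).

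You keep one placeholder pendant per maximal pendant subtree, and then assert that ``at most $\Oh(k)$ placeholder pendants are attached''. This is false. Take a cycle $C_m$ and attach a single pendant vertex to every cycle vertex; this graph has feedback edge number $k=1$, but $m$ pendant subtrees, each already a single edge. Rule~(i) does nothing (no two pendants share a neighbour), rule~(ii) does nothing (each subtree is already one pendant), and rule~(iii) does nothing either, because every cycle vertex has degree three in $G$ and hence is \emph{not} an internal vertex of a degree-two chain. Your kernel still has $2m$ vertices. The degree-sum argument via Observation~\ref{obs:vertices} bounds the number of degree-$\geq 3$ vertices in the spanning forest, not the number of pendant subtrees of $G$, and these two quantities are unrelated.

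The paper's fix is to keep \emph{at most one} pendant edge $e^*$ as a global representative: pick a single degree-one vertex $u^*$ (if one exists), then iteratively delete all other vertices of degree at most one. Every minimal matching cut of the form $\{e\}$ with $e$ outside the resulting core is a bridge cut, and all of them are generated by the solution-lifting algorithm from the single kernel solution $\{e^*\}$. Once only one pendant survives, the core genuinely has minimum degree two (except possibly at $u^*$), Observation~\ref{obs:vertices} applies, and chain compression to length three gives the $\Oh(k)$ bound. Your rule~(ii) is the right idea taken one step too short; replacing it by ``delete all but one pendant'' repairs the argument.

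A secondary remark: your dichotomy claim that every minimal matching cut of size at least two lies inside a single $2$-edge-connected subgraph is plausible but unproved here, and in any case you never actually use it---your reductions work edge-locally on bridges and chains, not block-globally. You can drop it.
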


\begin{proof}
Let $G$ be a graph with $\fen(G)=k$ and a feedback edge set $S$ of size $k$. If $G$ is disconnected, then the empty set is the unique minimal matching cut. Accordingly, the kernelization algorithm returns $2K_1$ and the solution-lifting  algorithm outputs the empty set for the empty matching cut of $2K_1$. If  $G=K_1$, then the kernelization algorithm simply returns $G$.  If $G$ is a tree with at least one edge, then the kernelization algorithm returns $K_2$. Then for the unique matching cut of $K_2$, the solution-lifting algorithm outputs $n-1$ minimal matching cuts of $G$ composed by single edges. Clearly, this can be done in $\Oh(n)$ time. We assume from now that $G$ is a connected graph distinct from a tree, that is, $S\neq\emptyset$.  

If $G$ has a vertex of degree one, then pick an arbitrary such vertex $u^*$. Let $e^*$ be the edge incident to $u^*$. We iteratively delete vertices of degree at most one distinct from $u^*$. Denote by $G'$ the obtained graph. Notice that $G'$ has at most one vertex of degree one and if such a vertex exists, then this vertex is $u^*$. Observe also that $S$ is a minimum feedback edge set of $G'$. Let $T=G'-S$. Clearly $T$ is a tree. Notice that $T$ has at most $2|S|+1\leq 2k+1$ leaves. By Observation~\ref{obs:vertices}, $T$ has at most $2k-1$ vertices of degree at least three. Denote by $X$ the set of vertices of $T$ that either 
are end-vertices of the edges of $S$, or have degree one,  
or have degree at least three. Because every vertex of $T$ of degree one is either $u^*$ or an end-vertex of some edge of $S$,
we have that $|X|\leq 4k$. 
By the construction, every vertex $v$ of $G'$ of degree two is an inner vertex of an $(x,y)$-path $P$ such that $x,y\in X$ and the inner vertices of $P$ are outside $X$. Moreover, for every two distinct $x,y\in X$, $G'$ has at most one $(x,y)$-path $P_{xy}$ with all its inner vertices outside $X$. We denote by $\mathcal{P}$ the set of all such paths.
We say that an edge of $P_{xy}$ is the \emph{$x$-edge} if it is incident to $x$ and is the \emph{$y$-edge} if is incident to $y$; the other edges are said to be \emph{middle} edges of $P_{xy}$. We say that $P_{xy}$ is \emph{long}, if $P_{xy}$ has length at least four. Then we apply the following reduction rule exhaustively. 

\begin{reduction}\label{red:contr} 
If there is a long path $P_{xy}\in \mathcal{P}$ for some $x,y\in X$, then contract an arbitrary middle edge of $P_{xy}$.
\end{reduction}

\noindent
Denote by $H$ the obtained graph. Denote by $\mathcal{P}'$ the set of paths obtained from the paths of $\mathcal{P}$; we use $P_{xy}'$ to denote the path obtained from $P_{xy}\in\mathcal{P}$. 

Our kernelization algorithm returns $H$ together with $S$. 

Recall that $S\subseteq E(G[X])$. Then $H-E(G[X])$ is a forest. Moreover, by the construction, the vertices of degree at most one of this forest are in $X$.
 This implies that $|\mathcal{P}'|\leq |X|-1$.
Because  $|X|\leq 4k$, $|\mathcal{P}'|\leq 4k-1$. 
Since $\mathcal{P}'$ does not contain long paths, every path of $\mathcal{P}'$ contains at most two inner vertices. 
Therefore, $|V(H)|\leq |X|+2|\mathcal{P}'|\leq 4k+2(4k-1)\leq 10k$. This means that $H$ has the required size.

To construct the solution-lifting algorithm, we need some properties of minimal matching cuts of $G$. Observe that the set $\mathcal{M}$ of all minimal matching cuts of $G$ can be partitioned into three (possibly empty) subsets $\mathcal{M}_1$, $\mathcal{M}_2$, and $\mathcal{M}_3$ as follows.
\begin{itemize}
\item Every edge of $E(G)\setminus E(G')$ is a bridge of $G$ and, therefore, forms a minimal matching cut of $G$. We define $\mathcal{M}_1$ to be the set of these matching cuts, that is, $\mathcal{M}_1=\{\{e\}\mid e\in E(G)\setminus E(G')\}$.
\item For every $P_{xy}\in \mathcal{P}$, every minimal matching cut contains at most two edges of $P_{xy}$. Moreover, every two edges of $P_{xy}$ with distinct end-vertices form a minimal matching cut of $G$, unless the edges of $P_{xy}$ are bridges of $G$. We define 
$\mathcal{M}_2=\{\{e_1,e_2\}\mid \{e_1,e_2\}\text{ is a minimal matching cut  of }G\text { such that }e_1,e_2\in P_{xy}\text{ for some }P_{xy}\in\mathcal{P}\}$.
\item The remaining minimal matching cuts compose $\mathcal{M}_3$. Notice that for every matching cut $M\in\mathcal{M}_3$, (i) $M\subseteq E(G')$ and $M$~is a minimal matching cut of $G'$, and (ii) for every $P_{xy}\in\mathcal{P}$, $|M\cap E(P_{xy})|\leq 1$.
\end{itemize}
We use this partition of $\mathcal{M}$ in our solution-lifting algorithm. For this, we define $\mathcal{M}'$ to be the set of minimal matching cuts of $H$. We also consider the partition of $\mathcal{M}'$ into $\mathcal{M}_2'$ and $\mathcal{M}_3'$, where $\mathcal{M}'_2$ is the set of all minimal matching cuts of $H$ formed by two edges of some $P_{xy}'\in\mathcal{P}'$ and $\mathcal{M}_3'=\mathcal{M}'\setminus \mathcal{M}_2'$. 
Similarly to $\mathcal{M}_3$, we have that  for every $M\in\mathcal{M}_3'$, $M$ is a minimal matching cuts of $H$ and for every $P_{xy}'\in\mathcal{P}'$, $|M\cap E(P_{xy}')|\leq 1$.
Observe that, contrary to $\mathcal{M}$, 
$\mathcal{M}'$ is partitioned into two sets as $\mathcal{M}'$ does not contain matching cuts corresponding to the cuts of $\mathcal{M}_1$.

Notice that by our assumption the input graph is given together with $S$ and $S\subseteq E(H)$. This allows us to find $u^*$ and $e^*$ (if they exist) as $u^*$ is the unique vertex of degree one in $G'$. Then we can recompute $X$ and the sets of paths $\mathcal{P}$ and $\mathcal{P}'$ of $G$ and $H$, respectively, in polynomial time. Hence, we can assume that the solution-lifting algorithm has access to these sets. 

First, we deal with $\mathcal{M}_1$. Notice that if $\mathcal{M}_1\neq\emptyset$, then $G$ has at least one vertex of degree one. This means, that $H$ contains $u^*$ and $e^*$. Recall that $u^*$ is a vertex of degree one and $e^*$ is the edge incident to $u^*$. Observe that $\{e^*\}$ is a minimal matching cut in both $G$ and $H$, and $\{e^*\}\in\mathcal{M}_3$ and $\{e^*\}\in\mathcal{M}_3'$.  Given the minimal matching cut $\{e^*\}$ of $H$, the solution-lifting algorithm outputs $\{e^*\}$ and then the minimal matching cuts of $\mathcal{M}_1$. Clearly, $|\mathcal{M}_1|\leq n$ and the elements of $\mathcal{M}_1$ can be listed with constant delay.

Next, we consider $\mathcal{M}_2$. If $\mathcal{M}_2\neq\emptyset$, then there is $P_{xy}\in\mathcal{P}$ of length at least three, such that $\{e_1,e_2\}$ is a minimal matching cut for $G$ for some $e_1,e_2\in E(P_{xy})$. Notice that the corresponding path $P_{xy}'\in \mathcal{P}'$ in $H$ has length three and the $x$- and $y$-edges of $P_{xy}'$ form a minimal matching cut  of $H$. Moreover, this is the unique minimal matching cut of $H$ formed by two edges of $P_{xy}'$. 
Given a minimal matching cut $\{e_1,e_2\}\in \mathcal{M}_2'$ of $H$ such that $e_1$ and $e_2$ are $x$- and $y$-edges of some path $P_{xy}'\in \mathcal{P}'$, the solution-lifting algorithm outputs the matchings $\{e_1',e_2'\}$, where $e_1',e_2'\in E(P_{xy})$. Notice that we have at most $n^2$ such matchings and they can be enumerated with polynomial delay. It is straightforward to verify that for every minimal matching cut of $\mathcal{M}_2'$, the solution-lifting algorithm outputs a nonempty set of minimal matching cuts of $G$, the matching cuts listed for distinct element of $\mathcal{M}_2'$ are distinct,
and the union of all produced minimum matching cuts over all elements of $\mathcal{M}_2'$ gives $\mathcal{M}_2$.

Finally, we consider $\mathcal{M}_3$. Recall that for every $M\in \mathcal{M}_3$, $M$  is a minimal matching cut of $G'$ and $|M\cap E(P_{xy})|\leq 1$ for every $P_{xy}\in\mathcal{P}$. 
 Let $M\in\mathcal{M}_3$ and $M'\in\mathcal{M}_3'$. We say that $M$ and $M'$ are \emph{equivalent}  
if $M\cap E(G[X])=M'\cap E(G[X])$ and the following holds for every $P_{xy}\in\mathcal{P}$:
\begin{itemize}
\item $M$ contains the $x$-edge of $P_{xy}$ if and only if $M'$ contains the $x$-edge of $P_{xy}'$,
\item $M$ contains the $y$-edge of $P_{xy}$ if and only if $M'$ contains the $y$-edge of $P_{xy}'$,
\item $M$ contains a middle edge of $P_{xy}$ if and only if $M'$ contains the unique middle edge of $P_{xy}'$.
\end{itemize}
By the construction of $H$, it is straightforward to see that a minimal matching cut $M$ of $G$ is in $\mathcal{M}_3$ if and only if there is an equivalent minimal matching cut $M'\in\mathcal{M}_3'$ of $H$. Notice also that if $M_1',M_2'\in \mathcal{M}_3'$ are distinct, then any $M_1,M_2\in\mathcal{M}_3$ that are equivalent to $M_1'$ and $M_2'$, respectively, are distinct. 

Recall that if $G$ has a vertex of degree one, then the vertex $u^*$ and the edge $e^*$ are in $G'$ and $H$, and it holds that . 
 $\{e^*\}$ is a minimal matching cut of both $G$ and $H$ that belongs to $ \mathcal{M}_3$ and $\mathcal{M}_3'$. Note that $\{e^*\}$ is the unique matching cut in $\mathcal{M}_3$ that is equivalent to $\{e^*\}$. Recall that we already explained the output of the solution-lifting algorithm for $\{e^*\}$. In particular,  the algorithms outputs $\{e^*\}\in \mathcal{M}_3$.  

Assume that $M'\in\mathcal{M}_3'$ is distinct from $\{e^*\}$ (or $e^*$ does not exist). The solution-lifting algorithm lists all minimal matching cuts $M$ of $\mathcal{M}_3$ that are equivalent to $M'$. For this, we use the recursive branching algorithm 
\textsc{Enum Equivalent} that takes as an input a matching $L$ of $G$ and a path set~$\mathcal{R}\subseteq \mathcal{P}$ and outputs the equivalent matching cuts $M'$ of $G$ such that 
(i) $L\subseteq M'$, (ii) $M'$ is equivalent to $M$,  and (iii) the constructed matchings $M'$  differ only by some edges of the paths $P_{xy}\in\mathcal{R}$. In other words, the algorithm extends the partial matching cut by adding edges from the path set $\mathcal{R}$. 
To initiate the computations, we construct the initial matching $L'$ of $G$ and the initial set of paths $\mathcal{R}'\subseteq \mathcal{P}$ as follows.
First, we set $L':=L'\cap E(G[X])$ and $\mathcal{R}'=\emptyset$. Then for each $P_{xy}\in \mathcal{P}$ we do the following:
\begin{itemize}
\item if the $x$- or $y$-edge $e$ of $P_{xy}'$ is in $M'$, then set $L'=L'\cup\{e\}$,
\item if the middle edge of $P_{xy}'$ is in $M'$, then set $\mathcal{R}':=\mathcal{R}'\cup \{P_{xy}\}$.   
\end{itemize} 
Observe that by the definition of equivalent matching cuts, a minimal matching cut $M$ is equivalent to $M'$ if and only if $M$ can be constructed from $L'$ by adding one middle edge of every path $P_{xy}\in \mathcal{R}'$. Then calling $\textsc{Enum Equivalent}(L',\mathcal{R}')$ solves the enumeration problem.

\begin{algorithm}[H]\label{alg:min-fen}
\caption{$\textsc{Enum Equivalent}(L,\mathcal{R})$}
\If{$\mathcal{R}=\emptyset$}{\Return{$L$} and {\bf quit}}
\ElseIf{$\mathcal{R}\neq\emptyset$}
{
select arbitrary $P_{xy}\in\mathcal{R}$\;
\ForEach{for each middle edge $e$ of $P_{xy}$}
{$\textsc{Enum Equivalent}(L\cup\{e\}, \mathcal{R}\setminus\{P_{xy}\})$
}
}
\end{algorithm}

As Algorithm~\ref{alg:max} in the proof of Theorem~\ref{thm:vc-kern}, this algorithm is a standard backtracking enumeration algorithm. The depth of the recursion is upper-bounded by $n$. This implies that $\textsc{Enum Equivalent}(L',\mathcal{R}')$ enumerates all minimal matching cuts $M\in \mathcal{M}_3$ that are equivalent to $M'$ with polynomial delay.  

Summarizing the considered cases, we obtain that if the edge $e^*$ exists, the solution-lifting algorithm enumerates all minimal matching cuts of $\mathcal{M}_1$ and $\{e^*\}$, and if $e^*$ does not exist, then $\mathcal{M}_1$ is empty. Then for every minimal matching cut $M'\in\mathcal{M}_2'$, the solution-lifting algorithm enumerates the corresponding minimal matching cuts of $\mathcal{M}_2$; the minimal matching cuts of $G$ generated for distinct minimal matching cuts of $H$ are distinct, and every minimal matching cut in $\mathcal{M}_2$ is generated for some minimal matching cut from $\mathcal{M}_2'$. Finally, for every minimal  matching cut $M'\in \mathcal{M}_3'$ distinct from $\{e^*\}$, we enumerate equivalent minimal matching cuts of $G$. In this case we also have that  the minimal matching cuts of $G$ generated for distinct minimal matching cuts of $H$ are distinct, and every minimal matching cut in $\mathcal{M}_3$ is generated for some minimal matching cut from $\mathcal{M}_3'$. We conclude that the solution-lifting algorithm satisfies condition (ii$^*$) of the definition of a 
 \pd{} enumeration kernel.
\end{proof}

For \probMC, we need the following  observation that  follows from the results of Courcelle~\cite{Courcelle09} similarly to Proposition~\ref{prop:tw-cw}. 
For this, we note that the results of~\cite{Courcelle09} are shown for the extension of MSOL called \emph{counting monadic second-order logic} (CMSOL). For every integer constants $p\geq 1$ and $q\geq 0$, CMSOL 
includes a predicate ${\sf Card}_{p,q}(X)$ for a set variable $X$ which tests whether $|S|\equiv q\pmod{p}$. Also we can apply the results for labeled graphs whose vertices and/or edges have labels from a fixed finite set.

\begin{observation}\label{obs:fixed}
Let $F$ be a forest and let $A,B,C\subseteq E(F)$ be disjoint edge sets. Then all  matchings $M$ of $F$ such that $A\subseteq M$, $B\cap M=\emptyset$, and either $C\subseteq M$ or $C\cap M=\emptyset$  can be enumerated with polynomial delay. Moreover, if $u,v$ are distinct vertices of the same connected component of $F$ and $h\in\{0,1\}$, then all such (nonempty) matchings with  the additional property that  $|E(P)\cap A|\mod 2=h$, where $P$ is the $(u,v)$-path in $F$, also can  be enumerated with polynomial delay. 
\end{observation}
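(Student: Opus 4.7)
The plan is to express the stated constraints as a formula of counting monadic second-order logic (CMSOL) with one free set variable $M$ ranging over edge sets of $F$, and then invoke Courcelle's enumeration meta-theorem exactly as done in the proof of Proposition~\ref{prop:tw-cw}. Since any forest has treewidth at most one (hence bounded cliquewidth), the \classFPT{} delay guaranteed by the meta-theorem immediately becomes polynomial delay for us.

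First I would preprocess by attaching labels to $F$: each edge receives a label indicating whether it belongs to $A$, $B$, $C$, or none of these (the sets are disjoint, so the labels are well defined), and the vertices $u$, $v$ are individually marked. On the resulting labeled forest the property ``$M$ is a matching'' is the first-order sentence stating that no vertex is incident to two edges of $M$; the constraints $A\subseteq M$ and $B\cap M=\emptyset$ are first-order over $M$; and the ``all or nothing'' clause is the first-order disjunction $(\forall e\in C:e\in M)\vee(\forall e\in C:e\notin M)$. As explained in the discussion preceding Proposition~\ref{prop:tw-cw}, to remain in the variant of CMSOL used there, which quantifies only over vertices and vertex sets, I would encode edges through their endpoints via labels (equivalently, work on the incidence encoding), so that the matching conditions above are expressible without quantification over edge sets while the free variable $M$ is represented by its vertex-incidence data.

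For the second part, I would augment the formula with a parity clause using the CMSOL cardinality predicate ${\sf Card}_{2,h}$. The forest hypothesis is crucial here: since $u$ and $v$ lie in the same component, the $(u,v)$-path $P$ is the unique minimal connected subgraph of $F$ containing both endpoints, and its vertex set is first-order definable as $\{u,v\}$ together with those vertices whose removal disconnects $u$ from $v$ in $F$. Therefore $E(P)$ is CMSOL-definable from $u$ and $v$, and the extra constraint ${\sf Card}_{2,h}(E(P)\cap M)$ (I read the statement as a typo for $M$ rather than $A$, since $|E(P)\cap A|$ would be fixed) can be conjoined to the previous formula. We restrict to nonempty $M$ by an extra clause $\exists e\, (e\in M)$. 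Courcelle's meta-theorem applied to the resulting CMSOL sentence on a graph of treewidth at most one then enumerates all satisfying assignments of $M$ with polynomial delay.

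The main obstacle, such as it is, is essentially bookkeeping: one must make sure that the overall formula lives in the fragment of CMSOL for which the delay version of the meta-theorem applies to bounded-cliquewidth graphs, which is precisely the fragment using only vertex and vertex-set quantifications. Since edges, matching-hood, path membership, and cardinality modulo two can all be captured this way on the labeled forest, the encoding goes through and the observation follows.
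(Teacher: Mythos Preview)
Your approach is correct and matches the paper's: the observation is stated as an immediate consequence of Courcelle's CMSOL enumeration meta-theorem applied to labeled graphs of bounded treewidth, with the ${\sf Card}_{2,h}$ predicate handling the parity constraint. Your reading of $|E(P)\cap A|$ as a typo for $|E(P)\cap M|$ is right (the only application, via \textsc{EnumPath} in the proof of Theorem~\ref{thm:fen-kern}, uses $|M|\bmod 2=h$ on a path), and your detour through vertex-only quantification is unnecessary since forests have treewidth one and full MSO$_2$ with edge-set quantification is available there.
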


Now we show 
 a polynomial-delay enumeration kernel for \probMC.

\begin{theorem}\label{thm:fen-kern}
 \probMC admits a \pd{} enumeration kernel with $\Oh(k)$ vertices when parameterized by the feedback edge number $k$ of the input graph.
\end{theorem}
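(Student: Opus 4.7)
The plan is to carry out essentially the same kernelization as in Theorem~\ref{thm:fen-kern-min}---iteratively strip degree-at-most-one vertices (retaining one representative $u^*$ if any), identify the branching set $X$ (endpoints of $S$, degree-$\geq 3$ vertices of the tree $T=G'-S$, and the surviving degree-$1$ vertex, still with $|X|\leq 4k$), and repeatedly apply Reduction Rule~\ref{red:contr} to contract middle edges of long paths in $\mathcal{P}$. The only change is to contract each long path down to length~$5$ instead of length~$3$. The counting argument of Theorem~\ref{thm:fen-kern-min} still yields $|V(H)|=\Oh(k)$, and $H$ together with $S$ is computed in polynomial time. The motivation for increasing the target length is that, in contrast to the minimal case, a matching cut may contain many edges of the same path $P_{xy}$.

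Classify each intersection $M\cap E(P_{xy})$ by its \emph{signature}, the triple $\sigma(M,P_{xy})=(|M\cap E(P_{xy})|\bmod 2,\ [x\text{-edge}\in M],\ [y\text{-edge}\in M])$. A direct enumeration of the $13$ matchings of a length-$5$ path shows that all $8$ possible signatures are realised by some subset of $E(P'_{xy})$ when $P'_{xy}$ has length~$5$ (whereas on length-$3$ paths only $5$ of the $8$ occur); this is precisely why contracting to length~$3$ is insufficient here. For each signature $\sigma$ and each contracted $P'_{xy}$, fix once and for all a canonical representative $c_\sigma(P'_{xy})\subseteq E(P'_{xy})$ of signature $\sigma$, and map each $M\in\sol(G)$ to the matching cut $M'\in\sol(H)$ that agrees with $M$ on $E(G[X])$ and on all short (uncontracted) paths and equals $c_{\sigma(M,P_{xy})}(P'_{xy})$ on every long path. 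A signature--parity check shows that the resulting $M'$ is a legitimate matching cut of~$H$. Letting $S_{M'}$ be the preimage of this map defines a partition $\{S_{M'}\}_{M'\in\sol(H)}$ of $\sol(G)$, with non-canonical $M'$ receiving $S_{M'}=\emptyset$, analogous to the possibly empty parts $\mathcal{M}_1,\mathcal{M}_2,\mathcal{M}_3$ appearing in the proof of Theorem~\ref{thm:fen-kern-min}.

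For a canonical $M'$, the solution-lifting algorithm fixes $M'\cap E(G[X])$ and the short-path intersections, and for each long path $P_{xy}$ enumerates all matchings of $P_{xy}$ with the prescribed signature by invoking Observation~\ref{obs:fixed} on the spanning forest $T=G-S$: the $x$-edge and $y$-edge are forced in or out via the sets $A$ and $B$ of the observation, and the parity of $|M\cap E(P_{xy})|$ is imposed via the parity-constrained variant applied to $u=x$, $v=y$. A standard Cartesian-product interleaving combines these per-path enumerations into a single polynomial-delay enumeration of $S_{M'}$. The hard part will be the consistency verification: the canonical representatives $c_\sigma$ must be chosen so that the side-alternations they force on $V(P'_{xy})$ are globally compatible with the partition on $X$ induced by $M$, and the per-path applications of Observation~\ref{obs:fixed} must be interleaved carefully so that the combined enumeration inherits polynomial delay even when $G$ has $\Theta(k)$ long paths to handle simultaneously.
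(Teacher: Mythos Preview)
Your kernel construction is the right one and the signature idea is close to what the paper does, but there are two genuine gaps.

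First, Definition~\ref{def:enum-kernel} requires that $\{S_{M'}\mid M'\in\sol(H)\}$ be a \emph{partition} of $\sol(G)$, so every part must be nonempty (this is precisely what forces $G\in L\Leftrightarrow H\in L$). Your scheme assigns $S_{M'}=\emptyset$ to every non-canonical $M'$, and since a length-$5$ path has $13$ matchings but only $8$ of your signatures, non-canonical matching cuts of $H$ always exist. The analogy with $\mathcal{M}_1,\mathcal{M}_2,\mathcal{M}_3$ in Theorem~\ref{thm:fen-kern-min} is off: those are subsets of $\sol(G)$, not parts indexed by elements of $\sol(H)$. The paper repairs this by refining the signature: besides parity and the $x$-/$y$-edge bits it also records whether the intersection is empty and, when exactly one of the second $x$-edge, middle edge, second $y$-edge lies in $M'$, which one. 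With these extra bits each of the $13$ matchings of $P'_{xy}$ is its own class, and Claim~\ref{cl:eq-fen}(i) then guarantees that every matching cut of $H$ has a nonempty preimage.

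Second, you lose all edges of the forest $F=G-E(G')$ created by stripping degree-$\leq 1$ vertices. A matching cut of $G$ may contain an arbitrary matching of $F$ compatible at the attachment vertices; for instance, a single pendant edge $e$ gives the matching cut $\{e\}$ with $\{e\}\cap E(G')=\emptyset$. Your map sends such an $M$ to an $M'$ built solely from the canonical representative for the all-zero signature on every long path: if that representative is $\emptyset$ you get $M'=\emptyset$, which is not a cut of the connected $H$; if it is nonempty, your lifting still only enumerates matchings of $G'$ and never outputs $\{e\}$. The paper handles this by treating the family $\{M\in\sol(G):M\cap E(G')=\emptyset\}$ separately (attaching it to $\{e^*\}$ or to the empty cut of $H$) and, for every other $M'$, following the per-path enumeration with a call to an $\textsc{EnumMatchF}$ routine that lists all compatible matchings of $F$.
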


\begin{proof}
Let $G$ be a graph with $\fen(G)=k$ and an edge feedback set $S$ of size $k$. It is convenient to consider the case when $G$ is a forest separately. If $G=K_1$, then the kernelization algorithm returns $K_1$ and the solution-lifting algorithm is trivial. If $G$ has at least two vertices, then the kernelization algorithm returns $2K_1$ that has the unique empty matching cut. Then, given this empty matching cut, the solution-lifting algorithm lists all matching cuts of $F$ with polynomial delay using  Observation~\ref{obs:fixed} (or Proposition~\ref{prop:tw-cw}). 
We assume from now that $G$ is not a forest. In particular, $S\neq\emptyset$.
 
If $G$ has one or more connected component that are trees, we select an arbitrary vertex $v^*$ of these components. 
If $G$ has a connected component that contains a vertex of degree one and is not a tree,  then arbitrary select such a vertex $u^*$ of degree one and denote by  $e^*$ be the edge incident to $u^*$. Then we iteratively delete vertices of degree at most one distinct from $u^*$ and $v^*$. Denote by $G'$ the obtained graph. Notice that $G'$ has at most one isolated vertex (the vertex~$v^*$) and at most one vertex of degree one (the vertex $u^*$).
Observe also that $S$ is a minimum feedback edge set of $G'$. Let $T=G'-S$. Clearly $T$ is a forest. Notice that $T$ has at most $2|S|+2\leq 2k+2$ vertices of degree at most one. By  Observation~\ref{obs:vertices}, $T$ has at most $2k$ vertices of degree at least three. 
Denote by $X$ the set of vertices of $T$ that either 
are end-vertices of the edges of $S$, or have degree one,  
or have degree at least three. Because every vertex of $T$ of degree at most one is either $u^*$, $v^*$, or an end-vertex of some edge of $S$,
we have that $|X|\leq 4k+2$.

In the same way as in the proof of Theorem~\ref{thm:fen-kern-min}, we have that every vertex $v$ of $G'$ of degree two is an inner vertex of an $(x,y)$-path $P$ such that $x,y\in X$ and the inner vertices of $P$ are outside $X$. Moreover, for every two distinct $x,y\in X$, $G'$ has at most one $(x,y)$-path $P_{xy}$ with all its inner vertices outside $X$. We denote by $\mathcal{P}$ the set of all such paths.
We say that an edge of $P_{xy}$ is the \emph{$x$-edge} is it is incident to $x$ and is the \emph{$y$-edge} if is incident to $y$. 
 We say that an edge $e$ of $P_{xy}$ is a \emph{second $x$-edge} (a \emph{second $y$-edge}, respectively) if $e$ has a common end-vertex with the $x$-edge (with the $y$-edge, respectively). The edges that are distinct from the $x$-edge, the second $x$-edge, the $y$-edge and the second $y$-edge are called \emph{middle}. 
We say that $P_{xy}$ is \emph{long}, if $P_{xy}$ has length at least six; otherwise, $P_{xy}$ is \emph{short}.
Let $F=G-E(G')$. Since $S\subseteq E(G')$, $F$ is a forest. Moreover, each connected component $T$ of $F$ has at most one vertex in $V(G')$.

We exhaustively apply the following reduction rule. 
\begin{reduction}\label{red:contr-new} 
If there is a long path $P_{xy}\in \mathcal{P}$ for some $x,y\in X$, then contract an arbitrary middle edge of $P_{xy}$.
\end{reduction}
Let $H$ be the graph obtained from $G'$ by the exhaustive application of Reduction Rule~\ref{red:contr-new}. We also denote by $\mathcal{P}'$ the set of paths obtained from the paths of $\mathcal{P}$; we use $P_{xy}'$ to denote the path obtained from $P_{xy}\in\mathcal{P}$. 

Our kernelization algorithm returns $H$ together with $S$. 

 To upper-bound the size of $H$, notice that $H-E(G[X])$ is a forest such that its vertices of degree at most one are in $X$. 
 This implies that $|\mathcal{P}'|\leq |X|-1\leq 4k+1$. Because $H$ has no long paths, each path $P_{xy}'\in \mathcal{P}'$ has at most four inner vertices, and the total number of inner vertices of all the paths of $\mathcal{P}'$ is at most $4|\mathcal{P}'|\leq 16k+1$. Then $|V(H)|\leq |X|+4|\mathcal{P}'|\leq 20k+1$ implying that $H$ has the required size. 

For the construction of the solution-lifting algorithm, recall that by our assumption the input graph is given together with $S$ and $S\subseteq E(H)$. Then we can identify $v^*$, $u^*$ and $e^*$ in $G$ and $H$, and then we can recompute the set $X$. Next, we can compute the sets of paths~$\mathcal{P}$ and $\mathcal{P}'$ of $G$ and $H$, respectively, in polynomial time. This allows us to assume that the solution-lifting algorithm has access to these sets. 

To construct the solution-lifting algorithm, denote by $\mathcal{M}$ and $\mathcal{M}'$ the sets of matching cuts of $G$ and $H$, respectively. Define $\mathcal{M}_1=\{M\in\mathcal{M}\mid M\cap E(G')=\emptyset\}$ and $\mathcal{M}_2=\{M\in\mathcal{M}\mid M\cap E(G')\neq \emptyset\}$. Notice that $M\in\mathcal{M}_1$ is nonempty if and only if $M$ is a nonempty matching of $F=G-E(G')$. 
First, we deal with the matching cuts of $\mathcal{M}_1$. Observe that $G$ is connected if and only if $H$ is connected. This means that the empty set is a matching cut of $G$ if and only if the empty set is a matching cut of $H$.

Suppose that $H$ has the empty matching cut. Then the solution-lifting algorithm, given this matching cut of $H$, outputs the matching cuts of $\mathcal{M}_1$. Notice that $\mathcal{M}_1\neq\emptyset$, because $\mathcal{M}_1$ contains the empty matching cut. 
The solution-lifting algorithm outputs the empty matching cut and all nonempty matchings of $F$ using Observation~\ref{obs:fixed}.  

Assume now that $H$ is connected. Then $G$ is connected as well and $\mathcal{M}_1\not=\emptyset$ if and only if $F\not=\emptyset$. By the construction of $G'$, if $F$ is not empty, then $G$ has a vertex of degree one. In particular, the kernelization algorithm selects $u^*$ and $e^*$ in this case. 
Notice that $e^*$ is a bridge of $G$, and it holds that $\{e^*\}$ is a matching cut of both $G$ and $H$. Observe also that $\{e^*\}\in \mathcal{M}_2$. This matching cut is generated by the solution-lifting algorithm for the cut $\{e^*\}$ of $H$: 
when the algorithm finishes listing the matching cuts of $\mathcal{M}_2$ for $\{e^*\}$,  it switches to the listing of all nonempty matchings of $F$. This can be done with polynomial delay by Observation~\ref{obs:fixed}. 

Next, we analyze the matching  cuts of $\mathcal{M}_2$. By definition, a matching cut $M$ of $G$ is in $\mathcal{M}_2$ if $M\cap E(G')\neq\emptyset$. This means that $M\cap E(G')$ is a matching cut of $G'$, and for a nonempty matching $M$ of $G$, $M\in \mathcal{M}_2$ if and only if $M\cap E(G')$ is a nonempty matching cut of $G'$. We exploit this property and the solution-lifting algorithm lists nonempty matching cuts of $G'$ and then for each matching cut of $G'$, it outputs all its possible extensions by matchings of $F$. For this, we define the following relation between matching cuts of $H$ and~$G'$. 
Let $M$ be a nonempty matching cut of $H$ and let $M'$ be a nonempty matching of $G'$ (note that we do not require $M'$ to be a matching cut). We say that $M'$ is \emph{equivalent} to $M$ if the following holds:
\begin{itemize}
\item[(i)] $M\cap E(H[X])=M'\cap E(G[X])$ (note that $H[X]=G[X]$).
\item[(ii)] For every $P_{xy}\in \mathcal{P}$ such that $P_{xy}$ is short, $M\cap E(P_{xy}')=M'\cap E(P_{xy})$ (note that $P_{xy}=P_{xy}'$ in this case).
\item[(iii)] For every long $P_{xy}\in\mathcal{P}$,
\begin{itemize}
\item[(a)] $M\cap E(P_{xy}')\neq\emptyset$ if and only if $M'\cap E(P_{xy})\neq\emptyset$,
\item[(b)] $|M\cap E(P_{xy}')|\mod 2 =|M'\cap E(P_{xy})|\mod 2$,
\item[(c)] the $x$-edge ($y$-edge, respectively) of $P_{xy}'$ is in $M'$ if and only if  the $x$-edge ($y$-edge, respectively) of $P_{xy}$ is in $M$,
\item[(d)] if for the second $x$-edge $e_x$, the second $y$-edge $e_y$ and the middle edge $e$ of $P_{xy}'$, $|M\cap\{e_x,e_y,e\}|=1$, then
\begin{itemize}
\item[-] $e_x\in M$ ($e_y\in M$, respectively) if and only if $e_x\in M'$ and $e_y\notin M'$ ($e_x\notin M'$ and $e_y\in M'$, respectively), 
\item[-] $e\in M$ if and only if either $e_x,e_y\in M'$ or $e_x,e_y\notin M'$.
\end{itemize}
(note that $e_x,e_y$ are the second $x$-edge and $y$-edge of $P_{xy}$, because $P_{xy}'$ is constructed by contracting of some middle edges of $P_{xy}$).
\end{itemize}  
\end{itemize}  
We use the properties of the relation summarized in the following claim.
\begin{claim}\label{cl:eq-fen}~
\begin{itemize}
\item[\em (i)] For every nonempty matching cut $M$ of $H$, there is a nonempty matching  $M'$ of $G'$ that is equivalent to $M$.
\item[\em (ii)] For every nonempty matching cut $M$ of $H$ and every nonempty matching $M'$ of $G'$ equivalent to $M$, $M'$ is a matching cut of $G'$.
\item[\em (iii)] Every nonempty matching cut $M'$ of $G'$ is equivalent to at most one matching cut of $H$.
\item[\em (iv)] For every nonempty matching cut $M'$ of $G'$, there is a nonempty matching cut of $M$ such that $M'$ is equivalent to $M$.
\end{itemize}
\end{claim}

 \begin{proof}[Proof of Claim~\ref{cl:eq-fen}]
Let $Y=X\cup \{V(P_{xy}')\mid P_{xy}\in\mathcal{P}\text{ is short}\}$. Notice that conditions (i) and (ii) of the definition of the equivalency of $M'$ to $M$ can be written as $M\cap E(H[Y])=M'\cap E(G[Y])$.
 
To show (i), consider a nonempty matching cut $M$ of $H$. We construct $M'$ as follows. First, we include in $M'$ all the edges of $M$ that are in $H[Y]$. Then for every long path $P_{xy}\in \mathcal{P}$ we do the following.
\begin{itemize}
\item If $M$ contains the $x$-edge or the second $x$-edge (the $y$-edge or the second $y$-edge, respectively) $e$ of $P_{xy}'$, then include $e$ in $M'$.
\item If $M$ contains the middle edge of $P_{xy}$, then include in $M'$ an arbitrary  middle edge of $P_{xy}$. 
\end{itemize}
It is straightforward to verify that $M'$ is a matching of $G$ and $M'$ is equivalent to $M$.

To prove (ii), let $M$ be a nonempty matching cut of $H$ and let $M'$ be a matching of $G'$ equivalent to $M$. Since $M'$ is a matching, to show that $M'$ is a matching cut, we have to prove that there is a partition $\{A',B'\}$ of $V(G')$ such that $M'=E_{G'}(A',B')$. Because $M$ is a nonempty matching cut, there is a partition $\{A,B\}$ of $V(H)$ such that $M=E_H(A,B)$. Let $\hat{A}=A\cap Y$ and $\hat{B}=B\cap Y$. Notice that $E(\hat{A},\hat{B})\subseteq M$ and, therefore,
$E(\hat{A},\hat{B})\subseteq M'$. Moreover, for every $xy\in M'$ such that $x,y\in Y$,  the end-vertices of $xy$ are in distinct sets $\hat{A}$ and $\hat{B}$. 
We show that $\{\hat{A},\hat{B}\}$ can be extended to a required partition $\{A',B'\}$ with $\hat{A}\subseteq A'$ and $\hat{B}\subseteq B'$. Initially, we set $A':=\hat{A}$ and $B':=\hat{B}$.

Recall that the vertices of $V(G')\setminus Y$ are internal vertices of long paths $P_{xy}\in \mathcal{P}$. Consider such a path $P_{xy}$. We use the property that the numbers of edges of $M$ in $P_{xy}'$ and of $M'$ in $P_{xy}$ have the same parity. Suppose that $x\in \hat{A}$ and $y\in\hat{B}$. Let $Q_1,\ldots,Q_r$ be the connected components of $P_{xy}'-M$ listed with respect to the path order in which they occur in $P_{xy}'$ starting from $x$. Then $r$ is even and $V(Q_i)\subseteq A$ if $i$ is odd and $V(Q_i)\subseteq B$ if $i$ is even. Therefore, $|M\cap V(P_{xy}')|$ is odd. Then $|M'\cap V(P_{xy}')|$ is odd, and for the connected components $R_1,\ldots,R_s$ of $P_{xy}-M'$, $s$ is even. Assume that the connected components are listed with respect to the path order induced by $P_{xy}$. We modify $A'$ by setting $A':=A'\cup V(R_i)$ for odd $i\in\{1,\ldots,s\}$ and $B':=B'\cup V(R_i)$ for even $i\in\{1,\ldots,s\}$. By this construction, 
$E_{G'}(V(P_{xy})\cap A',V(P_{xy})\cap B')=M'\cap E(P_{xy})$.  The case when $x$ and $y$ are in the same set $\hat{A}$ or $\hat{B}$, respectively, is analyzed by the same parity arguments. We conclude that by going over all long paths $P_{xy}\in\mathcal{P}$, we construct $A'$ and $B'$ such that $M'=E(A',B')$. This completes the proof of (ii).

From now, we can assume that each nonempty matching of $G'$ that is equivalent to a nonempty matching cut of $H$ is a matching cut.

We show (iii) by contradiction. Assume that there are two distinct nonempty matching cuts $M_1$ and $M_2$ of $H$ and a nonempty matching cut $M'$ of $G'$ such that $M'$ is equivalent to both $M_1$ and $M_2$. The definition of equivalency implies that there is a long path $P_{xy}\in \mathcal{P}$ such that $M_1\cap E(P_{xy}')\neq M_2\cap E(P_{xy}')$. Notice that both sets are nonempty and 
the sizes of both sets have the same parity. We consider two cases depending on the parity. Let $e_x$ denote the $x$-edge, $e_x'$ the second $x$-edge, $e_y$ the $y$-edge, $e_y'$ the second $y$-edge, and $e$ the unique middle edge of $P_{xy}'$. 

Suppose that $|M_1\cap E(P_{xy}')|$ is odd. If $|M_1\cap E(P_{xy}')|=3$, then $M_1\cap E(P_{xy}')=\{e_x,e,e_y\}$, because  $\{e_x,e,e_y\}$ is the unique matching of $P_{xy}'$ of size three.
Since $M_1\cap E(P_{xy}')\neq M_2\cap E(P_{xy}')$, we obtain that $|M_2\cap E(P_{xy}')|=1$. In particular, either $e_x\notin M_2$ or $e_y\notin M_2$. Assume by symmetry, that $e_x\notin M_2$. However, by condition (iii)(a) of the definition of equivalency, the $x$-edge of $P_{xy}$ is in $M'$ if and only if $x$-edge of $P_{xy}'$ in $M_1$ and $M_2$; a contradiction.   Hence, we can assume that $|M_1\cap E(P_{xy}')|=|M_2\cap E(P_{xy}')|=1$. If either $M_1\cap P_{xy}'$ or $M_2\cap E(P_{xy}')$ consists of the $x$-edge or the $y$-edge of $P_{xy}'$, we use the same arguments as above. This means that both $M_1$ and $M_2$ contains a unique edge of $P_{xy}'$ that belongs to $\{e_x',e,e_y'\}$. Suppose that $e_x'\in M_1$ but $e_x'\notin M_2$. However, this contradicts  (iii)(d).
By symmetry, we obtain that $e_x'\notin M_1,M_2$ and $e_y'\notin M_1,M_2$. This means that 
$M_1\cap E(P_{xy}')=M_2\cap E(P_{xy}')=\{e\}$ contradicting that the sets are distinct.

Assume that $|M_1\cap E(P_{xy}')|$ is even. Since $M_1\cap E(P_{xy}')$ and $M_2\cap E(P_{xy}')|$ are nonempty, $|M_1\cap E(P_{xy}')|=|M_2\cap E(P_{xy}')|=2$.   
As we already observe in the previous case, $e_x\in M_1$ ($e_y\in M_1$, respectively) if and only if $e_x\in M_2$ ($e_y\in M_2$, respectively). 
In particular, if $M_1\cap E(P_{xy}')=\{e_x,e_y\}$, then  $M_2\cap E(P_{xy}')=\{e_x,e_y\}$; a contradiction.
Notice that if $e_x,e_y\notin M_1$, then $M_1\cap E(P_{xy}')=\{e_x',e_y'\}$. Then we obtain that  $M_2\cap E(P_{xy}')=\{e_x',e_y'\}$; a contradiction. This implies that 
either $e_x\in M_1,M_2$ and $e_y\notin M_1,M_2$ or, symmetrically,  $e_x\notin M_1,M_2$ and $e_y\in M_1,M_2$. In both cases $|M_1\cap \{e_x',e,e_y'\}|=|M_2\cap \{e_x',e,e_y'\}|=1$ and we obtain a contradiction using (iii)(d) in exactly the same way as in the previous case when .$|M_1\cap E(P_{xy}')|$ is odd.
This completes the proof of (iii).

Finally, we show (iv). Let $M'$ be a nonempty matching cut of $G'$. We show that $H$ has a nonempty matching cut $M$  that is equivalent to $M'$.
We construct $M$ as follows. First, we include in $M$ the edges of $M'\cap E(G[Y])$. Then for every long path $P_{xy}\in \mathcal{P}$, we do the following.
Denote by $e_x$ the $x$-edge, $e_x'$ the second $x$-edge, $e_y$ the $y$-edge, $e_y'$ the second $y$-edge, and $e$ the unique middle edge of $P_{xy}'$. 
\begin{itemize}
\item If $e_x\in M'$ ($e_y\in M'$, respectively), then include $e_x$ ($e_y$, respectively) in $M'$.
\item If $e_x,e_y\in M'$ and $|M'\cap E(P_{xy})|$ is odd, then include $e\in M$. If $e_x,e_y\in M'$ and $|M'\cap E(P_{xy})|$ is even, then 
no other edge of $P_{xy}'$ is included in $M$, that is, $M\cap E(P_{xy}')=\{e_x,e_y\}$.
\item If $e_x\in M'$ and $e_y\notin M'$, then
\begin{itemize}
\item if $|M'\cap E(P_{xy})|$ is odd, then no other edge of $P_{xy}'$ is included in $M$, that is, $M\cap E(P_{xy}')=\{e_x\}$,
\item if $|M'\cap E(P_{xy})|$ is even, then $e_y'$ is included in $M$ if $e_y'\in M'$ and $e$ is included in $M$ if $e_y'\notin M'$.
\end{itemize}
\item If $e_x\notin M'$ and $e_y\in M'$, then
\begin{itemize}
\item if $|M'\cap E(P_{xy})|$ is odd, then no other edge of $P_{xy}'$ is included in $M$, that is, $M\cap E(P_{xy}')=\{e_y\}$,
\item if $|M'\cap E(P_{xy})|$ is even, then $e_x'$ is included in $M$ if $e_x'\in M'$ and $e$ is included in $M$ if $e_x'\notin M'$.
\end{itemize}
\item If $e_x,e_y\notin M'$, then
\begin{itemize}
\item if $|M'\cap E(P_{xy})|$ is even, then $e_x'$ and $e_y'$ are included in $M$,
\item if $|M'\cap E(P_{xy})|$ is odd, then $e_x'$ is included in $M$ if $e_x'\in M'$ and $e_y'\notin M'$, 
$e_y'$ is included in $M$ if $e_x'\notin M'$ and $e_y'\in M'$, 
 and $e$ is included in $M$ if either  $e_x',e_y'\in M'$ or $e_x',e_y'\notin M'$.
\end{itemize}
\end{itemize}
It is straightforward to verify that $M$ is a matching of $H$ satisfying conditions (i)--(iii) of the definition of equivalency. Then using exactly the same argument as in the proof of (ii) we observe that  $M$ is a matching cut of $H$. This concludes the proof of (iv).
\end{proof}

Claim~\ref{cl:eq-fen} allows us to construct the solution-lifting algorithm for nonempty matching cuts of $H$ that outputs nonempty matching cuts from $\mathcal{M}_2$. For each nonempty matching cut $M$ of $H$, the algorithm lists the matching cuts $M'$ of $G'$ such that $M'$ is equivalent to $M$. Then for each $M'$, we extend $M'$ to matching cuts of $G$ by adding matchings of $F=G-E(G')$.  For this, we consider the algorithm $\textsc{EnumPath}(P_{x,y},A,B,C,h)$ that given a path $P_{xy}\in \mathcal{P}$, disjoint sets $A,B,C\subseteq E(P_{xy})$, and an integer $h\in\{0,1\}$, enumerates with polynomial delay all nonempty matchings $M$ of $P_{xy}$ such that $A\subseteq M$, $B\cap M=\emptyset$, either $C\subseteq M$ or $C\cap M=\emptyset$, and 
 $|M|\mod 2=h$. Such an algorithm exists by Observation~\ref{obs:fixed}. We also use the algorithm 
 $\textsc{EnumMatchF}(M)$ that, given a matching cut $M$ of $G'$, lists all matching cuts of~$G$ of the form $M\cup M'$, where $M'$ is a matching of $F$.  
 $\textsc{EnumMatchF}(M)$ is constructed as follows. Let $A$ be the set of edges of $F$ incident to the end-vertices of $F$ (recall that each connected component of $F$ contains at most one vertex of $V(G')$). Then we enumerate the matchings $M'$ of $F$ such that $M'\cap A=\emptyset$. This can be done with polynomial delay by Observation~\ref{obs:fixed}.

We use \textsc{EnumPath} and \textsc{EnumMatchF} as subroutines of the recursive branching algorithm 
\textsc{EnumEquivalent} (see Algorithm~\ref{alg:fen}) that, given a matching $M$ of $H$, takes as an input a matching $L$ of $G$ and $\mathcal{R}\subseteq \mathcal{P}$ and outputs the matching cuts $M'$ of $G$ such that 
(i)~$L\subseteq M'$, (ii) $M'$ is equivalent to $M$,  and (iii) the constructed matchings $M'$  differ only by some edges of the paths $P_{xy}\in\mathcal{R}$.
To initiate the computations, we construct the initial matching $L'$ of $G$ and the initial set of paths $\mathcal{R}'\subseteq \mathcal{P}$ as follows.
We define $\mathcal{R}'\subseteq\mathcal{P}$ to be the set of long paths $P_{xy}\subseteq \mathcal{P}$ such that $P_{xy}'\cap M\neq\emptyset$.  Then $L'\subseteq M$ is the set of edges of $M$ that are not in the paths of $\mathcal{R}'$. Recall that as an intermediate step, we enumerate nonempty matching cuts of $G'$ that are equivalent to $M$. Then it can be noted that to do this, we have to enumerate all possible extensions of $M$ to $M'$ satisfying condition (iii) of the equivalence definition.  Therefore, we call $\textsc{EnumEquivalent}(L',\mathcal{R}')$ to solve the enumeration problem. 

\begin{algorithm}[H]
\caption{$\textsc{EnumEquivalent}(L,\mathcal{R})$}\label{alg:fen}
\If{$\mathcal{R}=\emptyset$}{call  $\textsc{EnumMatchF}(M)$\;
return every matching cut $M'$ generated by the algorithm and {\bf quit}}
\ElseIf{$\mathcal{R}\neq\emptyset$}
{
select arbitrary $P_{xy}\in\mathcal{R}$\;
set $A:=\emptyset$; $B:=\emptyset$; $C:=\emptyset$; $h:=|M\cap E(P_{xy}')|\mod 2$\;
\lIf{$e_x\in M$}{set $A:=A\cup\{e_x\}$}
\lIf{$e_y\in M$}{set $A:=A\cup\{e_y\}$}
\lIf{$e_x'\in M$ and $e,e_y'\notin M$}{set $A:=A\cup\{e_x'\}$ and $B:=B\cup\{e_y'\}$}
\lIf{$e_y'\in M$ and $e,e_x'\notin M$}{set $A:=A\cup\{e_y'\}$ and $B:=B\cup\{e_x'\}$}
\lIf{$e\in M$ and $e_x',e_y'\notin M$}{set $C:=C\cup\{e_x',e_y'\}$}
call  $\textsc{EnumPath}(P_{x,y},A,B,C,h)$\;
\ForEach{nonempty matching $Z$ generated by $\textsc{EnumPath}(P_{x,y},A,B,C,h)$}
{$\textsc{EnumEquivalent}(L\cup Z, \mathcal{R}\setminus\{P_{xy}\})$
}
}
\end{algorithm}

Let us remark that when we call   $\textsc{EnumMatchF}(M)$ in line~(2), we immediately return each matching cut $M'$ produced by the algorithm. Similarly, when we call $\textsc{EnumPath}(P_{xy},A,B,C,h)$ in line (13), 
we immediately execute the loop in lines (14)--(16) 
for each generated matching~$Z$.

It can be seen from the description that \textsc{EnumEquivalent} is a backtracking enumeration algorithm. It picks $P_{xy}\in \mathcal{R}$ and produces nonempty matchings of $P_{xy}$. Notice that the sets of edges $A$, $B$, and $C$, the parity of the number of edges in a matching are assigned in lines (7)--(12) 
exactly as it is prescribed in condition (iii) of the equivalency definition.  Then the algorithm branches on all possible selections of the matchings of $P_{xy}$.The depth of the recursion is upper-bounded by $n$. This implies that $\textsc{EnumEquivalent}(L',\mathcal{R}')$ enumerates with polynomial delay all nonempty matching cuts $M\in \mathcal{M}_2$ 
such that $M'\cap E(G')$ is a nonempty matching cut of $G'$ equivalent to $M$ withe 
that are equivalent to $M'$. 

To summarize, recall that if $H$ is connected and has a vertex of degree one, we used the matching cut $\{e^*\}$ to list the matching cuts formed by the edges of $F=G-E(G')$. Clearly, $\{e^*\}$ is generated by  $\textsc{EnumEquivalent}(L',\mathcal{R}')$ for $L'$ and $\mathcal{R}'$ constructed for $M=\{e^*\}$. Therefore, 
we conclude that  the solution-lifting algorithm satisfies condition (ii$^*$) of the definition of 
a \pd{} enumeration kernel.
This finishes the proof of the theorem.
\end{proof}

Using Theorems~\ref{thm:fen-kern-min},  \ref{thm:fen-kern}, \ref{thm:general}, and~\ref{obs:natural}, we obtain the following corollary.

\begin{corollary}\label{thm:upper-fen}
The (minimal) matching cuts of an $n$-vertex graph $G$ can be enumerated  with 
$2^{\Oh(\fen(G))}\cdot n^{\Oh(1)}$ delay.
\end{corollary}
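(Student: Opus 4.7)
The plan is to combine the polynomial-delay enumeration kernels supplied by Theorems~\ref{thm:fen-kern-min} and~\ref{thm:fen-kern} with the exact enumeration algorithm of Theorem~\ref{thm:general}, taking advantage of the logical equivalence from Theorem~\ref{obs:natural}. Concretely, given an $n$-vertex input graph $G$ with $k=\fen(G)$, I would first run the kernelization algorithm from Theorem~\ref{thm:fen-kern-min} (for the minimal case) or Theorem~\ref{thm:fen-kern} (for the general case) in polynomial time to obtain a kernel $H$ with $|V(H)|=\Oh(k)$, together with its associated polynomial-delay solution-lifting algorithm~$\mathcal{A}'$.

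Next, I would enumerate the matching cuts of $H$ by invoking Theorem~\ref{thm:general}, which takes total time $\Oh^*(F(|V(H)|))$. Since $|V(H)|=\Oh(k)$ and $F(m)=\Oh(1.6181^m)$, this is bounded by $2^{\Oh(k)}\cdot n^{\Oh(1)}$. For the minimal matching cut variant, I would simply filter the outputs during enumeration: each candidate matching cut can be tested for inclusion-minimality in polynomial time, and the filtering does not inflate the running time beyond a polynomial factor. By the specification of the kernels, each matching cut (respectively, minimal matching cut) of $H$ thus enumerated is then fed to~$\mathcal{A}'$, which, with polynomial delay, outputs its associated partition class of solutions of $G$.

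To bound the overall delay, observe that between two consecutive outputs of matching cuts of $G$, either both solutions are emitted from the same invocation of~$\mathcal{A}'$ (giving polynomial delay), or the algorithm must first produce the next kernel solution and then the first output of the subsequent lifting call. In the latter case the delay is bounded by the running time incurred to generate one more matching cut of $H$ plus the polynomial lifting overhead, and since the total enumeration time on $H$ is $2^{\Oh(k)}\cdot n^{\Oh(1)}$, this individual delay is also $2^{\Oh(k)}\cdot n^{\Oh(1)}$.

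The argument is essentially assembling existing pieces, so no serious obstacle is anticipated; the only point that deserves checking is that the enumeration of Theorem~\ref{thm:general} on $H$ (via a spanning tree and backtracking, followed by verification against $G$) can be interleaved with the solution-lifting phase so that delay, rather than merely total time, remains $2^{\Oh(k)}\cdot n^{\Oh(1)}$. This is immediate because any delay is a fortiori upper-bounded by the total enumeration time on $H$, which already satisfies this bound.
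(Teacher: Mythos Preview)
Your proposal is correct and follows essentially the same approach as the paper, which simply cites Theorems~\ref{thm:fen-kern-min}, \ref{thm:fen-kern}, \ref{thm:general}, and~\ref{obs:natural} without further elaboration. You have merely made explicit the pieces that the paper leaves implicit: running the enumeration of Theorem~\ref{thm:general} on the $\Oh(k)$-vertex kernel, filtering for minimality in the \probMinMC case, and feeding each kernel solution into the polynomial-delay lifting algorithm, with the delay bounded by the total enumeration time on~$H$.
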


\section{Enumeration Kernels for the Parameterization by the Clique Partition Number}\label{sec:cp}
A partition $\{Q_1,\ldots,Q_k\}$ of the vertex set of a graph $G$ is said to be a \emph{clique partition} of $G$ if $Q_1,\ldots,Q_k$ are cliques. The \emph{clique partition} number of $G$, denoted by $\cp(G)$, is the minimum $k$ such that $G$ has a clique partition with $k$ cliques. Notice that the clique partition number of $G$ is exactly the chromatic number of its complement $\overline{G}$. In particular, this means that deciding whether $\cp(G)\leq k$ is \classNP-complete for any fixed $k\geq 3$~\cite{GareyJ79}. Therefore, throughout this section, we assume that the input graphs are given together with their clique decompositions. With this assumption, we show that the matching cut enumeration problems admit bijective kernels when parameterized by the clique partition number. Our result uses the following straightforward observation.
\begin{observation}\label{obs:clique}
Let $K$ be a clique of a graph $G$ such that $|K|\neq 2$. Then for every partition $\{A,B\}$ of $V(G)$ such that $E(A,B)$ is a matching, either $K\subseteq A$ or $K\subseteq B$.
\end{observation}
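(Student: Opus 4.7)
The observation is a straightforward consequence of the definitions, so my plan is to give a short case analysis. The cases $|K|\le 1$ are trivial: the empty set or a singleton is trivially contained in $A$ or in $B$, since $\{A,B\}$ is a partition of $V(G)$. So the only case of real interest is $|K|\ge 3$.

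For $|K|\ge 3$, I would argue by contradiction. Suppose that neither $K\subseteq A$ nor $K\subseteq B$. Then there exist $u\in K\cap A$ and $v\in K\cap B$. Since $|K|\ge 3$, pick a third vertex $w\in K\setminus\{u,v\}$. By the symmetry between $A$ and $B$, assume without loss of generality that $w\in A$ (otherwise swap the roles of $A$ and $B$, and of $u$ and $v$). Since $K$ is a clique, $uv$ and $wv$ are edges of $G$, and both have one endpoint in $A$ and the other in $B$, so $uv, wv \in E(A,B)$. These two edges share the common endpoint $v$, contradicting the assumption that $E(A,B)$ is a matching.

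This establishes the observation. The argument is entirely elementary, and the only subtle point is recognizing why $|K|=2$ must be excluded from the hypothesis: a clique on exactly two vertices may be split across $\{A,B\}$ with its single edge contributing one edge to the matching $E(A,B)$ without causing any two matching edges to share an endpoint. This is precisely what makes the $|K|\ge 3$ threshold tight, and it is exactly why the observation will be useful later in Section~\ref{sec:cp}: whenever a clique partition contains parts of size at least three, those parts must be placed wholly on one side of any matching cut, drastically restricting the possible bipartitions realized by matching cuts.
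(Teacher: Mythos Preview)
Your proof is correct. The paper itself does not supply a proof of this observation, simply labeling it ``straightforward,'' and your short case analysis is exactly the natural argument one would expect.
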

\begin{theorem}\label{thm:cp}
\probMC (\probMinMC, \probMaxMC, respectively) admits a bijective enumeration kernel with $\Oh(k^3)$ vertices if the input graph is given together with a clique partition of size at most $k$. 
\end{theorem}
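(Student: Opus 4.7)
The plan is to build the kernel $H$ as an induced subgraph $G[Z]$ of $G$, where $Z$ consists of all vertices of cliques of size at most two together with $\Oh(k^2)$ marked vertices from each larger clique. The solution-lifting algorithm will be the unique extension of the partition of $Z$ coming from a matching cut of $H$ to a partition of $V(G)$, and the bijection between $\sol(G)$ and $\sol(H)$ will be forced by Observation~\ref{obs:clique} applied both in $G$ and inside $H$.

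For the marking, consider a large clique $Q_i$ (with $|Q_i|\ge 3$). First mark three arbitrary vertices of $Q_i$; this guarantees that $Q_i\cap Z$ is a clique of size at least three in $H$, so Observation~\ref{obs:clique} applies to $Q_i\cap Z$ inside $H$. Then, for every pair of indices $j,\ell\ne i$ (allowing $j=\ell$), mark a constant number of \emph{witnesses} of the two configurations that can force constraints on the partition: (i) if some $v\in Q_i$ has at least two neighbors in $Q_j$, mark such a $v$ together with two of its neighbors in $Q_j$; (ii) if $j\ne\ell$ and some $v\in Q_i$ has a neighbor in $Q_j$ and a neighbor in $Q_\ell$, mark such a $v$ together with one of each. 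Finally add every vertex of every small clique to $Z$. Since there are $\Oh(k^2)$ pairs per large clique, we mark $\Oh(k^2)$ vertices per large clique, so $|Z|=\Oh(k^3)$; the small cliques contribute only $\Oh(k)$.

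Given a matching cut $M=E_G(A,B)$ of $G$, the restricted partition gives the matching cut $M^H=E_H(A\cap Z,B\cap Z)$ of $H$. Conversely, given a matching cut $M^H=E_H(A^H,B^H)$ of $H$, I extend the partition to $V(G)$ by placing every unmarked $v\in Q_i$ on the side of $Q_i\cap Z$; this is well defined because Observation~\ref{obs:clique} applied to $Q_i\cap Z$ in $H$ forces all of $Q_i\cap Z$ onto one side of $\{A^H,B^H\}$. To show that the resulting $M=E_G(A,B)$ is a matching cut of $G$, I will argue that if some vertex in $V(G)$ had two crossing neighbors in $M$, those neighbors would lie in one or two cliques $Q_j,Q_\ell$, and then the corresponding witness of type~(i) or~(ii) marked inside $Z$ would also exhibit two crossing neighbors in $M^H$, contradicting that $M^H$ is a matching. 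A short verification shows that restriction and extension are mutually inverse and that the correspondence preserves subset order on matching cuts, so minimal (resp.\ maximal) cuts correspond to minimal (resp.\ maximal) cuts, uniformly handling \probMC, \probMinMC and \probMaxMC.

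The main obstacle will be the verification in the previous paragraph, namely confirming that the two listed witness configurations exhaust every way in which a double crossing could arise in the extended partition; in particular, that for an unmarked $v \in Q_i$ with outside neighbors spread across several other cliques, the pairwise witnesses already imply that at most one of those cliques can lie on the opposite side of $Q_i$ and that in that single clique $v$ has at most one neighbor. Once this case analysis is complete, the $\Oh(k^3)$ size bound and the polynomial-time solution-lifting algorithm follow routinely, and the bijective property of the kernel is established for all three enumeration problems at once.
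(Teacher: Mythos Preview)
Your witness scheme has a genuine gap: you only plant witnesses whose \emph{center} $v$ lies in a large clique, but a vertex of a small clique can be the center of a double crossing, and then neither of its outside neighbors need land in $Z$. Take $Q_1=\{v\}$ and large cliques $Q_2,Q_3$, with $v$ adjacent to a single $c\in Q_2$ and a single $d\in Q_3$ and no other inter-clique edges. No vertex of $Q_2$ has two neighbors in another clique and no vertex of $Q_2$ sees two different other cliques, so your marking from $Q_2$ (and symmetrically from $Q_3$) adds only the three arbitrary vertices; $c$ and $d$ stay unmarked. Then $v$ is isolated in $H$, $H$ has only the empty matching cut, while $G$ has two matching cuts $\{vc\}$ and $\{vd\}$: no bijection is possible and restriction is not injective.

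There is a second, independent gap: cliques of size two are not atomic under matching cuts, so a witness that happens to pick the ``wrong'' vertex of such a clique fails to detect a violation. With $Q_1=\{a,b\}$, large $Q_2,Q_3$, and inter-clique edges $ac_1,bc_2,c_1d_1,c_2d_2$ only, your single type-(ii) witness for the triple $(Q_2,Q_1,Q_3)$ may be $(c_1,a,d_1)$; then $A^H=\{a\}\cup(Q_2\cap Z)$, $B^H=\{b\}\cup(Q_3\cap Z)$ is a matching cut of $H$ (only $ab$ and $c_1d_1$ cross), yet its extension gives $c_2\in A$ two crossing neighbors $b$ and $d_2$. The paper closes both gaps by two preprocessing rules that you omit: it first replaces every clique of size two by two singletons, so that Observation~\ref{obs:clique} applies to \emph{every} remaining clique, and it runs the triple-witness marking with the center ranging over \emph{all} cliques, including singletons (it also glues any pair of cliques whose inter-edges do not form a matching, which supersedes your type-(i) witnesses). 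With all cliques atomic, a matching cut is determined by a partition of the clique index set, and the bijection between $\sol(G)$ and $\sol(H)$ follows directly.
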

\begin{proof} We show a bijective enumeration kernel for \probMC. 
Let $G$ be a graph and let $\mathcal{Q}$ be a clique partition of $G$ of size at most $k$. 
We apply a series of reduction rules.

First, we get rid of cliques of size two in $\mathcal{Q}$ to be able to use Observation~\ref{obs:clique}. We exhaustively apply the following rule.
\begin{reduction}\label{rule:size-two}
If $\mathcal{Q}$ contains a clique $Q=\{x,y\}$ of size two, then replace $Q$ by $Q_1=\{x\}$ and $Q_2=\{y\}$.
\end{reduction}
The rule does not affect $G$ and, therefore, it does not influence matching cuts of $G$. To simplify notation, we use $\mathcal{Q}$ for the obtained clique partition of $G$. Note that $|\mathcal{Q}|\leq 2k$. 

By the next rule, we unify cliques that cannot be separated by a matching cut. 
\begin{reduction}\label{rule:glue}
If $\mathcal{Q}$ contains distinct  cliques $Q_1$ and $Q_2$ such that $E(Q_1,Q_2)$ is nonempty and is not a matching, then make each vertex of $Q_1$ adjacent to every vertex of $Q_2$ and replace $Q_1,Q_2$ by the clique $Q=Q_1\cup Q_2$ in $\mathcal{Q}$. 
\end{reduction}
To see that the rule is safe, 
notice that if there are two distinct cliques $Q_1,Q_2\in\mathcal{Q}$ such $E(Q_1,Q_2)$ is nonempty and is not a matching, then for every partition $\{A,B\}$ of $V(G)$ such that $E(A,B)$ is a matching cut, either $Q_1,Q_2\subseteq A$ or $Q_1,Q_2\subseteq B$, because by Observation~\ref{obs:clique}, each clique of $\mathcal{Q}$ is either completely in $A$ or  in $B$. This means that if $G'$ is obtained from $G$ by the application of Rule~\ref{rule:glue}, then $M$ is a matching cut of $G'$ if and only if $M$ is a matching cut of $G$. 
Therefore, enumerating matching cuts of $G$ is equivalent to enumerating matching cuts of $G'$.

We apply Rule~\ref{rule:glue} exhaustively. Denote by $\hat{G}$ the obtained graph and by $\hat{\mathcal{Q}}$ the obtained clique partition. Notice that the rule is never applied for a pair of cliques of size one and, therefore, $\hat{\mathcal{Q}}$ does not contain cliques of size two.
 
In the next step, we use the following marking procedure to label some vertices of $\hat{G}$. 

\paragraph*{Marking procedure.}
\begin{itemize}
\item[(i)] For each pair $\{Q_1,Q_2\}$ of distinct cliques of $\hat{\mathcal{Q}}$ such that $E(Q_1,Q_2)\neq\emptyset$, select arbitrarily an edge $uv\in E(\hat{G})$ such that $u\in Q_1$ and $v\in Q_2$, and \emph{mark} $u$ and $v$.
\item[(ii)] For each triple  $Q,Q_1,Q_2$ of distinct cliques of $\hat{\mathcal{Q}}$ such that  some  vertex of $Q$ is adjacent to a vertex in $Q_1$ and to a vertex of $Q_2$, select arbitrarily $u\in Q$, $x\in Q_1$ and $y\in Q_2$ such that $ux,uy\in E(\hat{G})$, and then \emph{mark} $u,x,y$.
\end{itemize}
Notice that a vertex may be marked several times.

Our final rule reduces the size of the graph by deleting some unmarked vertices.
\begin{reduction}\label{rule:reduce}
For every clique $Q\in\hat{\mathcal{Q}}$ of size at least three, consider the set $X$ of unmarked vertices of $Q$ and delete arbitrary $\min\{|X|,|Q|-3\}$ vertices of $X$.
\end{reduction}
Denote by $\tilde{G}$ the obtained graph and denote by $\tilde{\mathcal{Q}}$ the corresponding clique partition of $\tilde{G}$ such that every clique of $\tilde{\mathcal{Q}}$ is obtained by the deletion of unmarked vertices from a clique of $\hat{\mathcal{Q}}$. Our kernelization algorithm return $\tilde{G}$ together with $\tilde{\mathcal{Q}}$ (recall that by our convention each instance should be supplied with a clique partition of the input graph). 

To upper bound the size of the obtained kernel, we show the following claim.
\begin{claim}\label{cl:size}
The graph $\tilde{G}$ has at most $4k(3k^2-2k+1)$ vertices.
\end{claim}
\begin{proof}[Proof of Claim~\ref{cl:size}]
Let $r=|\hat{\mathcal{Q}}|\leq 2k$. In Step (i) of the marking procedure, we consider $\binom{r}{2}$ pairs of cliques of $\hat{\mathcal{Q}}$ and mark at most $2\binom{r}{2}\leq 2k(2k-1)$ vertices in total. In Step (ii), we consider $r$ cliques $Q$, and for each $Q$, we consider $\binom{r-1}{2}$ pairs $\{Q_1,Q_2\}$. This implies that we mark at most 
$3r\binom{r-1}{2}\leq 3k(2k-1)(2k-2)$ vertices in this step. Hence, the total number of marked vertices is at most 
$3k(2k-1)(2k-2)+2k(2k-1)=2k(2k-1)(3k-2)$. Since each clique of $\tilde{\mathcal{Q}}$ contains at most three unmarked vertices by Rule~\ref{rule:reduce}, 
$|V(\tilde{G})|\leq 2k(2k-1)(3k-2)+3r\leq 2k(2k-1)(3k-2)+6k=4k(3k^2-2k+1)$.
\end{proof}

This completes the description of the kernelization algorithm. Now we show a bijection between matching cuts of $G$ and $\tilde{G}$, and construct our solution-lifting algorithm. Note that we already established that $M$ is a matching cut of $\hat{G}$ if and only if $M$ is a matching cut of $G$. 
Therefore, it is sufficient to construct a bijective mapping of the matching cuts of $\tilde{G}$ to the matching cuts of $\hat{G}$.

Let $\hat{\mathcal{Q}}=\{\hat{Q}_1,\ldots,\hat{Q}_r\}$ and $\tilde{\mathcal{Q}}=\{\tilde{Q}_1,\ldots,\tilde{Q}_r\}$, where $\tilde{Q}_i\subseteq \hat{Q}_i$ for $i\in\{1,\ldots,r\}$. 
Notice that $\hat{\mathcal{Q}}$ and  $\tilde{\mathcal{Q}}$ have no cliques of size two. Hence, we can use Observation~\ref{obs:clique}. We show the following claim.
\begin{claim}\label{cl:mapping}
For every partition $\{I,J\}$ of $\{1,\ldots,r\}$, $\hat{M}=E(\cup_{i\in I}\hat{Q}_i,\cup_{j\in J}\hat{Q}_j)$ is a matching cut of $\hat{G}$ if and only if $\tilde{M}=E(\cup_{i\in I}\tilde{Q}_i,\cup_{j\in J}\tilde{Q}_j)$ is a matching cut of $\tilde{G}$. Moreover, $\hat{M}=\emptyset$ if and only if $\tilde{M}=\emptyset$. 
\end{claim} 
\begin{proof}[Proof of Claim~\ref{cl:mapping}]
If $\hat{M}=E(\cup_{i\in I}\hat{Q}_i,\cup_{j\in J}\hat{Q}_j)$ is a matching cut of $\hat{G}$, then $\tilde{M}$ is a matching cut of $\tilde{G}$, because $\tilde{Q}_i\subseteq \hat{Q}_i$ for $i\in\{1,\ldots,r\}$.

For the opposite direction, assume that $\tilde{M}=E(\cup_{i\in I}\tilde{Q}_i,\cup_{j\in J}\tilde{Q}_j)$ is a matching cut of $\tilde{G}$. For the sake of contradiction, suppose that 
$\hat{M}=E(\cup_{i\in I}\hat{Q}_i,\cup_{j\in J}\hat{Q}_j)$ is not a matching cut of $\hat{G}$. 
This means that 
there is a vertex that is incident to at least two edges of $\hat{M}$. By symmetry, we can assume without loss of generality that there is $h\in I$ and $u\in \hat{Q}_h$ such that $u$ is adjacent to distinct vertices $x$ and $y$, where $x\in \hat{Q}_i$ and $y\in\hat{Q}_j$ for some $i,j\in J$. 

Suppose that $i=j$, that is, $x$ and $y$ are in the same clique $\hat{Q}_i$. Then, however, $E(\hat{Q}_h,\hat{Q}_i)$ is not a matching and we would be able to apply Rule~\ref{rule:glue}. Since Rule~\ref{rule:glue} was applied exhaustively to obtain $\hat{G}$ and $\hat{\mathcal{Q}}$, this cannot happen. We conclude that $i\neq j$. 

By Step (ii) of the marking procedure, there is $u'\in \hat{Q}_h$, $x\in \hat{Q}_i$ and $y'\in \hat{Q}_j$ such that $u'x',u'y'\in E(\hat{G})$ and the vertices $u',x',y'$ are marked. This means that $u'\in\tilde{Q}_h$, $x'\in\tilde{Q}_i$ and $y'\in\tilde{Q}_j$. Then $u'x',u'y'\in\tilde{M}$ and $\tilde{M}$ is not a matching; a contradiction. The obtained contradiction concludes the proof.

Finally, we show that $\hat{M}=\emptyset$ if and only if $\tilde{M}=\emptyset$. Clearly, if $\hat{M}=\emptyset$, then  $\tilde{M}=\emptyset$. Suppose that  $\hat{M}\neq\emptyset$. 
Then 
there are $i\in I$ and $j\in J$ such that $uv\in\hat{M}$ for some $u\in \hat{Q}_i$  and $v\in \hat{Q}_j$. By Step (i) of the marking procedure, there are $u'\in \hat{Q_i}$ and $v'\in \hat{Q}_j$ such that $u'v'\in E(\hat{G})$ and $u',v'$ are marked. Then $u'v'\in E(\tilde{G})$ and $u'v'\in\tilde{M}$ by the definition of $\tilde{M}$. Hence, $\tilde{M}\neq\emptyset$. 
\end{proof}
Using Claim~\ref{cl:mapping}, we are able to describe our solution-lifting algorithm. Let $\tilde{M}$ be a matching cut of $\tilde{G}$. Let $\{A,B\}$ be a partition of $V(\tilde{G})$ such that $\tilde{M}=E(A,B)$. By Observation~\ref{obs:clique}, there is a partition $\{I,J\}$ of $\{1,\ldots,r\}$ such that $A=\cup_{i\in I}\tilde{Q}_i$ and $B=\cup_{j\in J}\tilde{Q}_j$. The solution-lifting algorithm outputs $\hat{M}=E(\cup_{i\in I}\hat{Q}_i,\cup_{j\in J}\hat{Q}_j)$. 

To see correctness, note that $\hat{M}=E(\cup_{i\in I}\hat{Q}_i,\cup_{j\in J}\hat{Q}_j)$ is a matching cut of $\hat{G}$ by Claim~\ref{cl:mapping}. Moreover, for distinct matching cuts $\tilde{M}_1$ and $\tilde{M}_2$ of $\tilde{G}$, the constructed matching cuts $\hat{M}_1$ and $\hat{M}_2$, respectively, of $\hat{G}$ are distinct, that is, the matching cuts of $\tilde{G}$  are mapped to the matching cuts of $\hat{G}$ injectively. Finally, to show that the mapping is bijective, consider a matching cut $\hat{M}$ of~$\hat{G}$. Let $\{A,B\}$ be a partition of $V(\hat{G})$ with $\hat{M}=E(A,B)$. By Observation~\ref{obs:clique}, there is a partition $\{I,J\}$ of $\{1,\ldots,r\}$ such that $A=\cup_{i\in I}\hat{Q}_i$ and $B=\cup_{j\in J}\hat{Q}_j$. Then 
$\tilde{M}=E(\cup_{i\in I}\tilde{Q}_i,\cup_{j\in J}\tilde{Q}_j)$ is a matching cut of $\tilde{G}$ by Claim~\ref{cl:mapping}, and it remains to observe that 
$\hat{M}$ is constructed by the solution-lifting algorithm from $\tilde{M}$.

It is straightforward to see that both kernelization and solution-lifting algorithms are polynomial.
This concludes the construction of our bijective enumeration kernel for \probMC. 

For \probMinMC and \probMaxMC, the kernels are exactly the same. To see it, note that our bijective mapping of the matching cuts of $\tilde{G}$ to the matching cuts of $\hat{G}$ respects the inclusion relation. Namely, if $\tilde{M}_1$ and $\tilde{M}_2$ are matching cuts of $\tilde{G}$ and $\tilde{M}_1\subseteq\tilde{M}_2$, then $\tilde{M}_1$ and $\tilde{M}_2$ are mapped to the matching cuts $\hat{M}_1$ and $\hat{M}_2$, respectively, of $\hat{G}$ such that $\hat{M}_1\subseteq \hat{M}_2$. This implies that the solution-lifting algorithm outputs a minimal (maximal, respectively) matching cut of $\hat{G}$ for every minimal (maximal, respectively) matching cut of $\tilde{G}$. Moreover, every minimal (maximal, respectively) matching cuts of $\hat{G}$ can be obtained form a minimal (maximal, respectively) matching cut of $\tilde{G}$.
\end{proof}

\section{Conclusion}\label{sec:concl}  
We initiated the systematic study of enumeration kernelization for several variants of
the matching cut problem. We obtained \fp{} (\pd{}) enumeration kernels for the
parameterizations by the vertex cover number, twin-cover number, neighborhood diversity,
modular width, and feedback edge number. Since the solution-lifting algorithms are simple
branching algorithms, these kernels give a condensed view of the solution sets which may be interesting in applications where one may want to inspect all solutions manually. Restricting to polynomial-time and polynomial-delay solution-lifting algorithms seems helpful in the sense that they will usually be easier to understand.

There are many topics for further research in enumeration kernelization. For \textsc{Matching Cut}, it would be interesting to investigate other structural parameters, like the feedback vertex number (see~\cite{CyganFKLMPPS15} for the definition). More generally, the area of enumeration kernelization seems still somewhat unexplored. It would be interesting to see applications of the various kernel types to other enumeration problems. For this, it seems to be important to develop general tools for enumeration kernelizations. For example, is it possible to establish a framework for enumeration kernelization lower bounds similar to the techniques used for classical kernels~\cite{BodlaenderDFH09,BodlaenderJK14} (see also~\cite{CyganFKLMPPS15,FominLSZ19})?

Concerning the counting and enumeration of matching cuts, we also proved the upper bound $F(n+1)-1$ for the maximum number of matching cuts of an $n$-vertex graph and showed that the bound is tight. What can be said about the maximum number of minimal and maximal matching cuts? 
It is not clear whether our lower bounds given in Propositions~\ref{prop:lower-max} and \ref{prop:lower-min} are tight. Finally, it seems promising to study  enumeration kernels for \textsc{$d$-Cut}~\cite{GomesS19}, a generalization of \textsc{Matching Cut} that has recently received some attention.


\end{document}